\documentclass[preprint,12pt,nopreprintline]{elsarticle}

\usepackage{amsmath}
\usepackage{amssymb}
\usepackage{amsthm}
\newtheorem{theorem}{Theorem}[section]
\newtheorem{corollary}{Corollary}[section]
\newtheorem{definition}{Definition}[section]
\usepackage{booktabs} 
\usepackage{siunitx}
\usepackage{graphicx}
\usepackage{subcaption} 
\usepackage{array}
\usepackage{multirow} 
\usepackage{lineno}
\usepackage{hyperref}

\begin{document}

\begin{frontmatter}

\title{Spatiotemporal decoupled physics-informed Stone-Weierstrass neural operator for long-time prediction of time-dependent parametric PDEs}

\cortext[cor1]{Corresponding author}
\cortext[cor2]{Co-corresponding author}

\author{Shan Ding}

\author{Yongfu Tian\corref{cor2}}
\ead{tianyf@mail.tsinghua.edu.cn}

\author{Lang Qin}
\author{Hongxiang Ma}
\author{Guofeng Su}

\author{Rui Yang\corref{cor1}}
\ead{ryang@tsinghua.edu.cn}

\affiliation{organization={School of Safety Science},
            addressline={Tsinghua University}, 
            postcode={100084}, 
            state={Beijing},
            country={China}}

\begin{abstract}
Driven by rapid advances in artificial intelligence and modern GPU computing capabilities, deep learning methods based on the optimization paradigm have provided new pathways to solve spatiotemporal physical problems, whose mathematical core lies in solving partial differential equations (PDEs). As an emerging class of function-space learning methods, neural operators (NOs) have exhibited great potential in efficient PDE solving. However, existing mainstream neural operator frameworks suffer from critical bottlenecks when modeling time-dependent PDEs over long time horizons, including accuracy degradation, insufficient stability, high training costs, and excessive memory consumption, which severely limit their practical deployment. To address these challenges in long-time prediction with neural operators, we propose a novel spatiotemporally decoupled physics-informed neural operator architecture, termed the physics-informed Stone-Weierstrass neural operator (PI-SWNO). The design is theoretically grounded in the decoupling paradigm combining time-invariant spatial basis functions with time-varying evolution coefficients, as well as the Stone-Weierstrass approximation theorem. By encoding spatial and temporal information via two separate subnetworks, the framework structurally mitigates the accumulation of errors over extended time intervals. Furthermore, we introduce a time-marching batch-wise sampling strategy to resolve the memory bottleneck of full-range modeling over extended time spans, ensuring continuity and convergence of full-time-domain solutions.
\end{abstract}



\begin{keyword}
Operator learning \sep Physics-informed neural operators \sep Spatiotemporal decoupling \sep Time-dependent PDEs \sep Long-time prediction
\end{keyword}
\end{frontmatter}


\section{Introduction}
\label{sec1}

Spatiotemporal physical problems are fundamental topics in science and engineering, most commonly formulated mathematically as partial differential equations (PDEs). Many real-world spatiotemporal physical problems lack analytical solutions. Traditional numerical solvers, including the finite difference method (FDM)~\cite{VARGAS2022}, the finite element method (FEM)~\cite{KERGRENE2016}, and the finite volume method (FVM)~\cite{BUCHMULLER2016}, are capable of generating high-precision approximate solutions. However, they only support single-configuration simulations per run and suffer from high computational costs, making them unable to meet industrial requirements for rapid multi-configuration simulation and real-time inference.

The classical numerical iteration paradigms rely on CPU-centric computing and have grown prevalent with advances in CPU technology. Meanwhile, driven by advances in artificial intelligence (AI) and modern GPU hardware, deep learning under the optimization paradigm has emerged as an alternative approach for solving spatiotemporal physical problems~\cite{RAISSI2019, li2023transformer, MUCKE2021}. Neural operators (NOs) learn end-to-end mappings between infinite-dimensional function spaces. Once trained, they deliver millisecond-scale inference across diverse configurations, positioning them as a valuable and promising direction in the field of AI for Science. As GPU-centric computing becomes the hardware backbone, neural operators are set to become a core enabling technology under this computational paradigm.

Current neural operator research falls into two major paradigms: purely data-driven and physics-informed. Data-driven neural operators, represented by DeepONet~\cite{23Lu2021} and the Fourier neural operator (FNO)~\cite{li2021fourier}, learn mappings from input functions of PDEs---including initial conditions, boundary conditions, source terms, and physical parameters---to spatiotemporal solution functions using large datasets of paired input-output samples. They yield fast and high-precision PDE solutions when data are abundant. Yet in engineering practice, high-fidelity spatiotemporal labeled data often take days or weeks to generate via traditional numerical methods, sharply restricting the engineering applicability of purely data-driven approaches.

Physics-informed neural operators are built on the theoretical framework of physics-informed neural networks (PINNs)~\cite{RAISSI2019}. They embed the governing equations, initial conditions, and boundary conditions as soft constraints into the loss function, enabling end-to-end learning of PDE solutions without extensive labeled data. This makes them adaptable to data-scarce engineering scenarios and establishes them as a core research direction in neural operator learning. Representative works include physics-informed neural operator (PINO)~\cite{li2023physics} and physics-informed DeepONet (PI-DeepONet)~\cite{2021wangpideeponet}.

Nevertheless, both data-driven and physics-informed neural operators suffer from significant accuracy degradation in long-time prediction tasks. Existing neural operator approaches for long-time PDE solutions can be divided into two main paradigms: auto-regressive (AR) and full-range (FR)~\cite{7Mandl2025}. The AR paradigm centers on training a single-step neural operator that maps the solution at time \(t\) to the solution at time \(t+\Delta t\). Full-time-domain evolution fields are obtained by repeatedly invoking this model via a rolling time window. Its error pattern is characterized by the propagation and accumulation of single-step prediction errors across iterative time steps. To address the inherent bottleneck of iterative error accumulation caused by causal dependency, the core challenge of this paradigm for solving long-time PDEs is to improve the accuracy of the single-step neural operator. Some studies have adopted a multi-step neural operator predictor, yet such methods remain fundamentally auto-regressive.

In contrast, the FR modeling paradigm directly approximates the end-to-end mapping from input conditions to the complete spatiotemporal physical solution through a neural operator model with a fixed parameter space. The solution error is mainly attributed to both the function-space expressiveness of the neural operator and fitting errors induced by non-convex optimization. However, as proven by the non-decreasing theorem of fitting error for fixed-parameter neural operators in Section~\ref{wucha}, the global minimum fitting error of a fixed-parameter neural operator for the solution of time-dependent PDEs is inherently non-decreasing as the time interval expands. Therefore, the core challenge of the FR paradigm for solving long-time PDEs is to effectively control the growth rate of the model’s approximation error with expanding time intervals. There are generally two strategies: one is to improve neural network architectures for stronger function-space expressiveness under a fixed parameter budget; the other is to mitigate fitting errors from non-convex optimization via advanced optimization algorithms, physics-informed regularization, and adaptive sampling strategies.

This paper focuses on the fundamental problem of the FR modeling paradigm: controlling the growth rate of model approximation error with the expansion of time interval. As one of the most representative physics-constrained architectures in operator learning, PI-DeepONet strictly follows the universal approximation theorem on Banach spaces. It offers strong multi-configuration generalization, complex input adaptability, and solvability in unlabeled data settings, making it the most widely adopted physics-informed neural operator framework in engineering applications. We therefore employ PI-DeepONet as our base architecture to explore methods for enhancing the long-time solution capability of fixed-parameter neural operators. The classic PI-DeepONet structure and its various mainstream variants generally adopt spatiotemporally coupled encoding in their trunk networks, where spatial and temporal coordinates are concatenated and fed into the same network to generate spatiotemporally coupled basis functions. This results in a full coupling between the spatial distribution characteristics and the temporal evolution characteristics of the solution.

In contrast, unsteady PDEs widely used in engineering (including heat conduction equations, convection-diffusion equations, wave equations, etc.) exhibit an inherent spatiotemporally decoupled dynamical property: the spatial modes of the system are time-invariant, while the time-domain evolution process is only reflected in the dynamic changes of the weight coefficients corresponding to each order of spatial modes over time. Typical examples include the natural modes of structural vibration, the spatial characteristic modes of heat conduction problems, and the orthogonal basis of flow fields. Based on this physical property of unsteady systems, classical solution methods --- from method of separation of variables level to the Galerkin method~\cite{Arnold2001} and the mode superposition method~\cite{VILLADSEN1995} at the numerical discretization level --- all follow the core decoupling paradigm of time-invariant spatial basis functions combined with time-varying evolution coefficients. This paradigm guarantees the accuracy of the solution while offering clear physical interpretability.

Motivated by this spatiotemporally decoupled physical property of unsteady systems and the universal approximation guarantees of the Stone-Weierstrass theorem, we propose the physics-informed Stone-Weierstrass neural operator (PI-SWNO). The main contributions of this work are as follows.

\begin{itemize}
	\item We establish the non-decreasing theorem of fitting error for fixed-parameter neural operators, rigorously proving that the global minimum fitting error of a fixed-parameter neural operator is non-decreasing as the time interval expands. This provides a key theoretical foundation and justification for developing neural operator methods toward long-time spatiotemporal modeling.
	\item Leveraging the intrinsic physical property of most unsteady PDEs --- time-invariant spatial modes combined with time-varying evolution coefficients --- and the universal approximation guarantee of the Stone-Weierstrass theorem, we propose a fully spatiotemporally decoupled PI-SWNO architecture. Built on the PI-DeepONet framework, we decompose its coupled spatiotemporal trunk network into independent spatial and temporal subnetworks. This enables independent encoding of input function features, spatial distribution patterns, and temporal evolution laws. By structurally breaking the rapid error propagation mechanism induced by spatiotemporally coupled encoding, this design effectively suppresses error accumulation in long-time solutions while also substantially boosting fitting accuracy and generalization for periodic and multi-scale unsteady PDEs.
	\item To resolve the engineering bottleneck of excessive GPU memory usage and impractical end-to-end training caused by dense full-time-domain sampling in long-time full-range modeling, we propose a time-marching batch-wise sampling training strategy. This strategy establishes a two-stage iterative paradigm of the sequential local optimization and global consistency calibration. It drastically reduces per-step training memory overhead while ensuring the continuity, consistency, and convergence of full-time-domain solutions, thus providing a practical engineering solution for stable end-to-end training of long-time unsteady PDEs.
\end{itemize}

The remainder of this paper is organized as follows. Section~\ref{sec2} reviews recent advances in neural operators and their variants for solving time-dependent PDEs. Section~\ref{sec3} introduces the proposed physics-informed Stone-Weierstrass neural operator and the time-marching batch-wise sampling strategy, along with the theoretical foundations. Section~\ref{sec4} verifies the effectiveness of the proposed method through seven benchmark problems and compares its performance with the conventional PI-DeepONet. Section~\ref{sec5} concludes the full paper and outlines the main conclusions.

\section{Related Work}
\label{sec2}

Data-driven neural operators, represented by DeepONet and FNO, form the fundamental architectures in operator learning. Existing studies have carried out a series of works around improving their expressive ability and scenario adaptability. He et al.~\cite{11He2024} designed multiple variants of DeepONet branch networks based on Fourier neural networks (FNN), gated recurrent units (GRU), and long short-term memory (LSTM) networks, which improved the prediction accuracy of transient heat conduction problems. Jin et al.~\cite{13Jin2022} proposed multiple-input operator networks (MIONet), extending the learning ability of operator models for multi-input operators on the product of Banach spaces and completing the theoretical proof of the corresponding universal approximation theorem. Diab et al.~\cite{4Diab2025} developed the temporal neural operator (TNO) for climate modeling and weather forecasting, using a U-Net architecture to extract data features for temperature forecasts over Europe and the globe in winter and summer, and learning mappings between temperature fields over fixed time windows. Li et al.~\cite{li2020graphkernel} constructed the graph neural operator (GNO) based on graph kernel networks, transforming integral operators (e.g., Nyström approximation) into message passing on graphs. This preserves continuous function modeling capability while achieving strong cross-geometry generalization. Raoni\'c et al.~\cite{Bogdan2023convolutional} proposed continuous-discrete equivalence as the fundamental design principle for operator learning and designed the convolutional neural operator (CNO), validating its robustness in zero-shot generalization tasks (e.g., unseen resolution, domain, discretization) on PDE benchmark datasets.

Due to the strong dependence of purely data-driven neural operators on labeled data, physics-informed neural operators have emerged as the mainstream research direction. A series of studies have incorporated physical constraints into the DeepONet architecture. Mandl et al.~\cite{7Mandl2025} modified DeepONet into a forced dual-output architecture and incorporated physical constraints into the loss function, allowing online evaluation of prediction quality through residual monitoring. Karumuri et al.~\cite{16KARUMURI2026} proposed the physics-informed latent neural operator, which consists of two coupled deep neural networks trained in an end-to-end manner: an implicit deep neural network for learning the low-dimensional representation of the solution and a reconstruction deep neural network for mapping this implicit representation back to the physical space. Combined with physical constraints, it significantly improves model generalization in data-scarce scenarios. Wang et al.~\cite{21wang2025} presented a pre-training framework based on latent neural operators, which is pre-trained on a mixed dataset covering various time-dependent PDEs. They employed physics-cross-attention (PhCA) as a universal transformation module and extracted representations of different physical systems in a shared latent space. The pre-trained framework is fine-tuned on a single PDE dataset to achieve general solution of diverse time-dependent PDEs. Koric et al.~\cite{41KORIC2023} conducted a comparative study between the classic data-driven DeepONet and PI-DeepONet for the two-dimensional steady-state heat conduction, verifying that the introduction of physical constraints can improve the solution accuracy in small-sample scenarios. Cho et al.~\cite{34cho2025} proposed physics-informed deep inverse operator networks (PI-DIONs), which realized the reconstruction of source terms and physical fields based on a small amount of measurement data, and proved that physics-informed neural operators can still maintain strong generalization ability in the full function space with very few training samples. Ding et al.~\cite{ding2026} proposed a hierarchical network structure to improve the DeepONet framework, which effectively enhanced the representation ability and convergence speed of the model.

Aiming at the accuracy degradation problem of neural operators in long-time PDE prediction tasks, the AR paradigm is the most widely used approach in existing research. This paradigm trains a neural operator model for single-time-step solution mapping and obtains evolved solutions from the full-time domain via rolling iteration. Most studies mainly focus on optimizing the suppression of iterative error accumulation: Lei et al.~\cite{6Lei2025} developed a one-step model for nonlinear wave equations based on FNO, using windowed inputs and conservation law regularization to reduce exponentially growing cumulative error to linear growth. Cho et al.~\cite{19CHO2026} proposed GraphDeepONet, an auto-regressive model based on graph neural networks, using GNN as the branch network of DeepONet to encode temporal information, achieving robust operator prediction in irregular grids. Other studies have integrated neural operators with traditional numerical analysis methods to break through the iterative bottleneck of AR frameworks: Nayak et al.~\cite{9Nayak2025} used DeepONet to learn the time derivative of the solution function \(\partial u / \partial t\) instead of the solution \(u\), and combined the Runge-Kutta Method to complete numerical time integration, achieving more stable time marching. In addition, state space models (SSMs)~\cite{15Dao2024} have also been introduced into operator learning for dynamic systems. Hu et al.~\cite{5Hu2025} established the theoretical connection between SSMs and various attention mechanisms, comparing the performance of Mamba~\cite{14Gu2024} with RNN~\cite{chung2014}, Transformer~\cite{zhou2021transformer} and other frameworks, and conducting numerical experiments on various dynamic systems and equation types for verification. Buitrago et al.~\cite{10Buitrago2025} proposed the memory neural operator (MemNO), which combines recent SSMs and FNO to model memory. MemNO significantly outperforms baseline models without memory when PDEs are provided at low resolution or contain observation noise during training and testing.

The FR paradigm uses a fixed-parameter neural operator to directly fit the end-to-end mapping from input conditions to complete spatiotemporal solutions, fundamentally avoiding iterative error accumulation in the AR paradigm. Existing optimization efforts fall into two categories: first, enhancing the function space expressive ability of the model via innovative network architectures. Micha{\l}owska et al.~\cite{8Michałowska2024} performed systematic combinatorial experiments on different types of neural operators, recurrent architectures, and training modes, comprehensively verifying performance across configurations. Hu et al.~\cite{12Hu2025} combined the spatial neural operator DeepONet with the temporal sequence model Mamba to propose the DeepOMamba model, validating its performance, training speed, and scalability on six benchmark PDE problems. Second, reducing non-convex fitting errors via improved optimization strategies, including adaptive sampling, physics-based regularization, data decomposition, and dimensionality reduction. Wang et al.~\cite{18WANG2025} combined the PI-DeepONet with the FEM through domain decomposition. The Schwarz alternating method is adopted to couple the subdomains of the two methods, with the pre-trained PI-DeepONet solving complex nonlinear regions and the traditional FEM handling the remaining domains. Chen et al.~\cite{20CHEN2025} proposed the tensor decomposition based neural operator TDMD-DeepONet and the dynamic mode decomposition based neural operator ETDMD-DeepONet, which offer stronger stability and accuracy than the original DeepONet. Chen et al.~\cite{17CHEN2025} proposed a data decomposition method enforcing local dependency. This method strictly limits the information usage range of the neural operator by decomposing the input data into small windows based on the local dependency region, thus ensuring local dependency while maintaining the expressive power of the model. Relevant studies have verified the stability potential of the FR modeling paradigm in long-time prediction, but existing work still has not solved the core problem that the fitting error of fixed-parameter neural operators increases non-decreasingly with the time interval.

In summary, existing research has thoroughly explored multi-configuration generalization, unlabeled data solvability, and long-time prediction stability of neural operators. However, two critical gaps remain: first, no universal method exists within the FR paradigm to effectively control the growth rate of approximation error for fixed-parameter neural operators with expanding time intervals. Second, mainstream physics-informed neural operator architectures rely on spatiotemporally coupled encoding, concatenating spatial and temporal coordinates as input to a single network, ignoring the inherent decoupling physical property of ``time-invariant spatial basis functions + time-varying evolution coefficients'' of unsteady PDE systems, which further aggravates the error divergence in long-time prediction. To address these issues, we propose the spatiotemporally decoupled PI-SWNO, designed to fundamentally improve the long-time solution capability of fixed-parameter neural operators. 

\section{Methodology}
\label{sec3}

\subsection{Problem definition}

Consider an unsteady system defined on a bounded closed domain $\Omega \subseteq \mathbb{R}^n$:
\begin{equation} \label{s0}
\begin{cases}
    \frac{\partial u}{\partial t}=\mathcal{N}(x,t,u,\nabla u, \nabla ^2 u, \cdots,f),  \quad  &(x,t)\in \Omega \times (0, T] \\
	\mathcal{B}(x,t,u, \nabla u)=g(x,t),  \quad  &(x,t) \in \partial \Omega \times (0,T] \\
	\mathcal{I}(x,t,u)=h(x,t),  \quad  &(x,t) \in \Omega \times \{0\}
\end{cases}
\end{equation}
where $u(x,t)$ is the solution function; $\mathcal{N}$ is the linear/nonlinear partial differential operator of the dynamic system; \(f( x,t)\) is the input function, which can be observations, source terms, external forces, equation coefficients, etc.; $\mathcal{B}$ and $\mathcal{I}$ denote the boundary condition operator and initial condition operator, respectively; \(g(x,t)\) is the input boundary condition function, and \(h(x,t)\) is the input initial condition function; $x\in \mathbb{R}^n$ is the spatial coordinate, and $t$ is the time coordinate; $\partial \Omega$ denotes the boundary of $\Omega$, and \(T\) represents the final time of the time domain.

The mapping from the input functions \(f,g,h\) to the corresponding solution function \(u\) is \(\mathcal{G}:(f,g,h) \mapsto u\), the well-posedness of which is guaranteed by the solution uniqueness of the solution of the unsteady system~\eqref{s0}. It is worth noting that in many cases, we only focus on the mapping from a single input function to \(u\), such as \(\mathcal{G}:f \mapsto u\). The neural operator method approximates the mapping \(\mathcal{G}:(f,g,h) \mapsto u\) with a neural network \(G_\Theta\), where the \(\Theta\) is the neural network parameter space, and the learning objective is 
\[
\min _{\Theta}\left\|\mathcal{G}_{\Theta}-\mathcal{G}\right\|^2.
\] 
The convergence of this approximation is weakly guaranteed by the universal approximation theorem for operators~\cite{1995Universal,23Lu2021}.

\subsection{Physics-informed DeepONet}

DeepONet is one of the most widely applied neural operator architectures in operator learning. The classic DeepONet architecture~\cite{23Lu2021} consists of separate branch networks and a trunk network (see Fig.~\ref{fig:DeepONet_framework}), and its mathematical expression is:
\begin{equation}
\mathcal{G}_\theta(f)(x, t) = \sum_{i=1}^n b_i(f) \cdot \phi_i(x, t),
\end{equation}
where \(\mathcal{G}_\theta\) is the DeepONet neural operator parameterized by \(\theta\), aiming to approximate the input-output solution mapping of time-dependent PDEs; \(f\) is the input function of the operator, usually the initial conditions, boundary conditions, source terms, etc. of the PDE; \(n\) is the number of modes of the basis functions; \(b_i(f)\) is the coefficient of the $i$-th mode, generated by the branch network encoding the input function \(f\); \(\phi_i(x, t)\) is the $i$-th spatiotemporally coupled basis function; \(LT\) denotes the linear transformation layer.

\begin{figure}[htbp]
	\centering
	\includegraphics[width=0.95\linewidth]{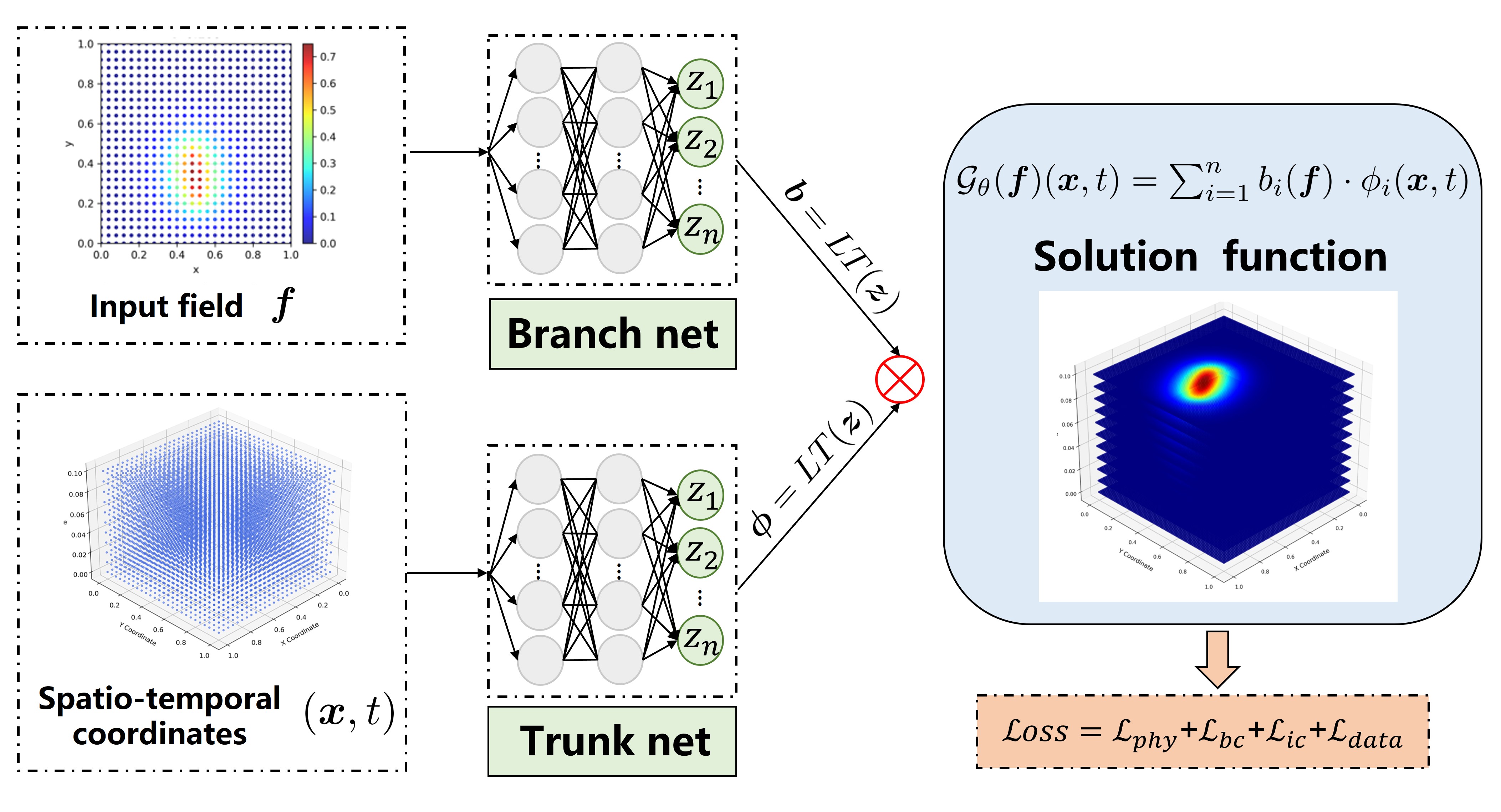}
	\caption{The schematic diagram of PI-DeepONet framework details the structural modules and forward calculation process, including the dual-input encoding via parallel branch and trunk networks, linear transformation of subnetwork outputs, inner product-based solution function construction, and multi-component loss calculation for physics-constrained training.}
	\label{fig:DeepONet_framework}
\end{figure}

PI-DeepONet extends the branch-trunk architecture of classical DeepONet by adopting the physics-constrained training paradigm of PINNs. It incorporates residuals of the target PDE and initial/boundary conditions into the loss function, enabling end-to-end learning of PDE solution mappings without extensive labeled data and suiting data-scarce engineering solving scenarios.

In this framework, time \(t\) and space \(x\) are jointly encoded by the trunk network with shared parameters, fully coupling the spatial distribution and temporal evolution of the solution. This is inconsistent with the classical numerical discretization paradigm of ``time-invariant spatial basis + time-varying evolution coefficients'' in the Galerkin method~\cite{Arnold2001}.

\subsection{Non-decreasing theorem of fitting error with expanding time intervals}
\label{wucha}

A widespread empirical observation is that fixed-parameter neural operators suffer from significant error accumulation in long-time spatiotemporal evolution problems. To formalize this phenomenon, this section establishes the non-decreasing fitting error theorem for fixed-parameter neural operators, providing foundational theoretical support for neural operator modeling of long-time-sequence evolution problems.

We first introduce basic definitions and notation. Let the spatial domain \(\Omega \subset \mathbb{R}^d\) be a Lebesgue measurable set, and denote \(L^2(\Omega)\) as the square-integrable function space on \(\Omega\), whose norm is defined as:
    \begin{equation}
         \|f\|_{L^2(\Omega)}^2 = \int_{\Omega} |f(x)|^2 \mathrm{d}x, \quad \forall f \in L^2(\Omega)
    \end{equation}
Denote the time domain as \(\mathcal{T} = [0, +\infty)\), and \(L^2_{\text{loc}}(\Omega  \times \mathcal{T})\) as the locally square-integrable spatiotemporal function space, i.e., for any finite time \(T>0\), we have \(u \in L^2(\Omega  \times [0,T] )\). Let the parameter space \(\Theta \subset \mathbb{R}^P\), where \(P\) is a fixed positive integer (corresponding to a neural network with fixed structure and parameters). For any parameter \(\theta \in \Theta\), the neural operator \(\mathcal{G}_\theta: \mathcal{T} \to L^2(\Omega)\) is a mapping from the time domain to the spatial square-integrable function space, i.e., for any time \(\tau \in \mathcal{T}\), \(\mathcal{G}_\theta(\cdot, \tau) \in L^2(\Omega)\). For any time \(t \in \mathcal{T}\), the fitting error functional of parameter \(\theta\) over the time interval \([0,t]\) is defined as:
    \begin{equation}
        J_t(\theta) = \int_{0}^{t} \left\| \mathcal{G}_\theta(\cdot, \tau) - u(\cdot, \tau) \right\|_{L^2(\Omega)}^2 \mathrm{d}\tau,
    \end{equation}
where \(u \in L^2_{\text{loc}}(\Omega \times  \mathcal{T})\) is the true state solution of the spatiotemporal physical system. The global minimum fitting error at time \(t\) is defined as the infimum of the error functional over the parameter space:
\begin{equation}
    L_t = \inf_{\theta \in \Theta} J_t(\theta)
\end{equation}

\begin{theorem}[Non-decreasing Theorem of Fitting Error for Fixed-Parameter Neural Operators]
Let the following two premises hold:
\begin{enumerate}
    \item The true solution of the spatiotemporal system satisfies \(u \in L^2_{\text{loc}}(\Omega \times  \mathcal{T} )\);
    \item For any fixed \(\theta \in \Theta\), the mapping \(\tau \mapsto \left\| \mathcal{G}_\theta(\cdot, \tau) - u(\cdot, \tau) \right\|_{L^2(\Omega)}^2\) is a Lebesgue measurable function, and for any finite \(t>0\), the integral \(J_t(\theta)\) is finite.
\end{enumerate}

Then the global minimum fitting error \(L_t\) is a non-decreasing function with respect to time \(t\), i.e., for any times \(t_1, t_2\) satisfying \(0 \leq t_1 < t_2 < +\infty\), it always holds that:

\begin{equation}
    L_{t_1} \leq L_{t_2}
\end{equation}
\end{theorem}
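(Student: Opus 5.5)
The argument is a pointwise monotonicity of the error functional in $t$, followed by passing to the infimum. The key observation is that the parameter space $\Theta$ does not depend on $t$: the same fixed-parameter family is used on every time window. So it suffices to compare $J_{t_1}(\theta)$ and $J_{t_2}(\theta)$ for each fixed $\theta$ and then take infima over a common set.

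\textbf{Step 1: pointwise comparison.} Fix $\theta\in\Theta$ and write
\[
e_\theta(\tau)=\left\| \mathcal{G}_\theta(\cdot,\tau)-u(\cdot,\tau)\right\|_{L^2(\Omega)}^2 .
\]
By premise~2 this is a nonnegative Lebesgue measurable function of $\tau$, and it is integrable on every $[0,t]$. Additivity of the Lebesgue integral over the disjoint union $[0,t_2]=[0,t_1]\cup(t_1,t_2]$ gives
\[
J_{t_2}(\theta)=J_{t_1}(\theta)+\int_{t_1}^{t_2} e_\theta(\tau)\,\mathrm{d}\tau .
\]
The last integral is nonnegative because $e_\theta\ge 0$, so $J_{t_1}(\theta)\le J_{t_2}(\theta)$ for every $\theta$. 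Finiteness from premise~2, together with $u\in L^2_{\text{loc}}$ from premise~1, ensures both sides are real numbers, so no $\infty-\infty$ issue arises. The case $t_1=0$ is trivial, since then $J_0(\theta)=0$.

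\textbf{Step 2: passing to the infimum.} For every $\theta$ we have $L_{t_1}\le J_{t_1}(\theta)\le J_{t_2}(\theta)$, so $L_{t_1}$ is a lower bound of $\{J_{t_2}(\theta):\theta\in\Theta\}$. By the definition of the infimum as the greatest lower bound, $L_{t_1}\le L_{t_2}$. This step does not require the infimum to be attained, nor compactness of $\Theta$.

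\textbf{Main obstacle.} There is little real difficulty; the only care needed is measure-theoretic bookkeeping. One must check that $e_\theta$ is measurable, so that splitting the integral is legitimate, and that $\Theta$ is the same set for all $t$. Both are supplied directly by the premises and the setup, which fixes $P$ and $\Theta$ independently of $t$. I would also remark that the inequality need not be strict: if $u$ is exactly representable on $[0,t_2]$ by some $\theta$, then $L_{t_1}=L_{t_2}=0$. Hence "non-decreasing" is the sharp form of the statement.
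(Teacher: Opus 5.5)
Your proof is correct and follows the paper's own argument essentially step for step. Interval additivity of the Lebesgue integral and nonnegativity of the integrand give $J_{t_1}(\theta)\le J_{t_2}(\theta)$ for every $\theta$; then $L_{t_1}$ is a lower bound of $\{J_{t_2}(\theta):\theta\in\Theta\}$, so $L_{t_1}\le L_{t_2}$ by the definition of the infimum.
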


The proof is given in~\ref{wucha_proof}. The theorem shows that the non-decreasing property of the fitting error of fixed-parameter neural operators is independent of the network structure, optimization algorithm, and parameter space characteristics. Its essence stems from the increase in fitting constraints brought by the expansion of the time interval: a longer time interval means that the neural operator needs to fit the system evolution states at more times simultaneously, and its global minimum fitting error cannot be lower than the minimum error of a shorter time interval. Based on this theorem, the core challenge in long-time spatiotemporal evolution modeling with fixed-parameter neural operators is to control the growth rate of error with expanding time intervals.

Notably, \(L_{t_1} \geq L_{t_2}\) may be observed in experiments occasionally because the optimization algorithm cannot guarantee reaching the theoretical global infimum \(L_{t}\) in non-convex optimization problems.

\subsection{Spatiotemporal decoupled architecture: physics-informed Stone-Weierstrass neural operator}

Section~\ref{wucha} points out that the fitting error of any fixed-parameter neural operator will grow with the time interval. A large number of numerical experiments and related studies have confirmed that neural operators with different architectures exhibit significantly different error growth laws in long-time fitting tasks: some structures can achieve linear or even sublinear growth of errors, while the fitting errors of some structures show exponential growth with the expansion of the time interval (see Fig.~\ref{fig:heat1d-fig3} in Section~\ref{case1}). Therefore, achieving slow error growth through architectural design is key to improving long-time modeling performance of neural operators.

Mainstream neural operator architectures universally concatenate spatial and temporal coordinates as input to a single encoding network, approximating PDE solutions by generating spatiotemporally coupled basis functions \(\phi_i(x,t)\). This design completely binds the spatial distribution characteristics and the temporal evolution characteristics of the solution. 
Furthermore, the spatiotemporally coupled encoding method also makes it difficult for the model to
accurately fit periodic and multi-timescale unsteady PDEs. To overcome these limitations, we integrate the classical mathematical principle of the method of separation of variables for unsteady PDEs with the Stone-Weierstrass Approximation Theorem, and propose the physics-informed Stone-Weierstrass neural operator (PI-SWNO), a spatiotemporally decoupled architecture. This architecture structurally resolves the rapid error propagation caused by spatiotemporal coupling, enabling highly accurate, interpretable, and generalizable approximations of unsteady PDE solution mappings.

\begin{figure}[htbp]
	\centering
	\includegraphics[width=0.95\linewidth]{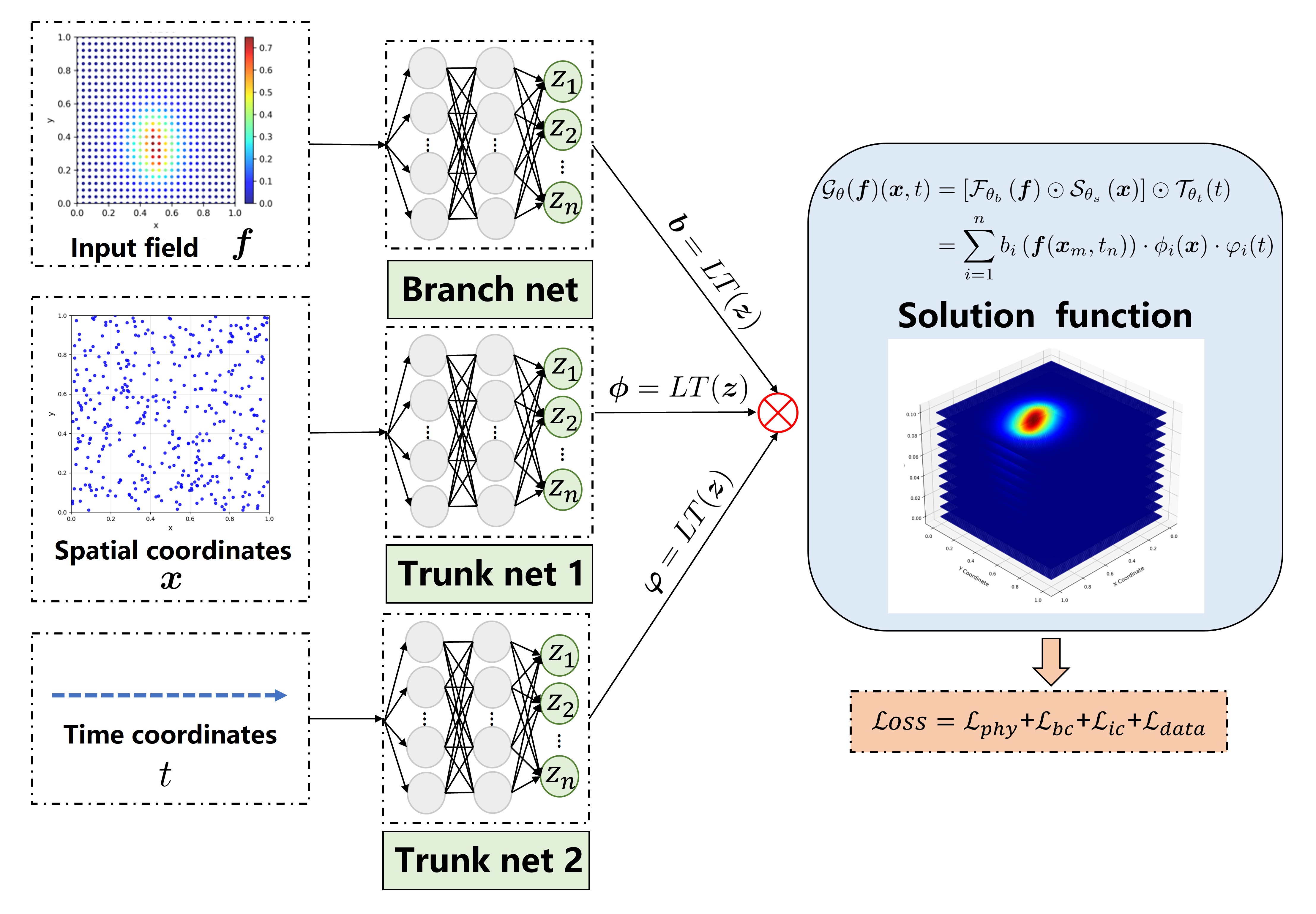}
	\caption{The schematic diagram of PI-SWNO framework details its core structural modules and forward calculation pipeline, including independent encoding of the input function field, spatial coordinates and time coordinates through three parallel subnetworks (branch network, spatial trunk network and temporal trunk network), solution function generation via the fused multiplication of basis coefficients, spatial basis functions and temporal basis functions, and multi-term loss calculation integrating physical equation, boundary condition, initial condition and data fidelity terms for training.}
	\label{fig:PI-SWNO_framework}
\end{figure}

Built on the canonical DeepONet, PI-SWNO fully decouples the original spatiotemporally coupled trunk network into two parameter-independent, functionally decoupled subnetworks: a spatial branch and a temporal branch.
We refer to the original branch network of DeepONet as the input branch.
This enables independent encoding of input function features, spatial distribution patterns, and temporal evolution laws. Its core mathematical formulation is given by:

\begin{equation} \label{sw_net_correct}
\begin{aligned}
\mathcal{G}_\theta(f)(x, t) 
&= \left[ \mathcal{F}_{\theta_b}\left(f\right) \odot \mathcal{S}_{\theta_s}\left(x \right)  \right] \odot \mathcal{T}_{\theta_t}(t) \\
&= \sum_{i=1}^n b_i\left(f(x_m, t_n)\right) \cdot \phi_i(x) \cdot \varphi_i(t)
\end{aligned}
\end{equation}
where $\theta = \{\theta_b, \theta_s, \theta_t\}$ is the set of learnable parameters of the full network, and $\odot$ denotes the Hadamard product (element-wise multiplication) of vectors. The terms are defined as follows:

\begin{itemize}
    \item The input branch network $\mathcal{F}_{\theta_b}$ performs global feature encoding on the input function $f$. Its input is the observed values $\{f(x_m, t_n)\}$ of the input function $f$ at spatiotemporal discrete sampling points $\{(x_m, t_n)\}_{m=1,n=1}^{M,N}$ and its output is mode coefficient vector \(\left[ b_1(f), b_2(f), \dots, b_n(f) \right]^T\).
    \item The spatial branch network ${\mathcal{S}}_{\theta_s}$ encodes spatial coordinates. Its input is the spatial coordinate $x \in \Omega$, and its output is time-invariant spatial basis function vector \(\left[ \phi_1(x), \phi_2(x), \dots, \phi_n(x) \right]^T \).
    \item The temporal branch network $\mathcal{T}_{\theta_t}$ encodes temporal coordinates. Its input is the time coordinate $t \in [0,T]$, and its output is temporal basis function vector \(\left[ \varphi_1(t), \varphi_2(t), \dots, \varphi_n(t) \right]^T\).
\end{itemize}
 
The core operation of Eq.~\eqref{sw_net_correct} proceeds in two steps: first, the mode coefficients output by the input branch network are subjected to the Hadamard product with the spatial basis functions output by the spatial branch network to obtain the input-related spatial mode encoding; then, it is subjected to the Hadamard product with the temporal basis functions output by the temporal branch network, and finally the approximate value of the solution function at any spatiotemporal coordinate $(x,t)$ is obtained by summation.

The approximation capability and structural rationality of PI-SWNO for the mapping \(\mathcal{G}\) are guaranteed by three major theorems: the Stone-Weierstrass approximation theorem~\cite{Rudin1987Real}, the universal approximation theorem for operators~\cite{23Lu2021}, and the universal approximation theorem for neural networks~\cite{Cybenko1989Approximation}. The theoretical proof of the universal approximation property of the PI-SWNO architecture is given in~\ref{appendix:piswno_approximation}. Compared with canonical DeepONet, the spatiotemporally decoupled architecture of PI-SWNO offers the following advantages:

\begin{itemize}
    \item It structurally mitigates error accumulation and divergence in long-time prediction. Spatiotemporally coupled basis functions of canonical DeepONet exhibit the cross-amplification mechanism of spatiotemporal errors, leading to high-slope or even exponential error growth. The time-invariant spatial basis functions of PI-SWNO strictly limit the temporal error within the temporal domain, preventing spatial domain-wide propagation and exponential accumulation across time steps, thereby suppressing long-time error divergence. Most engineering unsteady PDEs possess time-invariant intrinsic spatial eigenmodes; temporal evolution only changes the weights of these modes. Spatiotemporally coupled basis functions force the inherent spatial modes of the system to continuously distort with time, introducing systematic fitting bias. For short-time problems, this bias can be masked by the fitting ability of the network, but for long-time problems, the non-physical mode distortion will continue to accumulate, leading to solution divergence. Decoupling the fitting tasks of spatial distribution and temporal evolution greatly reduces the difficulty of fitting high-dimensional coupled features of long-time sequences, and the network capacity can be efficiently allocated to the core physical laws of the solution, achieving a higher upper bound of fitting accuracy and stronger robustness in long-time generalization under identical or reduced training budgets.
    \item It possesses an inherent structure tailored to periodic unsteady PDEs, enhancing fitting accuracy and generalization. For unsteady processes with periodic or quasi-periodic behavior, the spatial branch network of PI-SWNO learns fixed spatial distribution modes, and the temporal branch network can explicitly capture temporal evolution features such as frequency, amplitude, and phase, significantly reducing the network's learning complexity.
    \item It exhibits flexible adaptability to multi-scale spatiotemporal properties with greater architectural design flexibility. For multi-scale unsteady PDE systems, spatial and temporal features often have markedly distinct scaling characteristics (e.g., multi-scale spatial geometry, fast/slow temporal evolution modes). The fully decoupled spatial and temporal trunk networks of PI-SWNO enable independent network design tailored to each dimension: for instance, multi-scale convolutions for the spatial trunk to capture spatial multi-scale features, and temporal networks or periodic activation functions for the temporal trunk to capture temporal dependencies and periodicity.
\end{itemize}

\subsection{Time-marching batch-wise sampling strategy}

As shown in Eq.~\eqref{sw_net_correct} and Fig.~\ref{fig:PI-SWNO_framework}, training neural operators for time-dependent PDEs demands sufficient random sampling of coordinate points in the spatiotemporal computational domain. Sampling density is a key factor determining solution accuracy and training stability. For long-time PDE solving, fixed sampling density leads to a linear increase in the number of sampling points with the length of the time domain, drastically increasing GPU memory usage and the computational overhead of gradient computation during training — often exceeding the memory capacity of available GPUs. To tackle this problem, we propose a time-marching batch-wise sampling strategy featuring phased local optimization and global consistency calibration (see Fig.~\ref{fig:sample-strategy}). On one hand, localized sampling via temporal blocking drastically reduces the number of sampling points per training step, alleviating memory pressure and computational burden. On the other hand, a global consistency calibration step over the full time domain corrects global biases introduced by blockwise local optimization, ensuring the continuity, consistency, and convergence of full-time-domain solutions.

Let the total time domain to be solved be \([0,T]\). Given a subdomain time step \(\Delta T\), the full time domain is evenly divided into \(K = T/\Delta T\) continuous non-overlapping sub-time domains:

\begin{equation}
\mathcal{T}_k = [(k-1)\Delta T, k\Delta T], \quad k=1,2,\dots,K
\end{equation}

Denote the spatial computational domain as \(\Omega \subset \mathbb{R}^d\), and \(\mathrm{Vol}(\cdot)\) as the Lebesgue measure; define \(\rho_{\text{local}}\) and \(\rho_{\text{global}}\) as the local sampling density and global sampling density, respectively. The execution process of a single training epoch is divided into the following two core stages:

1. Local batch optimization stage: Sequential local optimization over temporal blocks achieves high-precision approximation of PDE solutions within each subdomain while controlling per-step training memory usage. For the $k$-th spatiotemporal sub-domain \(\Omega \times \mathcal{T}_k\), random sampling at the local sampling density \(\rho_{\text{local}}\) yields the local sampling point set:
\begin{equation}
S_k = \{(x_{k,i}, t_{k,i}) \mid x_{k,i} \in \Omega, t_{k,i} \in \mathcal{T}_k\}
\end{equation}
Then, one gradient descent update is performed sequentially in each sub-time domain in the time order from \(k=1\) to \(k=K\).

2. Global consistency calibration stage: The global deviation introduced by block local optimization is corrected through global sampling in the full time domain to ensure the consistency of the full-time-domain solution. Random sampling is performed in the entire spatiotemporal domain \(\Omega \times [0,T]\) according to the global sampling density \(\rho_{\text{global}}\) to generate the global sampling point set:
\begin{equation}
S_{\text{global}} = \{(x_g, t_g) \mid x_g \in \Omega, t_g \in [0,T]\}
\end{equation}
Loss gradients are computed on the global sampling set, and one gradient descent update is applied.

Completing both stages constitutes one full epoch of iteration. The iterative process of local batch optimization and global consistency calibration is repeated until the global objective function converges

\begin{figure}[htbp]
	\centering
	\includegraphics[width=0.98\linewidth]{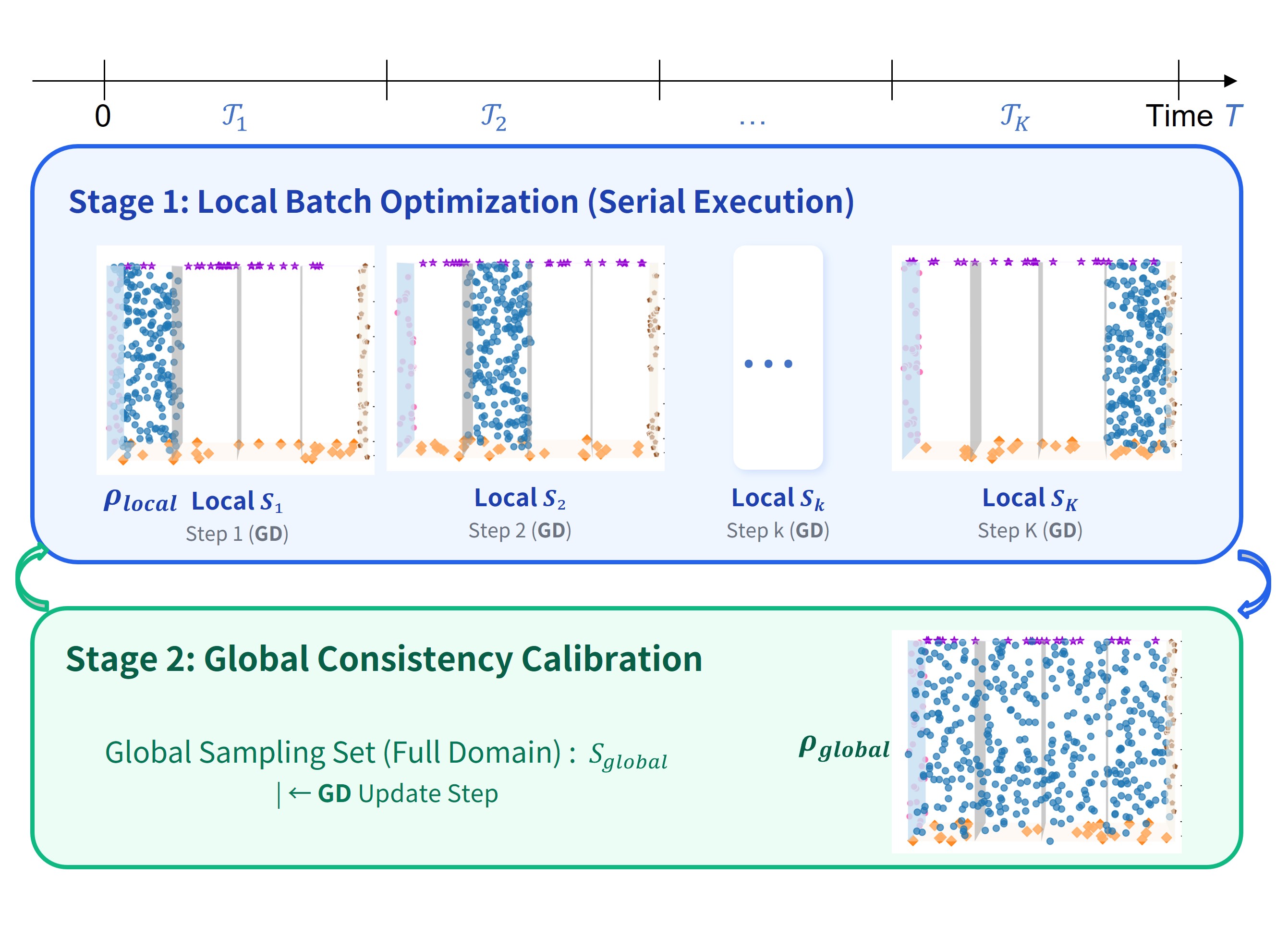}
	\caption{The schematic diagram of time-marching batch-wise sampling strategy illustrates the two-stage iterative training workflow, including the discretization of the full solution time domain into sequential temporal blocks, serial gradient descent updates for local batch optimization on each time-segmented sampling set, full-domain gradient descent update for global consistency calibration, and the closed-loop iteration mechanism that forms a complete training epoch.}
	\label{fig:sample-strategy}
\end{figure}

\section{Computational Experiments}
\label{sec4}

In this section, we employ four representative time-dependent PDEs as benchmark problems: the heat conduction (HC) equations, wave equations, Korteweg-de Vries (KdV) equations, and Burgers equations. A total of 7 numerical cases are designed to validate the effectiveness, stability, and solution accuracy of the proposed PI-SWNO model, with direct performance comparison against the vanilla PI-DeepONet. The details of all test cases are summarized in Table~\ref{tab:CASEoverview}.

\begin{table}[htbp]
\centering
\small
\setlength{\tabcolsep}{2pt}
\caption{Overview of the seven spatiotemporal problems tested in this paper.}
\label{tab:CASEoverview}
\begin{tabular}{ccccc}
\toprule
 Case & Input type &Input function & Input size  &  Trajectories \\
\midrule
1D HC eq.& Source field& Bimodal function  & 26 & 1600 \\
2D HC eq.& Source field & Gaussian function & 676 & 1225\\
1D Wave eq.& Initial condition & Periodic GRF& 101  &  1200 \\
2D Wave eq.& Initial condition & Periodic GRF& 676  &  1200 \\
1D KdV eq. & Initial condition &Periodic GRF&  513 & 1200 \\
1D Burgers eq. &  Initial condition &Periodic GRF& 101 & 1200\\
2D Burgers eq.& Initial condition &Periodic GRF& 676 & 1200\\
\bottomrule
\end{tabular}
\end{table}

The first 4 cases focus on linear time-dependent PDEs, covering parabolic heat conduction equations and hyperbolic wave equations, to test the model’s ability to model diffusion and propagation-dominated linear physical laws. The remaining 3 cases move to strongly nonlinear, strongly spatiotemporally coupled unsteady PDEs, including the KdV equation (nonlinear dispersive waves) and Burgers equations (advection-diffusion shock problems). For each experiment, different input functions are used for training and testing samples to investigate the generalization ability of the neural operators across unseen configurations. The code and dataset for all cases will be made publicly available upon publication.

All numerical experiments are run on a computing cluster equipped with 8 NVIDIA L40S GPUs, each with 46,068 MiB of video memory. The software stack includes NVIDIA driver version 560.35.03 paired with CUDA Toolkit 12.6. In all cases, the Adam optimizer~\cite{ABUEIDDA2025} is employed, with a tanh activation function, and Glorot normal initialization for model parameters. The initial learning rate is set to 1e-3, with a multi-step learning rate decay strategy that reduces the learning rate by a factor of 0.5 at 2000 and 5000 iterations.

To avoid error distortion caused by an excessively small overall $L_2$ norm of the solution—especially for problems with sharp local amplitude variations such as shocks and discontinuities—we adopt the amplitude-normalized relative $L_2$ error (ANRL2E) as the primary performance evaluation metric, defined as:

\begin{align}
\text {ANRL2E} &= \frac{\left\|u_{\text{true}} - u_{\text{pred}}\right\|_2}{\max \left(\left|u_{\text{true}}\right|\right)}\\
&=  \frac{\sqrt{\sum_{j=1}^{n_t} \sum_{k=1}^{n_{x}}\left(u\left(f, x^{(k)} ,t^{(j)},\right)-\hat{u}\left(f, x^{(k)}, t^{(j)}\right)\right)^2}}{ \max\left|u\left(f,x , t\right)\right|}
\end{align}

\subsection{1-D Heat conduction equations}\label{case1}

We first study a one-dimensional heat conduction system:

\begin{equation}\label{eq:case1}
\begin{cases}
&\frac{\partial u}{\partial t} = D \frac{\partial^2 u}{\partial x^2} + f(x), \quad  (x,t) \in [0,1]\times(0,10]\\
&u(0,t)+\frac{\partial u}{\partial x}(0,t) =0, \quad t \in (0,10]  \\
&u(1,t)+\frac{\partial u}{\partial x}(1,t) =0, \quad t \in (0,10] \\
&u(x,0)=0, \quad x \in (0,1) \\
&f(x) = A_1 \exp\left(-\frac{(x - c_1)^2}{2 \sigma_1^2}\right) + A_2 \exp\left(-\frac{(x - c_2)^2}{2 \sigma_2^2}\right)
\end{cases}
\end{equation}
where \(u(x,t)\) denotes the temperature field to be solved, the thermal diffusion coefficient \(D=0.001\), the boundary condition is Robin-type convective heat transfer, the initial condition is set to 0, and the heat source is a bimodal function with 6 tunable parameters: amplitudes \(A_1\), \(A_2\), center positions \(c_1\), \(c_2\), and standard deviations \(\sigma_1\), \(\sigma_2\) controlling the width of the two peaks. A total of 1600 trajectories are generated via parameter sweeping, with detailed parameter settings provided in Table~\ref{tab:paramstab}. We randomly select 80\% of the trajectories for training and the remaining 20\% for testing. The objective is to learn the operator \(\mathcal{G}:f(x) \mapsto u(x,t)\) that maps the heat source function to the exact solution of the equation. The loss function is formulated as:

\begin{equation}\label{eq:loss1}
\begin{aligned}
	\mathcal{L}(\theta)=&\frac{\alpha_f}{BN_f}\sum_{b=1}^B \sum_{i=1}^{N_f}\left\|\frac{\partial \mathcal{G}_\theta\left(f_{b},x_i,t_i\right)}{\partial t}-D\frac{\partial^2\mathcal{G}_\theta\left(f_{b},x_i,t_i\right)}{\partial x^2}-f(x_i)\right\|^2 \\
	+&\frac{\alpha_b}{B N_b} \sum_{b=1}^B \sum_{j=1}^{N_b}\left\|\mathcal{G}_\theta\left(f_{b},x_j,t_j\right)+\frac{\partial \mathcal{G}_\theta\left(f_{b},x_j,t_j\right)}{\partial x}\right\|^2 \\
	+&\frac{\alpha_i}{B N_i} \sum_{b=1}^B  \sum_{p=1}^{N_i}\left\|\mathcal{G}_\theta\left(f_{b},x_p,0\right)\right\|^2  \\
	+&\frac{\alpha_d}{B N_d} \sum_{b=1}^B  \sum_{k=1}^{N_d}\left\|\mathcal{G}_\theta\left(f_{b},x_k,t_k\right)-u\left(x_k,t_k\right)\right\|^2 \\
\end{aligned}
\end{equation}
where \(\mathcal{G}_{\Theta}\) denotes the neural operator network with trainable parameters \(\theta\), \(B\) is the batch size, $N_f$ and $N_b$ are the number of training points randomly sampled from the domain $\Omega$ and boundary $\partial\Omega$, respectively, $N_i$ is the number of spatial sampling points at the initial time, $N_d$ is the number of observation data points, and $\alpha_f$, $\alpha_b$, $\alpha_i$, $\alpha_d$ are the weight coefficients for the PDE residual, boundary condition residual, initial condition residual, and observation data residual, respectively. The weights for each loss term in this case are set to: $\alpha_f=1$, $\alpha_b=5$, $\alpha_i=5$, and $\alpha_d=5$.

We randomly sample 1200 points inside the spatiotemporal domain \(\Omega\), 50 points at the initial time, 200 points on the boundary, and use the temperature values at 50 uniformly sampled points at the final time as observation data. For the PI-DeepONet framework, both the branch network and trunk network are set to 3 fully connected layers, with a hidden dimension of 32 for the branch network, 64 for the trunk network, and 16 basis functions and corresponding coefficients. For the PI-SWNO framework, the input branch, temporal branch, and spatial branch are all set to 3 fully connected layers with a hidden dimension of 32, and 16 basis functions and corresponding coefficients. Both models are trained for 20,000 iterations with a batch size of 128.

We select 3 samples from the test set to visualize the heat source function, ground-truth temperature field, predicted temperature fields from the two models, and the spatiotemporal distribution of absolute errors, as shown in Fig.~\ref{fig:heat1d-fig1}. In terms of overall prediction performance, both PI-DeepONet and PI-SWNO can reproduce the spatiotemporal evolution of the ground-truth temperature field with high accuracy, including the position of the temperature peak, diffusion trend, and gradient distribution driven by the heat source, which is consistent with the exact solution. This validates the effectiveness of both physics-informed neural operator models for solving the heat conduction equation. Notably, PI-SWNO yields a lower absolute error amplitude and a more uniform spatiotemporal error distribution, demonstrating superior gradient capture capability and local prediction accuracy.

\begin{figure}[htbp]
	\centering
	\includegraphics[width=0.98\linewidth]{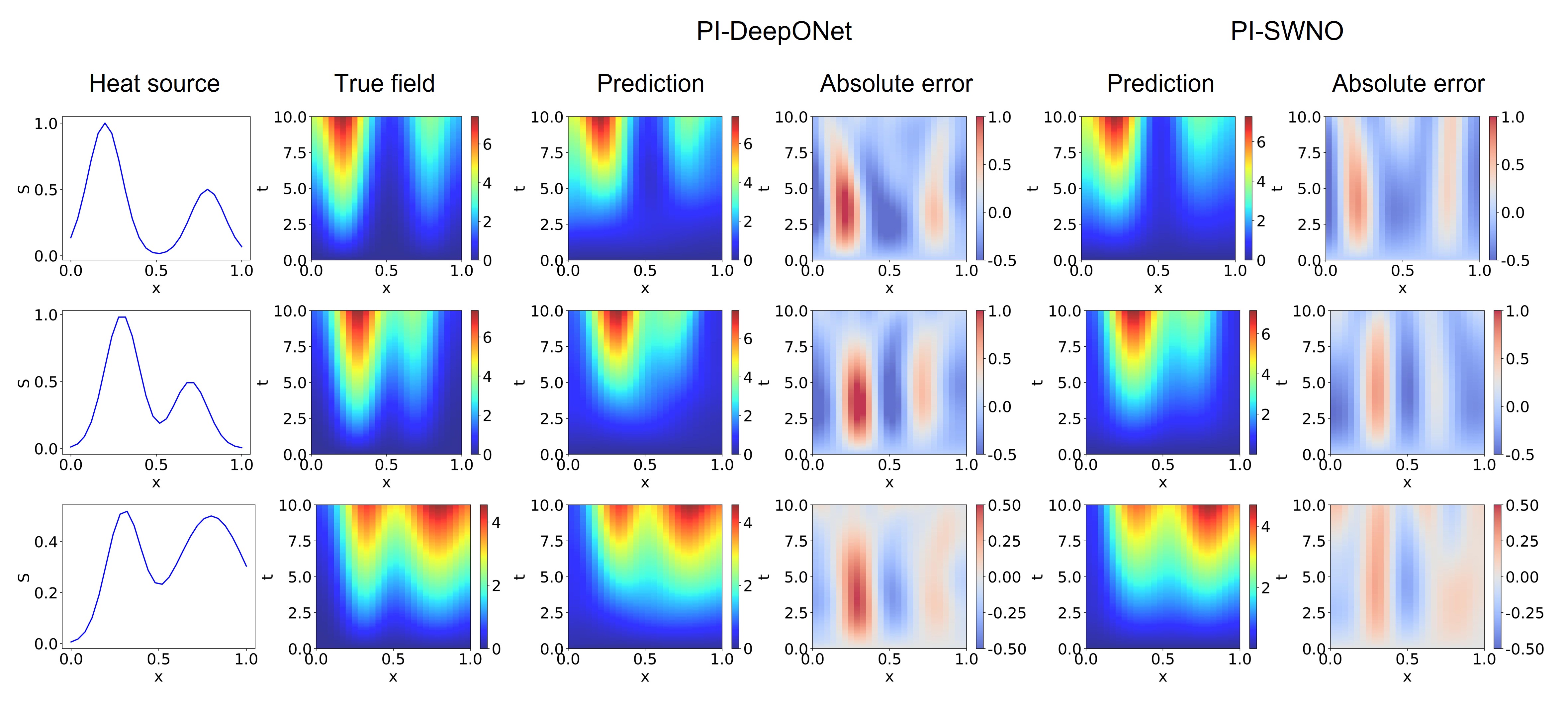}
	\caption{ 1D heat conduction equation: The first column shows the randomly sampled source term function; the second column shows the high-fidelity numerical solution; the third and fourth columns show the predicted solution and its absolute error from PI-DeepONet; the fifth and sixth columns show the predicted solution and its absolute error from PI-SWNO.}
	\label{fig:heat1d-fig1}
\end{figure}

To investigate the evolution of prediction accuracy and stability of the two models across different time domain spans, we gradually extend the time horizon and predict the spatiotemporal evolution of the temperature field.

Fig.~\ref{fig:heat1d-fig3} shows the boxplot distribution of ANRL2E for the two models on the 1D heat conduction equation case, as the time domain span increases from $[0, 10]$ to $[0, 100]$. The boxplot reflects the quartile distribution of errors, the red star marks the mean error, and the whiskers represent the extreme value range of errors. To quantify the growth trend of error with time domain length, we perform least-squares linear regression on the mean error of the two models at different time domain lengths, using the slope of the fitted line to characterize the rate of error growth over time. The error growth rate is 0.085 for PI-SWNO and 1.255 for PI-DeepONet, meaning the error growth rate of PI-DeepONet is 14 times that of PI-SWNO. In the short time horizon scenario (shorter than $[0, 50]$), the two models achieve comparable error levels, with mean errors below 5\% and no significant difference in boxplot distribution or dispersion. In the medium-to-long time horizon scenario (longer than $[0, 50]$), the error of PI-DeepONet increases sharply: the mean error rises rapidly from about 7\% at $[0, 50]$ to about 12\% at $[0, 100]$, with a significant expansion of the interquartile range. This indicates a rapid decline in the model’s ability to capture long-time spatiotemporal evolution features and a marked deterioration in prediction stability. In contrast, PI-SWNO exhibits stronger stability in long time horizon scenarios: even when the time domain span is extended to $[0, 100]$, its mean error remains stable in the range of 4\% to 5\%, with no obvious expansion of the interquartile range, and the upper limit of the error extreme value is consistently below 7\%, with no error explosion phenomenon as the time domain span increases. 

\begin{figure}[htbp]
	\centering
	\includegraphics[width=0.85\linewidth]{heat1d-fig3.jpg}
	\caption{Comparison of long-time ANRL2E growth trends between the baseline PI-DeepONet and the proposed spatiotemporally decoupled PI-SWNO across expanding temporal domain spans on the 1D heat conduction equation.}
	\label{fig:heat1d-fig3}
\end{figure}

Fig.~\ref{fig:heat1d-fig2} shows the statistical distribution of the mean squared error (MSE) and ANRL2E on the test set for PI-DeepONet and PI-SWNO in the short time horizon $[0, 10]$ and long time horizon $[0, 100]$, respectively. The results show that PI-SWNO outperforms PI-DeepONet comprehensively in terms of mean error, extreme value range, and dispersion, with higher overall prediction accuracy, smaller error fluctuation, stronger robustness, better generalization across different heat source samples, and a lower upper error bound for extreme outlier samples.

These results fully demonstrate that PI-SWNO has superior long-term spatiotemporal dependency modeling capability compared to PI-DeepONet, can effectively suppress error accumulation and growth in long-time prediction, and is more suitable for PDE solving tasks with large time spans.

\begin{figure}[htbp]
  \centering
  \includegraphics[width=0.98\linewidth]{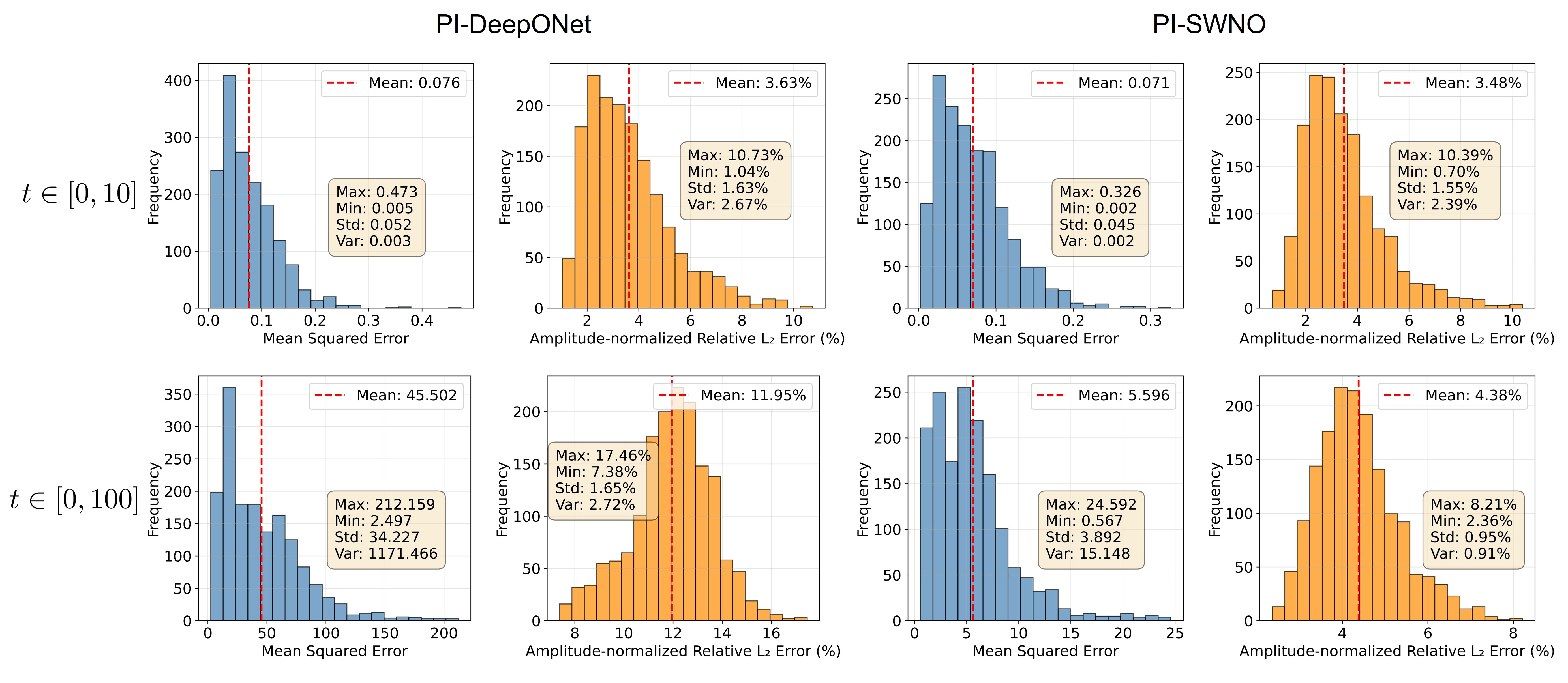}
  \caption{Statistical characteristics of mean squared error (MSE) and ANRL2E for PI-DeepONet and PI-SWNO on the 1D heat conduction equation dataset in the short time horizon $[0, 10]$ and long time horizon $[0, 100]$, respectively.}
  \label{fig:heat1d-fig2}
\end{figure}

\subsection{2-D Heat conduction equations}\label{case2}

To assess the model’s generalization to higher spatial dimensions, we next examine a two-dimensional heat conduction problem with Robin boundary conditions and a Gaussian heat source distribution:
\begin{equation}
\small
\begin{cases}
&\frac{\partial u}{\partial t} = D \left( \frac{\partial^2 u}{\partial x^2} + \frac{\partial^2 u}{\partial y^2} \right) + f(x,y), \quad  (x,y,t) \in [0,1]\times[0,1]\times(0,50]\\
&u(0,y,t)+\frac{\partial u}{\partial x}(0,y,t) =0, \quad u(1,y,t)+\frac{\partial u}{\partial x}(1,y,t) =0, \quad  (y,t) \in [0,1]\times(0,50]  \\
&u(x,0,t)+\frac{\partial u}{\partial y}(x,0,t) =0, \quad u(x,1,t)+\frac{\partial u}{\partial y}(x,1,t) =0, \quad (x,t) \in [0,1]\times(0,50] \\
&u(x,y,0)=0, \quad (x,y) \in (0,1)\times(0,1)\\
&f(x,y) = A \exp\left( -\left( \frac{(x - x_c)^2}{2 \sigma_x^2} + \frac{(y - y_c)^2}{2 \sigma_y^2} \right) \right)
\end{cases}
\end{equation}
where the meaning of each parameter is consistent with that in Eq.~\eqref{eq:case1}, the thermal diffusion coefficient \(D=0.001\), the boundary condition is Robin-type convective heat transfer, the initial condition is set to 0, and the heat source is a 2D Gaussian function with 5 undetermined parameters: amplitude \(A\), heat source center positions \(x_c\), \(y_c\), and standard deviations \(\sigma_x\), \(\sigma_y\) in the two spatial directions. A total of 1225 trajectories are generated via parameter sweeping, with detailed parameter settings provided in Table~\ref{tab:paramstab}. We randomly select 80\% of the trajectories for training and the remaining 20\% for testing. The objective is to learn the operator \(\mathcal{G}:f(x,y) \mapsto u(x,y,t)\) that maps the heat source function to the exact solution of the equation, with the loss function referring to Eq.~\eqref{eq:loss1}. The weights for each loss term in this case are set to: $\alpha_f=1$, $\alpha_b=5$, $\alpha_i=5$, and $\alpha_d=5$.

\begin{table}[htbp]
	\small
	\centering
	\caption{Parameter settings for the bimodal heat source in the 1D heat conduction case and the Gaussian heat source in the 2D heat conduction case.}
	\label{tab:paramstab}
	\begin{tabular}{cccc}
		\toprule
		\multicolumn{2}{c}{Bimodal source} & \multicolumn{2}{c}{Gaussian source} \\
		\cmidrule(lr){1-2} \cmidrule(lr){3-4} 
		Params & Value & Params & Value \\
		$A_1$ & 0.5, 1 & $A$ & 5 \\
		$A_2$ & 0.5, 1 & $x_c$ & 0.2,0.3,0.4,0.5,0.6,0.7,0.8  \\
		$\sigma_1$ & 0.05, 0.1, 0.15, 0.2 & $y_c$ &0.2,0.3,0.4,0.5,0.6,0.7,0.8 \\
		$\sigma_2$ & 0.05, 0.1, 0.15, 0.2  & $\sigma_x$ & 0.05,0.0625,0.075,0.0875,0.1 \\
		$c_1$ & 0.1,0.2,0.3,0.4,0.5 & $\sigma_y$ & 0.05,0.0625,0.075,0.0875,0.1 \\
		$c_2$ & 0.5,0.6,0.7,0.8,0.9 & - & - \\
		\bottomrule
	\end{tabular}
\end{table}

We randomly sample 10,000 points inside the spatiotemporal domain \(\Omega\), 1000 points at the initial time, 4000 points on the boundary, and use the temperature values at 1000 uniformly sampled points at the final time as observation data. For the PI-DeepONet framework, both the branch network and trunk network are set to 6 fully connected layers, with a hidden dimension of 100 for the branch network, 200 for the trunk network, and 64 basis functions and corresponding coefficients. For the PI-SWNO framework, the input branch, temporal branch, and spatial branch are all set to 6 fully connected layers with a hidden dimension of 100, and 64 basis functions and corresponding coefficients. Both models are trained for 50,000 iterations with a batch size of 8.

As shown in Fig.~\ref{fig:heat2d-fig1}, both PI-DeepONet and PI-SWNO can accurately reproduce the spatiotemporal evolution of the temperature field, precisely capturing the spatial position of the temperature peak, diffusion trend, and amplitude growth characteristics. This confirms that both physics-informed neural operators can effectively learn the intrinsic physical constraints of the heat conduction equation. From the comparison of error fields, PI-DeepONet exhibits obvious error accumulation in the heat source core region and diffusion front with large temperature gradients, while PI-SWNO yields a more widely distributed absolute error, rather than being concentrated only in the extremely local region of the temperature peak.

\begin{figure}[htbp]
	\centering
	\includegraphics[width=0.98\linewidth]{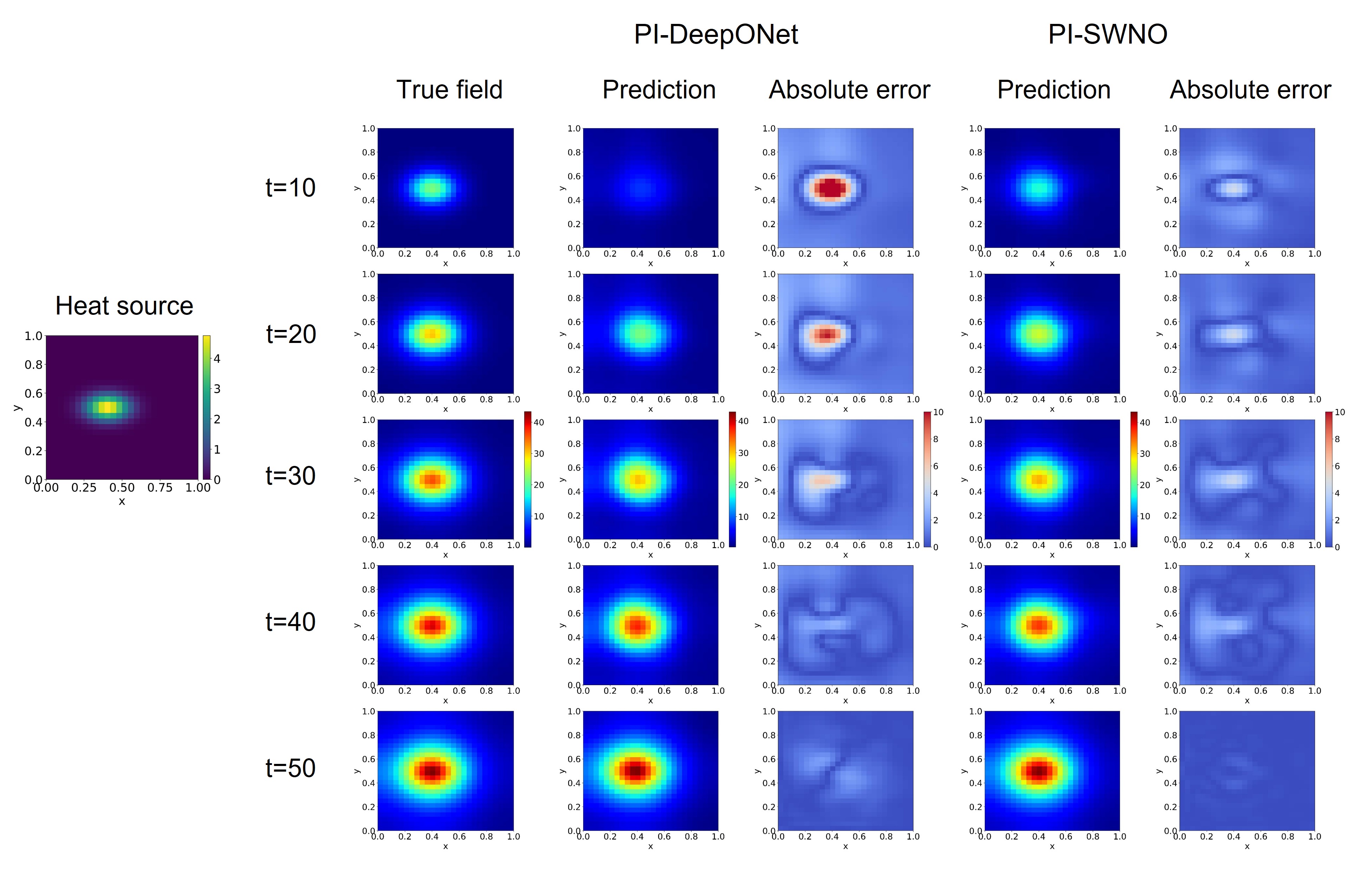}
	\caption{2D heat conduction equation: The first plot shows the randomly sampled source term function; the second column shows 2D slices of the high-fidelity numerical solution at $t = 10, 20, 30, 40, 50$; the third and fourth columns show the predicted solution and its absolute error from PI-DeepONet; the fifth and sixth columns show the predicted solution and its absolute error from PI-SWNO.}
	\label{fig:heat2d-fig1}
\end{figure}

As shown in Fig.~\ref{fig:heat2d-fig3}, the prediction errors of both models show an overall increasing trend with the extension of the prediction time horizon, which is consistent with the non-decreasing fitting error theorem for fixed-parameter neural operators presented in Section~\ref{wucha}, with local decreases attributed to optimization errors. The slopes of the error growth rate curves fitted by least-squares linear regression are 1.277 for PI-DeepONet and 0.238 for PI-SWNO, a 5.4-fold difference. In the short time horizon of 10$\sim$40, PI-DeepONet and PI-SWNO achieve comparable accuracy. When the length of time domain is longer than 50, the error of PI-DeepONet significantly exceeds that of PI-SWNO and grows rapidly, accompanied by an increase in error dispersion. At $[0, 100]$, the minimum error of PI-DeepONet already exceeds 11\%, the maximum error approaches 16\%, and the boxplot distribution is wide. This indicates that the model’s prediction robustness is greatly reduced over long time horizons, with significant performance differences across samples, presenting a typical error accumulation and divergence problem. In contrast, PI-SWNO maintains a slow error growth rate in the long time horizon of 50$\sim$100: the average relative error is still controlled within 3\%, and the maximum extreme error does not exceed 5\%. The error dispersion remains at an extremely low level across the full time domain, with a narrow interquartile range. 

Fig.~\ref{fig:heat2d-fig2} provides a more detailed view of the statistical characteristics of MSE and ANRL2E for the two models on the 2D heat conduction equation dataset in the short time horizon $[0, 10]$ and long time horizon $[0, 100]$. The above results demonstrate that the PI-SWNO model can structurally suppress error accumulation, maintain good physical consistency throughout long-time evolution, and achieve low-rate error growth.

\begin{figure}[htbp]
	\centering
	\includegraphics[width=0.85\linewidth]{heat2d-fig3.jpg}
	\caption{Comparison of long-time ANRL2E growth trends between the baseline PI-DeepONet and the proposed spatiotemporally decoupled PI-SWNO across expanding temporal domain spans on the 2D heat conduction equation.}
	\label{fig:heat2d-fig3}
\end{figure}

\begin{figure}[htbp]
  \centering
  \includegraphics[width=0.98\linewidth]{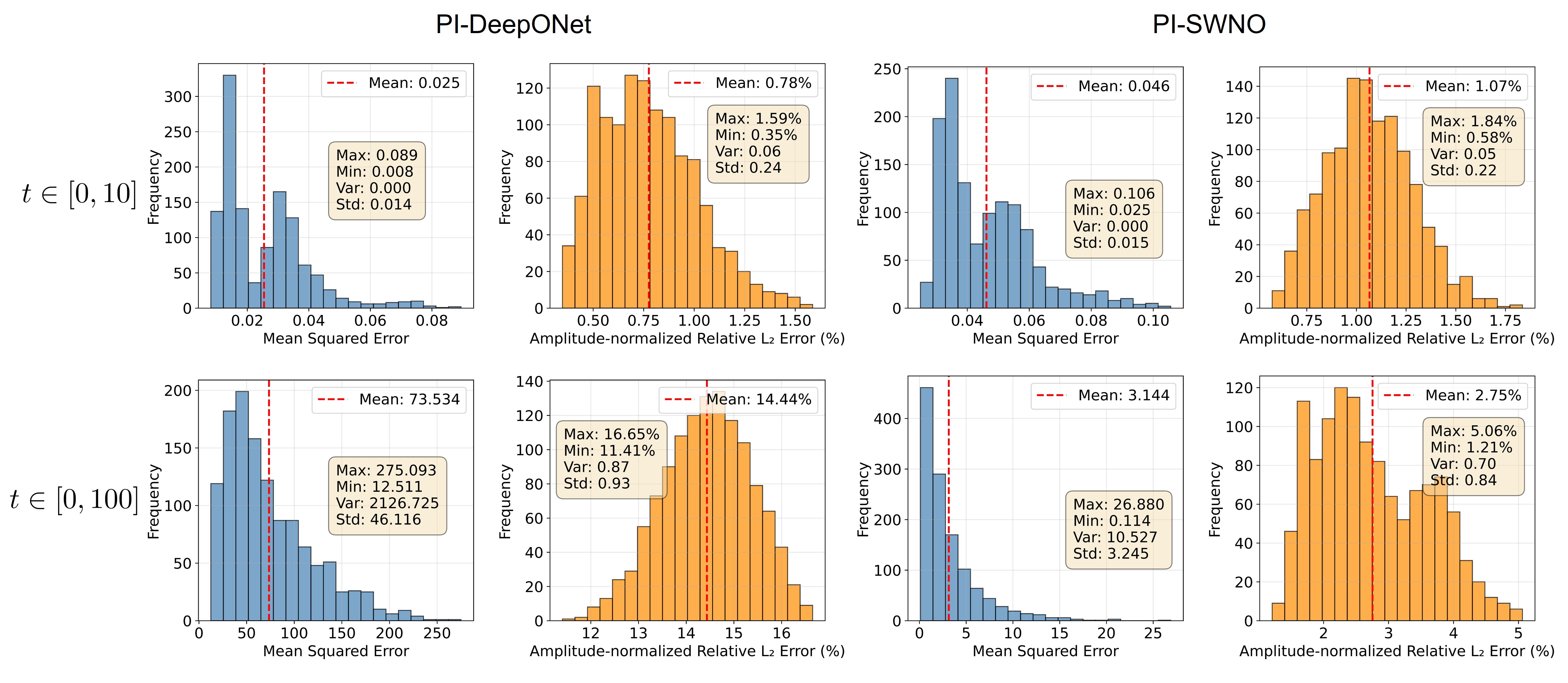}
  \caption{Statistical characteristics of MSE and ANRL2E for PI-DeepONet and PI-SWNO on the 2D heat conduction equation dataset in the short time horizon $[0, 10]$ and long time horizon $[0, 100]$, respectively.}
  \label{fig:heat2d-fig2}
\end{figure}

\subsection{1-D wave equations}\label{case3}

Next, we investigate a one-dimensional wave problem, a canonical linear hyperbolic PDE that describes the propagation of string vibrations, acoustic waves, and perturbation waves:

\begin{equation}\label{eq:case3}
\begin{cases}
&\frac{\partial^2 u}{\partial t^2}=c^2 \frac{\partial^2 u}{\partial x^2}, \quad (x,t) \in [0,1] \times (0,6]\\
&u(0,t)=u(1,t), \quad t \in (0,6]  \\
& \frac{\partial u}{\partial x}(0,t)=\frac{\partial u}{\partial x}(1,t), \quad t \in (0,6] \\
&u(x,0)=f(x), \quad x \in (0,1) 
\end{cases}
\end{equation}
where \(u(x,t)\) denotes the displacement of the wave field to be solved, its second-order time derivative represents acceleration proportional to the spatial curvature, \(c=0.1\) is the wave speed determined by the elastic modulus and density of the medium, and the boundary conditions are non-reflective periodic boundary conditions to ensure no energy loss or boundary reflection when the wave circulates within the computational domain. The initial displacement field is generated using a Whittle-Matérn Gaussian random field (GRF)~\cite{6Lei2025}, constructed as follows: based on the finite difference Laplacian operator \(\Delta\) with periodic boundary conditions, we construct the symmetric positive definite operator $-\Delta + \sigma^2 I$($\sigma=7.0$ is the length scale parameter and $I$ is the identity matrix), reconstruct the covariance matrix via eigenvalue decomposition, and finally sample from a multivariate normal distribution with zero mean and a covariance matrix equal to $\sigma^4$times the reconstructed matrix. This method naturally satisfies the continuity of periodic boundary conditions, strictly guarantees the positive definiteness of the covariance matrix and sampling stability, and precisely controls the smoothness of the random field to match the regularity requirements of the wave equation. The initial velocity field is uniformly set to 0. A total of 1200 trajectories are generated, with 80\% randomly selected for training and the remaining 20\% for testing. The objective is to learn the operator $\mathcal{G}: u_0(x) \mapsto u(x,t)$ that maps the initial displacement field to the exact solution of the equation. The loss function is formulated as:

\begin{equation}\label{eq:loss3}
	\small
\begin{aligned}
	\mathcal{L}(\theta)=&\frac{\alpha_f}{BN_f}\sum_{b=1}^B \sum_{i=1}^{N_f}\left\|\frac{\partial^2 \mathcal{G}_\theta\left(f_{b},x_i,t_i\right)}{\partial t^2}-c^2\frac{\partial^2\mathcal{G}_\theta\left(f_{b},x_i,t_i\right)}{\partial x^2}\right\|^2 \\
	+&\frac{\alpha_b}{B N_b} \sum_{b=1}^B \sum_{j=1}^{N_b}\left\|\mathcal{G}_\theta\left(f_{b},0,t_j\right)-\mathcal{G}_\theta\left(f_{b},1,t_j\right)\right\|^2+\left\|\frac{\partial \mathcal{G}_\theta\left(f_{b},0,t_j\right)}{\partial x}-\frac{\partial \mathcal{G}_\theta\left(f_{b},1,t_j\right)}{\partial x} \right\|^2\\
	+&\frac{\alpha_i}{B N_i} \sum_{b=1}^B  \sum_{p=1}^{N_i}\left\|\mathcal{G}_\theta\left(f_{b},x_p,0\right)-f_b(x_p)\right\|^2  \\
	+&\frac{\alpha_d}{B N_d} \sum_{b=1}^B  \sum_{k=1}^{N_d}\left\|\mathcal{G}_\theta\left(f_{b},x_k,t_k\right)-u\left(x_k,t_k\right)\right\|^2 \\
\end{aligned}    
\end{equation}
where the meaning of each parameter is consistent with Eq.~\eqref{eq:loss1}. The weights for each loss term in this case are set to: $\alpha_f=200$, $\alpha_b=50$, $\alpha_i=500$, and $\alpha_d=50$. We randomly sample 5000 points inside the spatiotemporal domain, uniformly select displacement data at 101 points at the initial time as input, randomly sample 200 points on the boundary \(x=0\) and 200 corresponding points on the boundary \(x=1\) to satisfy the periodic condition, and use the displacement at 101 uniformly sampled points at the final time as observation data. For the PI-DeepONet framework, both the branch network and trunk network are set to 4 fully connected layers with a hidden dimension of 64, and 32 basis functions and corresponding coefficients. For the PI-SWNO framework, the input branch, temporal branch, and spatial branch are all set to 4 fully connected layers, with a hidden dimension of 64 for the input branch, 32 for the temporal and spatial branches, and 32 basis functions and corresponding coefficients. Both models are trained for 20,000 iterations with a batch size of 32.

Fig.~\ref{fig:wave-fig1} shows the ground truth and prediction results for 3 test samples, where the overall absolute error amplitude of PI-SWNO is lower than that of PI-DeepONet, with similar error distribution patterns for the two models.

\begin{figure}[htbp]
	\centering
	\includegraphics[width=0.98\linewidth]{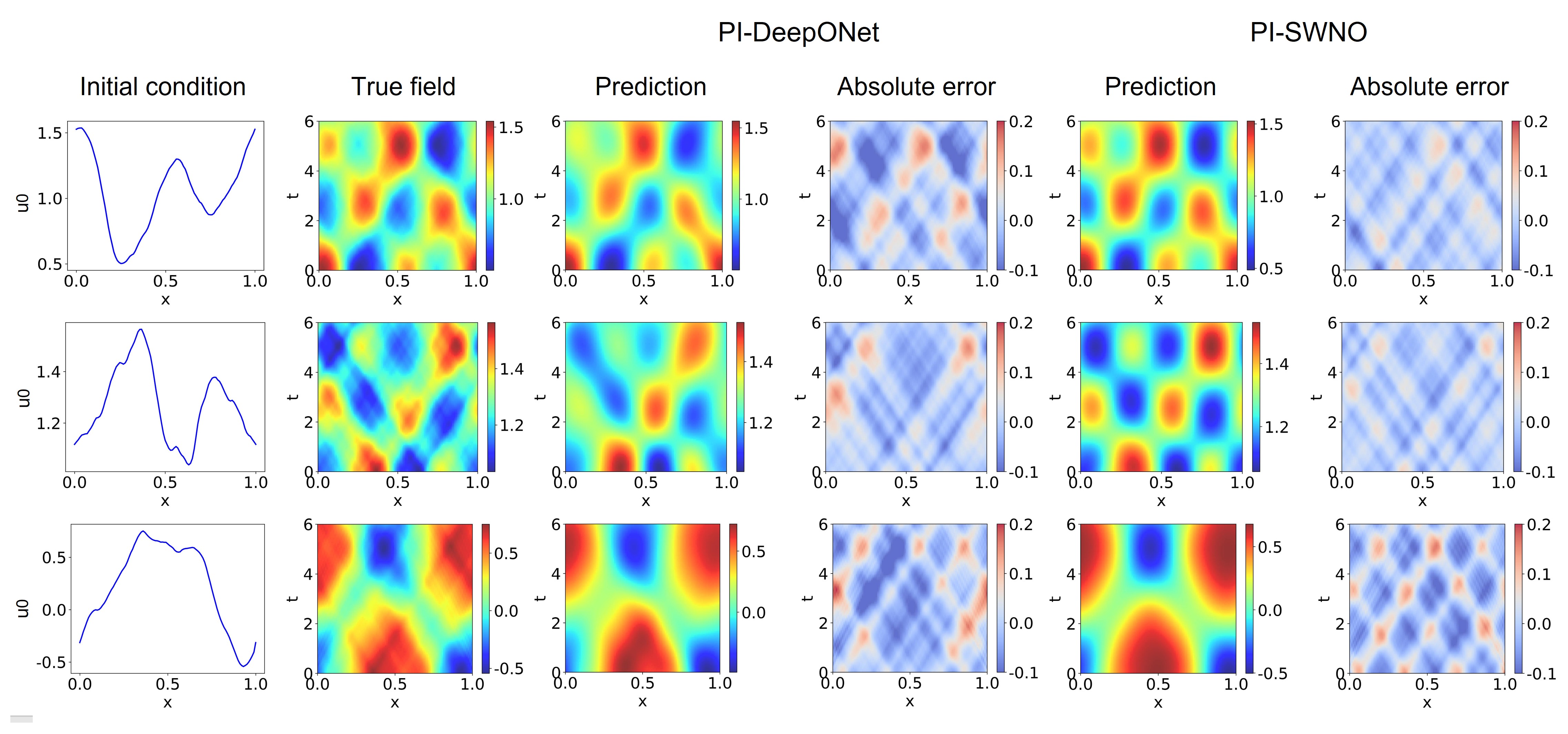}
	\caption{1D wave equation: The first column shows the randomly sampled initial condition; the second column shows the high-fidelity numerical solution; the third and fourth columns show the predicted solution and its absolute error from PI-DeepONet; the fifth and sixth columns show the predicted solution and its absolute error from PI-SWNO.}
	\label{fig:wave-fig1}
\end{figure}

Fig.~\ref{fig:wave-fig3} compares the evolution of errors with time domain span for the two models. For such periodic evolution problems, the error of the spatiotemporally coupled PI-DeepONet increases overall with the length of the time domain, and undergoes an abrupt catastrophic jump between 8 s and 10 s: the median error rises from about 5\% to over 25\% , the whisker range expands sharply to 40\%, accompanied by a large number of high-error outliers. In contrast, the error of the spatiotemporally decoupled PI-SWNO grows smoothly with no abrupt jumps throughout the time horizon, with the median error consistently below 10\% and a compact distribution. Fig.~\ref{fig:wave-fig2} provides a more detailed view of the statistical characteristics of MSE and ANRL2E for the two models on the 1D wave equation dataset in the short time horizon $[0, 2]$ and long time horizon $[0, 20]$, where PI-SWNO yields a smaller error standard deviation and lower extreme values. These results demonstrate that the decoupled architecture of PI-SWNO accurately separates spatiotemporal features, avoids phase drift and mode aliasing common in periodic problems, and achieves a lower average error and more stable fitting performance.

\begin{figure}[htbp]
	\centering
	\includegraphics[width=0.85\linewidth]{wave-fig3.jpg}
	\caption{Comparison of long-time ANRL2E growth trends between the baseline PI-DeepONet and the proposed spatiotemporally decoupled PI-SWNO across expanding temporal domain spans on the 1D wave equation.}
	\label{fig:wave-fig3}
\end{figure}

\begin{figure}[htbp]
  \centering
  \includegraphics[width=0.98\linewidth]{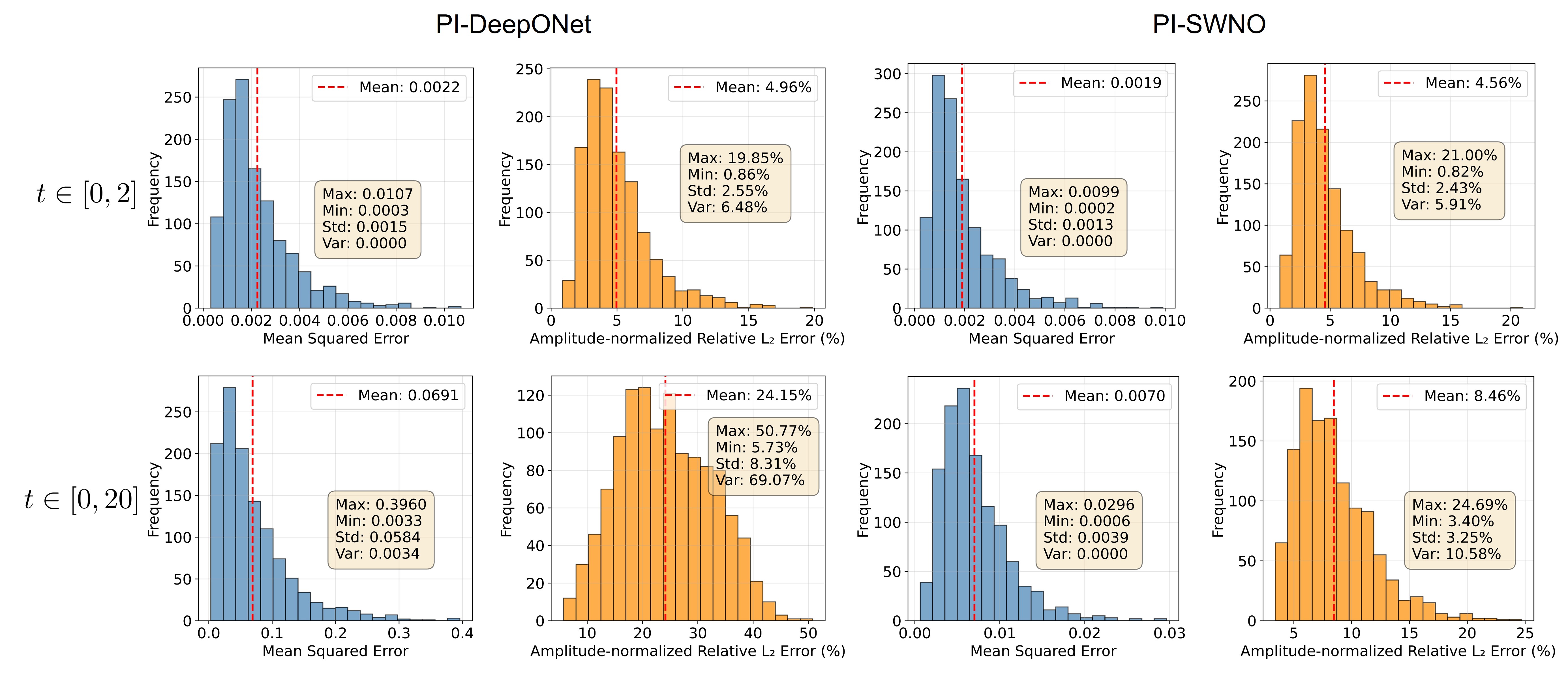}
  \caption{Statistical characteristics of MSE and ANRL2E for PI-DeepONet and PI-SWNO on the 1D wave equation dataset in the short time horizon $[0, 2]$ and long time horizon $[0, 20]$, respectively.}
  \label{fig:wave-fig2}
\end{figure}

\subsection{2-D wave equations}\label{case4}

Complementing the 1D wave benchmark, we extend our validation to the two-dimensional linear wave equation:

\begin{equation}
\begin{cases}
\frac{\partial^2 u}{\partial t^2}=c^2\left(\frac{\partial^2 u}{\partial x^2}+\frac{\partial^2 u}{\partial y^2}\right), & (x, y, t) \in[0,1] \times[0,1] \times(0,10] \\ 
u(0, y, t)=u(1, y, t),\frac{\partial u}{\partial x}(0, y, t)=\frac{\partial u}{\partial x}(1, y, t),   & (y, t) \in[0,1] \times(0,10] \\ 
u(x, 0, t)=u(x, 1, t), \frac{\partial u}{\partial y}(x, 0, t)=\frac{\partial u}{\partial y}(x, 1, t), & (x, t) \in[0,1] \times(0,10] \\ 
u(x, y, 0)=f(x, y), & (x, y) \in(0,1) \times(0,1)
\end{cases}
\end{equation}
where the meaning of each parameter is consistent with the 1D wave equation in Eq.~\eqref{eq:case3}, the wave speed is set to \(c=0.1\), the boundary conditions are periodic, the initial displacement field is generated using a Gaussian random field, and the initial velocity field is set to 0. A total of 1200 trajectories are generated, with 80\% randomly selected for training and the remaining 20\% for testing. The objective is to learn the operator $\mathcal{G}: u_0(x) \mapsto u(x,t)$ that maps the initial displacement field to the exact solution of the equation, with the loss function referring to Eq.~\eqref{eq:loss3}. The weights for each loss term in this case are set to: $\alpha_f=50$, $\alpha_b=50$, $\alpha_i=100$, and $\alpha_d=100$.

We randomly sample 8192 points inside the spatiotemporal domain \(\Omega\), uniformly select displacement data at 676 points at the initial time as input, randomly sample 1024 points on each of the two boundary surfaces \(x=0\) and \(y=0\) and 1024 corresponding points on the boundary surfaces \(x=1\) and \(y=1\) to satisfy the periodic condition, and use the displacement data at 676 uniformly sampled points at the final time as observation data. For the PI-DeepONet framework, both the branch network and trunk network are set to 6 fully connected layers, with a hidden dimension of 100 for the branch network, 200 for the trunk network, and 64 basis functions and corresponding coefficients. For the PI-SWNO framework, the input branch, temporal branch, and spatial branch are all set to 6 fully connected layers with a hidden dimension of 100, and 64 basis functions and corresponding coefficients. Both models are trained for 20,000 iterations with a batch size of 8.

As shown in Fig.~\ref{fig:wave2d-fig1}, both PI-DeepONet and PI-SWNO can accurately reproduce the spatiotemporal evolution of the wave field, precisely capturing the spatial position of the wave peak, propagation trend, and amplitude variation characteristics. This confirms that both physics-informed neural operators can effectively learn the intrinsic physical constraints of the wave equation. From the comparison of error fields, PI-DeepONet exhibits obvious error accumulation in regions with large wave amplitude gradients, while PI-SWNO yields lower absolute error values.

\begin{figure}[htbp]
	\centering
	\includegraphics[width=0.98\linewidth]{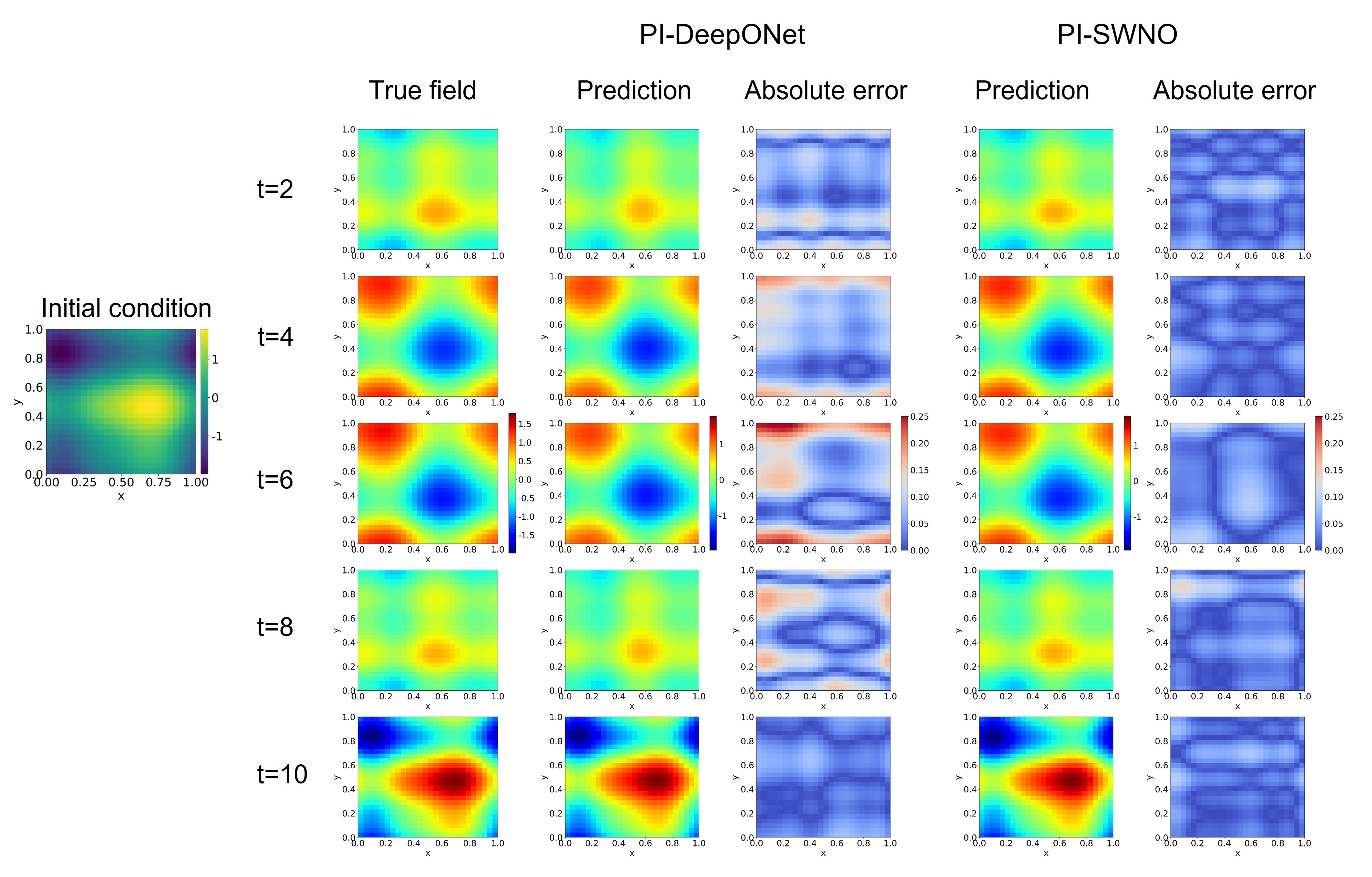}
	\caption{2D wave equation: The first plot shows the randomly sampled initial condition; the second column shows 2D slices of the high-fidelity numerical solution at $t = 2, 4, 6, 8, 10$; the third and fourth columns show the predicted solution and its absolute error from PI-DeepONet; the fifth and sixth columns show the predicted solution and its absolute error from PI-SWNO.}
	\label{fig:wave2d-fig1}
\end{figure}

As shown in Fig.~\ref{fig:wave2d-fig3}, the prediction errors of both models increase with the extension of the prediction time horizon, consistent with the non-decreasing fitting error property of fixed-parameter neural operators. In the short time horizon of 2 s, PI-DeepONet and PI-SWNO achieve comparable accuracy, with the performance gap widening continuously as the time horizon extends. Least-squares linear regression on the mean error shows that the error growth rate of PI-DeepONet is approximately 4 times that of PI-SWNO (1.582 vs. 0.620, respectively).

\begin{figure}[htbp]
	\centering
	\includegraphics[width=0.85\linewidth]{wave2d-fig3.jpg}
	\caption{Comparison of long-time ANRL2E growth trends between the baseline PI-DeepONet and the proposed spatiotemporally decoupled PI-SWNO across expanding temporal domain spans on the 2D wave equation.}
	\label{fig:wave2d-fig3}
\end{figure}

As shown in Fig.~\ref{fig:wave2d-fig2}, PI-SWNO demonstrates significant advantages in periodic problems. In the short time horizon $[0, 2]$, the two models achieve comparable prediction accuracy, with PI-DeepONet yielding a slightly lower mean relative error and maximum error, while PI-SWNO has a more concentrated relative error distribution and smaller fluctuations. However, in the long time horizon $[0, 20]$, the mean error of PI-DeepONet increases to 15.11\%, three times that of PI-SWNO.

\begin{figure}[htbp]
  \centering
  \includegraphics[width=0.98\linewidth]{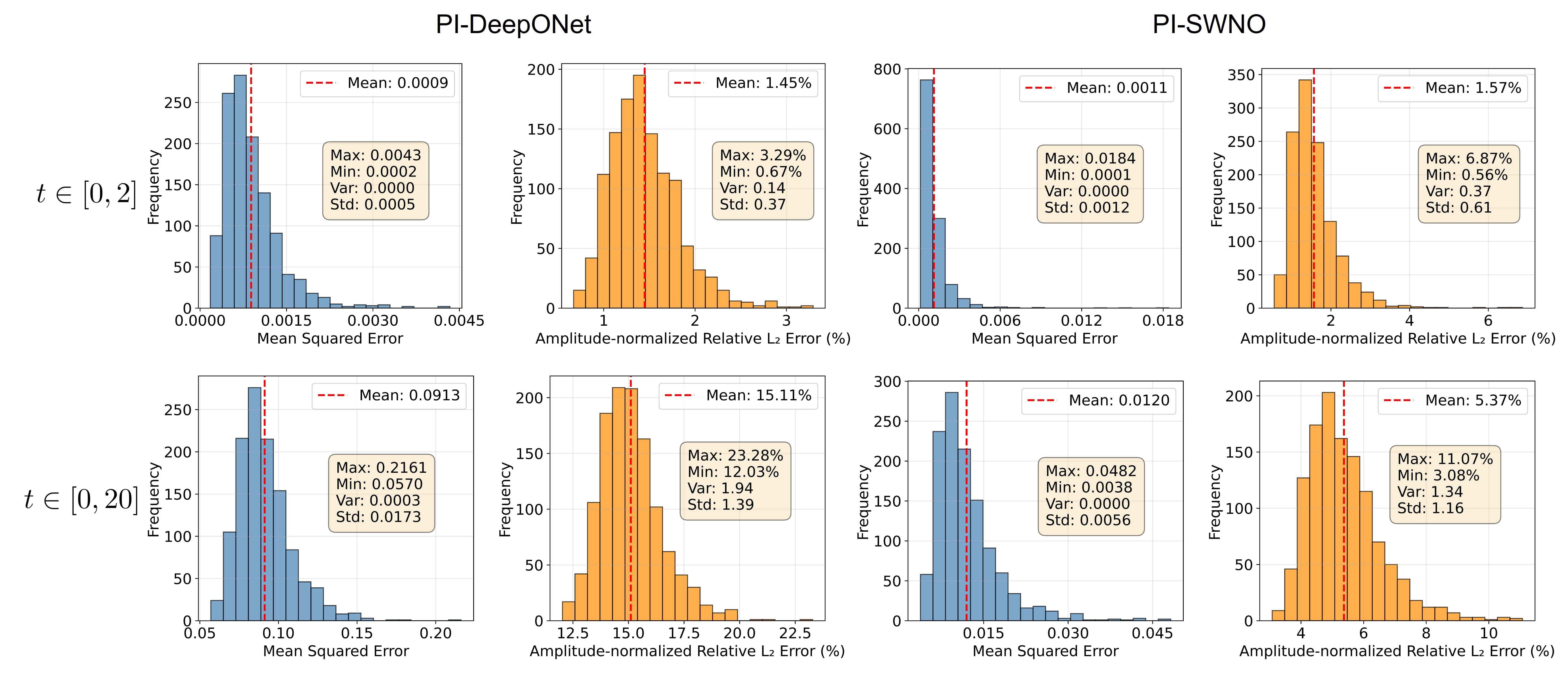}
  \caption{Statistical characteristics of MSE and ANRL2E for PI-DeepONet and PI-SWNO on the 2D wave equation dataset in the short time horizon $[0, 2]$ and long time horizon $[0, 20]$, respectively.}
  \label{fig:wave2d-fig2}
\end{figure}

The above four numerical cases validate the remarkable superiority of the spatiotemporally decoupled architecture of PI-SWNO in solving linear time-dependent PDEs. These linear autonomous PDEs admit solutions that can be rigorously or approximately decomposed into a linear superposition of time-invariant spatial basis functions and time-varying evolution coefficients. The decoupled representation of PI-SWNO, with separate input, spatial, and temporal branches, is mathematically isomorphic to the solution structure of such PDEs, thus enabling higher accuracy and slower error growth over long time horizons.
For strongly nonlinear unsteady PDEs such as the KdV and Burgers equations, the solutions involve competitive effects between nonlinear advection, dispersion, and diffusion terms, with stronger co-evolution of spatiotemporal features. While the universal approximation property of the spatiotemporally decoupled basis functions of the PI-SWNO model is proven in~\ref{appendix:piswno_approximation}, experimental validation is still required to investigate its performance in handling long-time nonlinear PDEs. We therefore design the following three nonlinear benchmark cases.

\subsection{1-D KdV equations}\label{case5}

For our next benchmark, we use the classical KdV equation with periodic boundary conditions:

\begin{equation}
\begin{cases}
&\frac{\partial u}{\partial t}+\eta u \frac{\partial u}{\partial x}+\gamma\frac{\partial^3 u}{\partial x^3}=0, \quad (x,t) \in [0,1] \times (0,1]\\
&u(0,t)=u(1,t), \quad t \in (0,1]  \\
& \frac{\partial u}{\partial x}(0,t)=\frac{\partial u}{\partial x}(1,t), \quad t \in (0,1] \\
&u(x,0)=f(x), \quad x \in (0,1) 
\end{cases}
\end{equation}
where \(\eta=1\) is the nonlinear coefficient, \(\gamma=\num{1e-4}\) is the dispersion coefficient. This equation describes the propagation and interaction of nonlinear waves, with the competition between the nonlinear term and dispersion term determined by \(K=\frac{\eta U L^2}{\gamma}\): \(L\) is the characteristic length of the wavelength and \(U\) is the characteristic amplitude of the wave. Nonlinear effects dominate when \(K \gg 1\), while dispersion effects dominate when \(K \ll 1\). \(u(x,t)\) denotes the wave field to be solved, with periodic boundary conditions. The initial field is generated using a periodic Gaussian random field, constructed as follows: standard Gaussian white noise is generated on a uniform structured grid and transformed to the frequency domain via fast Fourier transform (FFT); a 2D angular wavenumber grid matched to the computational mesh is constructed, and a Gaussian power spectral density matched to the length scale \(\sigma=0.2\) is defined directly in the frequency domain to perform weighted filtering on the frequency domain components of the white noise, with a minimum value constraint introduced to ensure non-negative power spectrum and avoid numerical singularities; the real-valued spatial random field is obtained via inverse 2D FFT. The initial velocity field is uniformly set to 0. A total of 1200 trajectories are generated, with 80\% randomly selected for training and the remaining 20\% for testing. The objective is to learn the operator $\mathcal{G}: u_0(x) \mapsto u(x,t)$ that maps the initial displacement field to the exact solution of the equation. The loss function is formulated as:

\begin{equation}
	\small
\begin{aligned}
	\mathcal{L}(\theta)=&\frac{\alpha_f}{BN_f}\sum_{b=1}^B \sum_{i=1}^{N_f}\left\|\frac{\partial \mathcal{G}_\theta\left(f_{b},x_i,t_i\right)}{\partial t}+\eta \mathcal{G}_\theta\left(f_{b},x_i,t_i\right)\frac{\partial \mathcal{G}_\theta\left(f_{b},x_i,t_i\right)}{\partial x}+\gamma\frac{\partial^3\mathcal{G}_\theta\left(f_{b},x_i,t_i\right)}{\partial x^3}\right\|^2 \\
	+&\frac{\alpha_b}{B N_b} \sum_{b=1}^B \sum_{j=1}^{N_b}\left\|\mathcal{G}_\theta\left(f_{b},0,t_j\right)-\mathcal{G}_\theta\left(f_{b},1,t_j\right)\right\|^2+\left\|\frac{\partial \mathcal{G}_\theta\left(f_{b},0,t_j\right)}{\partial x}-\frac{\partial \mathcal{G}_\theta\left(f_{b},1,t_j\right)}{\partial x} \right\|^2\\
	+&\frac{\alpha_i}{B N_i} \sum_{b=1}^B  \sum_{p=1}^{N_i}\left\|\mathcal{G}_\theta\left(f_{b},x_p,0\right)-f_b(x_p)\right\|^2  \\
	+&\frac{\alpha_d}{B N_d} \sum_{b=1}^B  \sum_{k=1}^{N_d}\left\|\mathcal{G}_\theta\left(f_{b},x_k,t_k\right)-u\left(x_k,t_k\right)\right\|^2 \\
\end{aligned}
\end{equation}
where the meaning of each parameter is consistent with the loss function of the 1D heat conduction case in Eq.~\eqref{eq:loss1}. The weights for each loss term in this case are set to: $\alpha_f=100$, $\alpha_b=10$, $\alpha_i=500$, and $\alpha_d=10$. We randomly sample 8192 points inside the spatiotemporal domain, uniformly select displacement data at 513 points at the initial time as input, randomly sample 200 points on the boundary \(x=0\) and 200 corresponding points on the boundary \(x=1\) to satisfy the periodic condition, and use the displacement at 513 uniformly sampled points at the final time as observation data. For the PI-DeepONet framework, both the branch network and trunk network are set to 4 fully connected layers, with a hidden dimension of 128 for the branch network, 256 for the trunk network, and 64 basis functions and corresponding coefficients. For the PI-SWNO framework, the input branch, temporal branch, and spatial branch are all set to 4 fully connected layers with a hidden dimension of 128, and 64 basis functions and corresponding coefficients. Both models are trained for 20,000 iterations with a batch size of 8.

The core feature of the KdV equation is the competitive effect between the nonlinear advection term and the third-order dispersion term: the nonlinear term tends to steepen the wave front, while the dispersion term tends to broaden the wave packet, and their dynamic balance forms steep gradient structures and solitary wave solutions. Based on the strong spatiotemporal coupling characteristics of nonlinear PDEs, it was theoretically expected that the fitting efficiency of the spatiotemporally decoupled architecture might decrease. However, the experimental results show that PI-SWNO does not exhibit an obvious performance bottleneck: its overall prediction accuracy is comparable to that of PI-DeepONet, and it demonstrates superior fitting capability in local high-gradient regions. As shown in Fig.~\ref{fig:kdv-fig1}, the predicted field of PI-SWNO is consistent with the ground truth in terms of spatial distribution, evolution trend, and amplitude, while the absolute error magnitude of PI-DeepONet is approximately twice that of PI-SWNO.

\begin{figure}[htbp]
	\centering
	\includegraphics[width=0.98\linewidth]{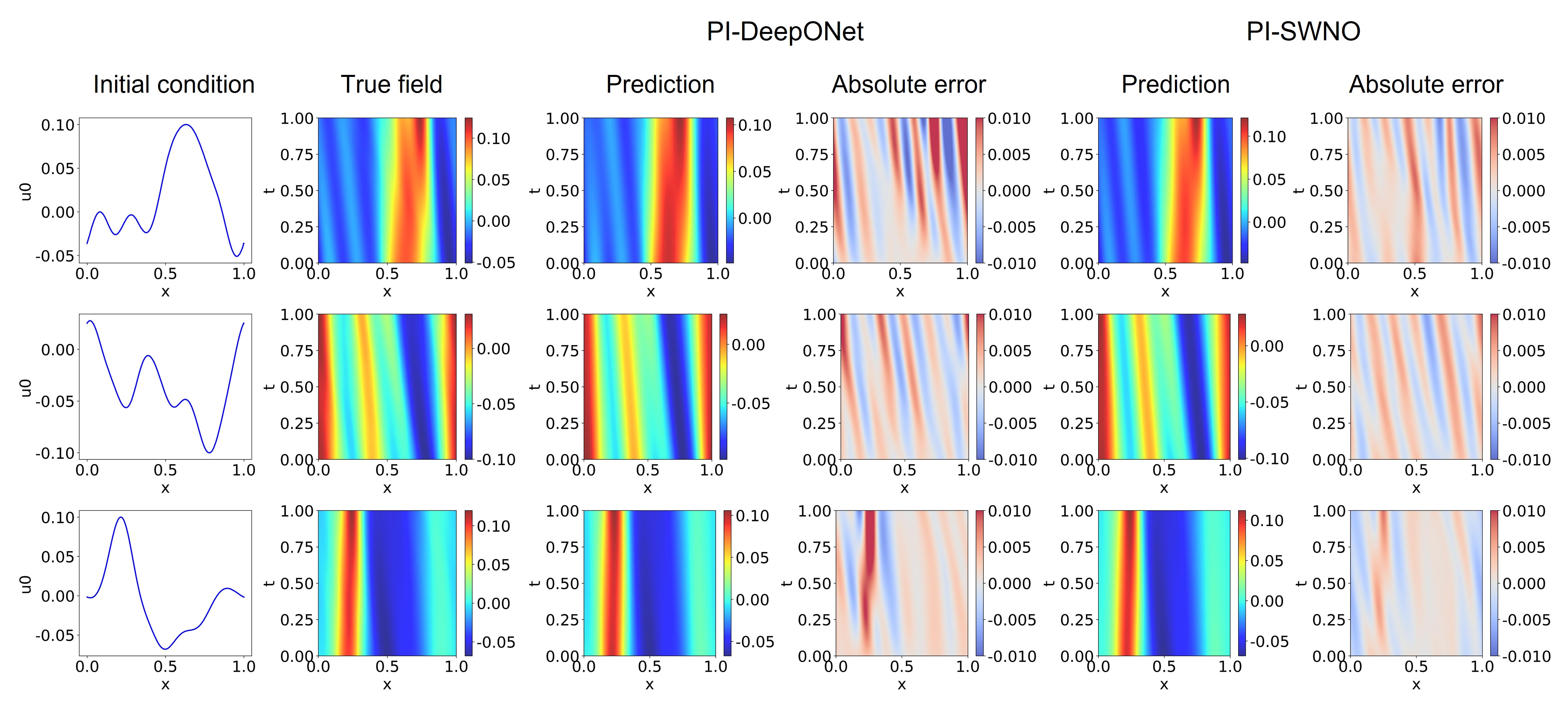}
	\caption{1D KdV equation: The first column shows the randomly sampled initial condition; the second column shows the high-fidelity numerical solution; the third and fourth columns show the predicted solution and its absolute error from PI-DeepONet; the fifth and sixth columns show the predicted solution and its absolute error from PI-SWNO.}
	\label{fig:kdv-fig1}
\end{figure}

\begin{figure}[htbp]
	\centering
	\includegraphics[width=0.85\linewidth]{kdv-fig3.jpg}
	\caption{Comparison of long-time ANRL2E growth trends between the baseline PI-DeepONet and the proposed spatiotemporally decoupled PI-SWNO across expanding temporal domain spans on the 1D KdV equation.}
	\label{fig:kdv-fig3}
\end{figure}

As shown in Fig.~\ref{fig:kdv-fig3}, the prediction errors of both models show a gradual upward trend with the extension of the prediction time horizon, with an obvious temporal error accumulation phenomenon. Although the error accumulation rate of PI-SWNO is slightly faster than that of PI-DeepONet in the strongly spatiotemporally coupled KdV case, and it does not show a stability advantage in long-time prediction, the mean and median errors of PI-SWNO are consistently lower than those of PI-DeepONet across all 5 time domain nodes, maintaining a continuous and stable accuracy advantage. Fig.~\ref{fig:kdv-fig2} shows the statistical characteristics of MSE and ANRL2E for the two models on the 1D KdV equation dataset in the short time horizon $[0, 1]$ and long time horizon $[0, 5]$.

\begin{figure}[htbp]
	\centering
	\includegraphics[width=0.98\linewidth]{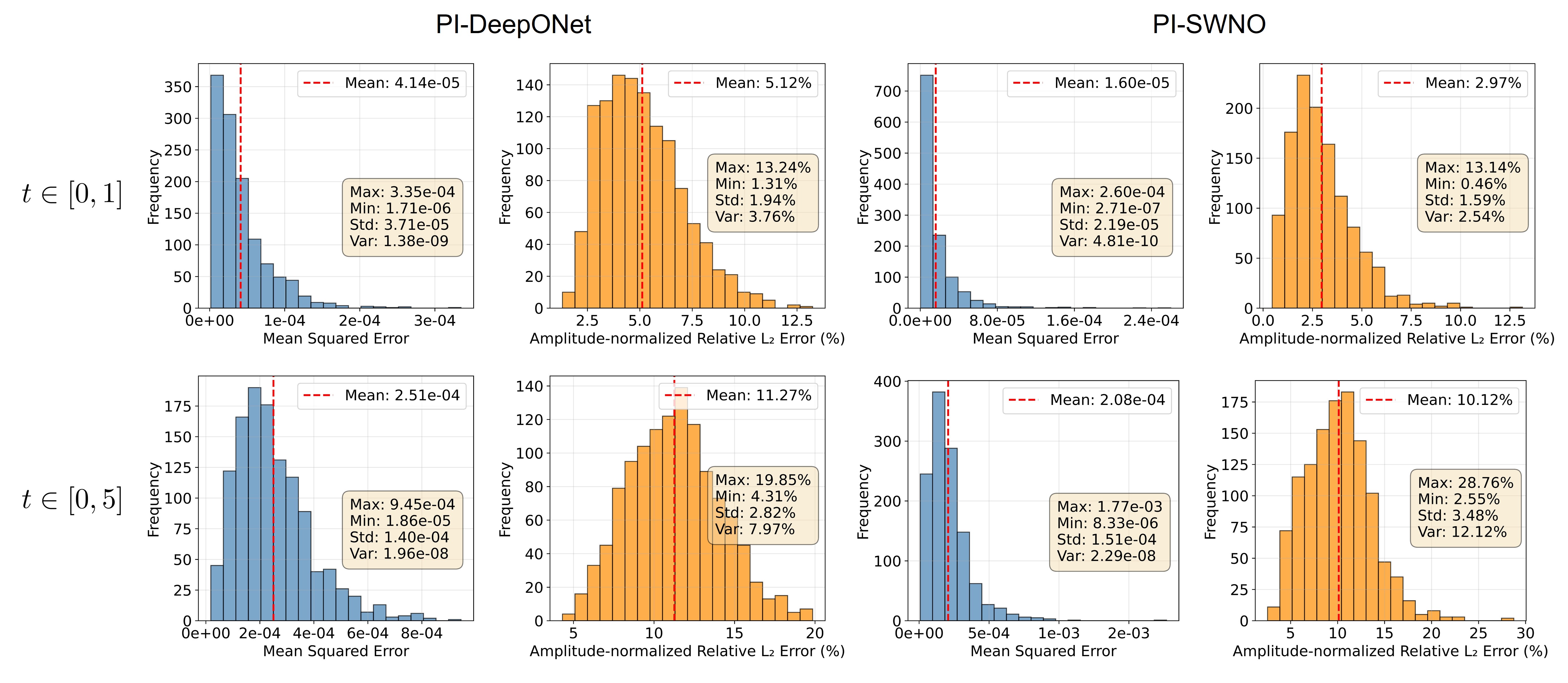}
  \caption{Statistical characteristics of MSE and ANRL2E for PI-DeepONet and PI-SWNO on the 1D KdV equation dataset in the short time horizon $[0, 1]$ and long time horizon $[0, 5]$, respectively.}
  \label{fig:kdv-fig2}
\end{figure}

\subsection{1-D Burgers equations}\label{case6}

Next, we examine a one-dimensional viscous Burgers equation with periodic boundary conditions:

\begin{equation}\label{eq:case5}
\begin{cases}
& \frac{\partial u}{\partial t}+u \frac{\partial u}{\partial x}-\nu \frac{\partial^2 u}{\partial x^2}=0, \quad (x,t) \in[0,1] \times(0,1] \\
& u(0, t)=u(1, t), \quad  t\in (0,1]\\
& \frac{\partial u}{\partial x}(0, t)=\frac{\partial u}{\partial x}(1, t), \quad  t\in (0,1]\\
& u(x, 0)=f(x), \quad x \in(0,1] 
\end{cases}
\end{equation}
where \(u(x,t)\) denotes the velocity field to be solved. This equation describes the long-term competitive effect between the nonlinear advection term and the viscous diffusion term, with the dominant mechanism distinguished by the Reynolds number \(Re=\frac{UL}{\nu}\): \(U\) is the characteristic velocity, \(L\) is the characteristic length, and \(\nu=0.01\) is the viscosity coefficient. Advection dominates when \(Re \gg 1\), while diffusion dominates when \(Re \ll 1\). Periodic boundary conditions are set, the initial velocity field is generated using a periodic Gaussian random field, and the initial acceleration field is set to 0. A total of 1200 trajectories are generated, with 80\% randomly selected for training and the remaining 20\% for testing. The objective is to learn the operator $\mathcal{G}: u_0(x) \mapsto u(x,t)$ that maps the initial displacement field to the exact solution of the equation. The loss function is formulated as:

\begin{equation}\label{eq:loss5}
	\small
\begin{aligned}
	\mathcal{L}(\theta)=&\frac{\alpha_f}{BN_f}\sum_{b=1}^B \sum_{i=1}^{N_f}\left\|\frac{\partial \mathcal{G}_\theta\left(f_{b},x_i,t_i\right)}{\partial t}+ \mathcal{G}_\theta\left(f_{b},x_i,t_i\right)\frac{\partial \mathcal{G}_\theta\left(f_{b},x_i,t_i\right)}{\partial x}-\nu\frac{\partial^2\mathcal{G}_\theta\left(f_{b},x_i,t_i\right)}{\partial x^2}\right\|^2 \\
	+&\frac{\alpha_b}{B N_b} \sum_{b=1}^B \sum_{j=1}^{N_b}\left\|\mathcal{G}_\theta\left(f_{b},0,t_j\right)-\mathcal{G}_\theta\left(f_{b},1,t_j\right)\right\|^2+\left\|\frac{\partial \mathcal{G}_\theta\left(f_{b},0,t_j\right)}{\partial x}-\frac{\partial \mathcal{G}_\theta\left(f_{b},1,t_j\right)}{\partial x} \right\|^2\\
	+&\frac{\alpha_i}{B N_i} \sum_{b=1}^B  \sum_{p=1}^{N_i}\left\|\mathcal{G}_\theta\left(f_{b},x_p,0\right)-f_b(x_p)\right\|^2  \\
	+&\frac{\alpha_d}{B N_d} \sum_{b=1}^B  \sum_{k=1}^{N_d}\left\|\mathcal{G}_\theta\left(f_{b},x_k,t_k\right)-u\left(x_k,t_k\right)\right\|^2 \\
\end{aligned}    
\end{equation}
where the meaning of each parameter is consistent with Eq.~\eqref{eq:loss1}. The weights for each loss term in this case are set to: $\alpha_f=100$, $\alpha_b=50$, $\alpha_i=500$, and $\alpha_d=200$. We randomly sample 4800 points inside the spatiotemporal domain, uniformly select velocity data at 101 points at the initial time as input, randomly sample 256 points on the boundary \(x = 0\) and 256 corresponding points on the boundary \(x = 1\) to satisfy the periodic condition, and use the displacement at 101 uniformly sampled points at the final time as observation data. For the PI-DeepONet framework, both the branch network and trunk network are set to 4 fully connected layers with a hidden dimension of 64, and 32 basis functions and corresponding coefficients. For the PI-SWNO framework, the input branch, temporal branch, and spatial branch are all set to 4 fully connected layers, with a hidden dimension of 64 for the input branch, 32 for the temporal and spatial branches, and 32 basis functions and corresponding coefficients. Both models are trained for 50,000 iterations with a batch size of 128.

\begin{figure}[htbp]
	\centering
	\includegraphics[width=0.98\linewidth]{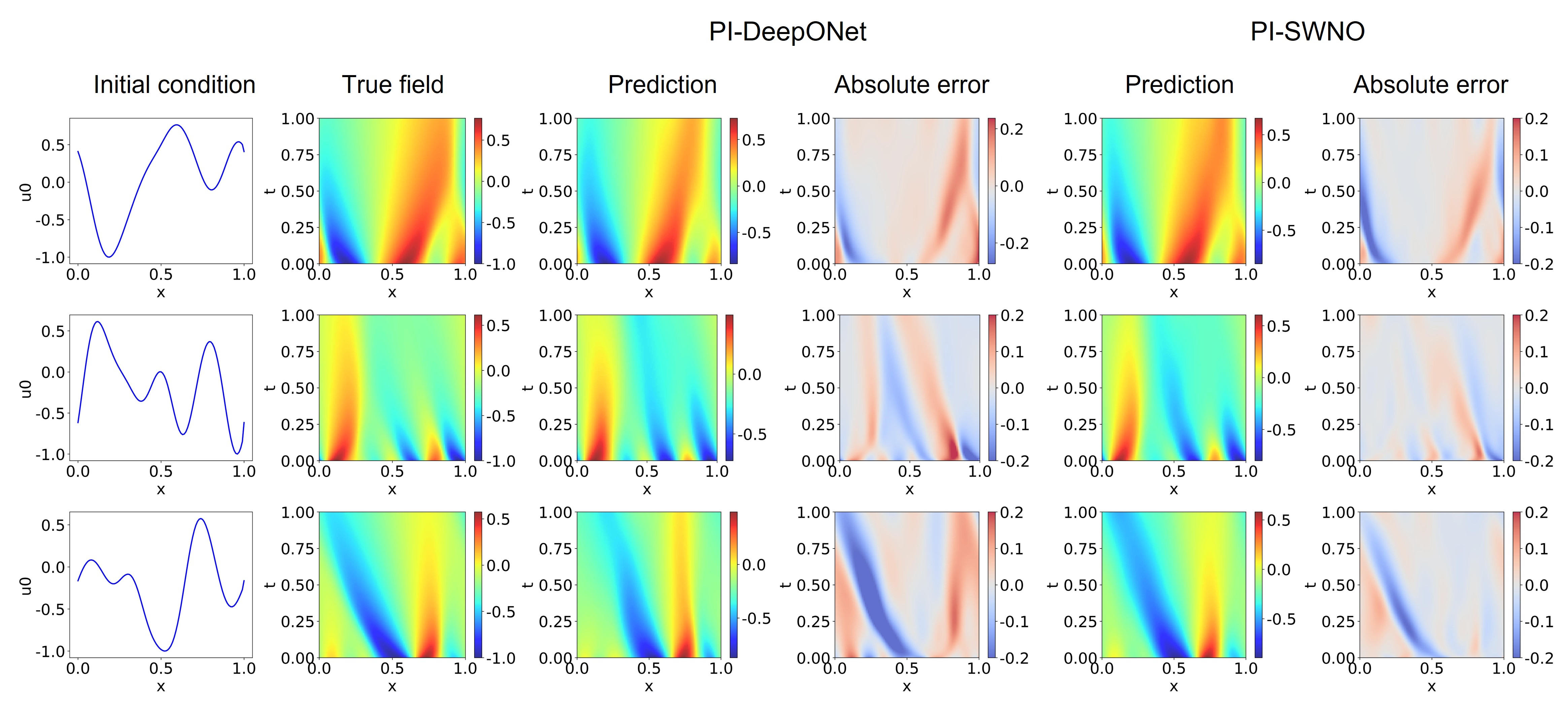}
	\caption{1D Burgers equation: The first column shows the randomly sampled initial condition; the second column shows the high-fidelity numerical solution; the third and fourth columns show the predicted solution and its absolute error from PI-DeepONet; the fifth and sixth columns show the predicted solution and its absolute error from PI-SWNO.}
	\label{fig:burgers1d-fig1}
\end{figure}

\begin{figure}[htbp]
	\centering
	\includegraphics[width=0.85\linewidth]{burgers1d-fig3.jpg}
	\caption{Comparison of long-time ANRL2E growth trends between the baseline PI-DeepONet and the proposed spatiotemporally decoupled PI-SWNO across expanding temporal domain spans on the 1D Burgers equation.}
	\label{fig:burgers1d-fig3}
\end{figure}

\begin{figure}[htbp]
  \centering
  \includegraphics[width=0.98\linewidth]{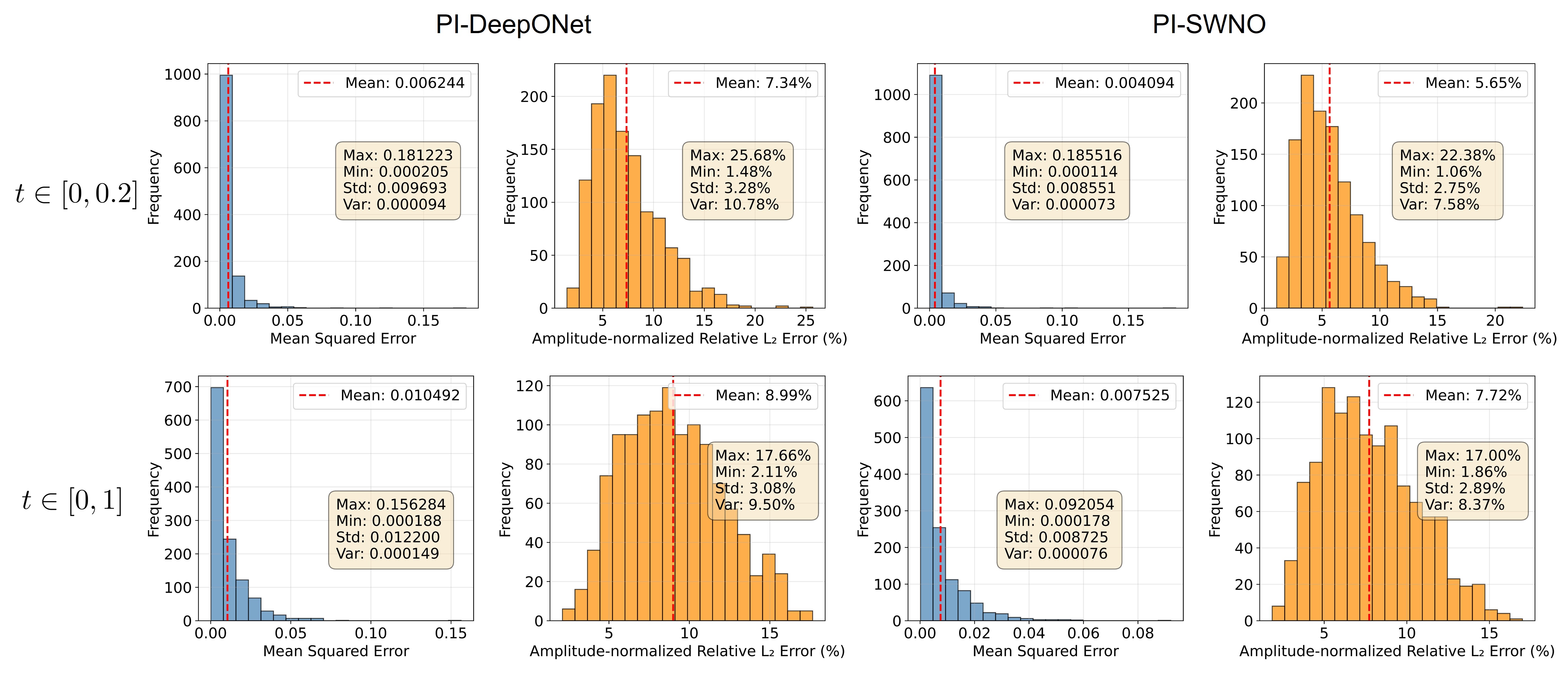}
  \caption{Statistical characteristics of MSE and ANRL2E for PI-DeepONet and PI-SWNO on the 1D Burgers equation dataset in the short time horizon $[0, 0.2]$ and long time horizon $[0, 1]$, respectively.}
  \label{fig:burgers1d-fig2}
\end{figure}

As shown in Fig.~\ref{fig:burgers1d-fig1}, the solution of the equation forms an obliquely propagating shock structure over time, with high-gradient regions concentrated at the wave front. Both models can accurately predict the spatiotemporal evolution of the velocity field, but exhibit varying degrees of morphological deviation and structural blurring, especially in local regions with large gradients. Contrary to expected accuracy degradation, PI-SWNO does not show deterioration in its ability to capture steep gradients; instead, its overall absolute error is lower, it has stronger capture capability for steep structures, and its prediction results have better consistency and stability. 

As shown in Fig.~\ref{fig:burgers1d-fig3}, the PI-SWNO model does not exhibit rapid error growth with time extension when handling strong spatiotemporally coupled shock evolution problems, and slightly outperforms PI-DeepONet at every time domain length. As shown in Fig.~\ref{fig:burgers1d-fig2}, the mean MSE of the PI-SWNO model is 44.5\% lower than that of PI-DeepONet, and the mean ANRL2E is reduced by 26.1\%, indicating that it not only achieves higher average accuracy but also performs more stably across samples with different initial conditions, with stronger generalization ability.

\subsection{2-D Burgers equations}\label{case7}

Our final benchmark is the two-dimensional viscous Burgers equation with complex shock interactions:

\begin{equation}
	\small
\begin{cases}
& \frac{\partial u}{\partial t} + u (\frac{\partial u}{\partial x} + \frac{\partial u}{\partial y}) - \nu \left( \frac{\partial^2 u}{\partial x^2} + \frac{\partial^2 u}{\partial y^2} \right) = 0, \quad (x, y, t) \in [0, 1] \times [0, 1] \times (0, 1] \\
& u(0, y, t) = u(1, y, t), \quad \frac{\partial u}{\partial x}(0, y, t) = \frac{\partial u}{\partial x}(1, y, t), \quad  (y,t) \in[0,1] \times(0,1]\\
& u(x, 0, t) = u(x, 1, t), \quad \frac{\partial u}{\partial y}(x, 0, t) = \frac{\partial u}{\partial y}(x, 1, t), \quad  (x,t) \in[0,1] \times(0,1]\\
& u(x, y, 0) = f(x, y), \quad (x, y) \in [0, 1] \times [0, 1]
\end{cases}
\end{equation}
where the meaning of each parameter is consistent with Eq.~\eqref{eq:case5}, the initial velocity field is generated using a periodic Gaussian random field, and the initial acceleration field is uniformly set to 0. A total of 1200 trajectories are generated, with 80\% randomly selected for training and the remaining 20\% for testing. The objective is to learn the operator \(\mathcal{G}:u_0(x,y) \mapsto u(x,y,t)\) that maps the initial velocity field to the exact solution of the equation, with the loss function referring to Eq.~\eqref{eq:loss5}.

The weights for each loss term in this case are set to: $\alpha_f=50$, $\alpha_b=20$, $\alpha_i=150$, and $\alpha_d=5$. We randomly sample 8192 points inside the spatiotemporal domain, uniformly select velocity data at 676 points at the initial time as input, randomly sample 1024 points on each of the two boundary surfaces \(x=0\) and \(y=0\) and 1024 corresponding points on the boundary surfaces \(x=1\) and \(y=1\)to satisfy the periodic condition, and use the velocity data at 676 uniformly sampled points at the final time as observation data. For the PI-DeepONet framework, both the branch network and trunk network are set to 6 fully connected layers, with a hidden dimension of 100 for the branch network, 200 for the trunk network, and 64 basis functions and corresponding coefficients. For the PI-SWNO framework, the input branch, temporal branch, and spatial branch are all set to 6 fully connected layers with a hidden dimension of 100, and 64 basis functions and corresponding coefficients. Both models are trained for 50,000 iterations with a batch size of 8.

\begin{figure}[htbp]
	\centering
	\includegraphics[width=0.98\linewidth]{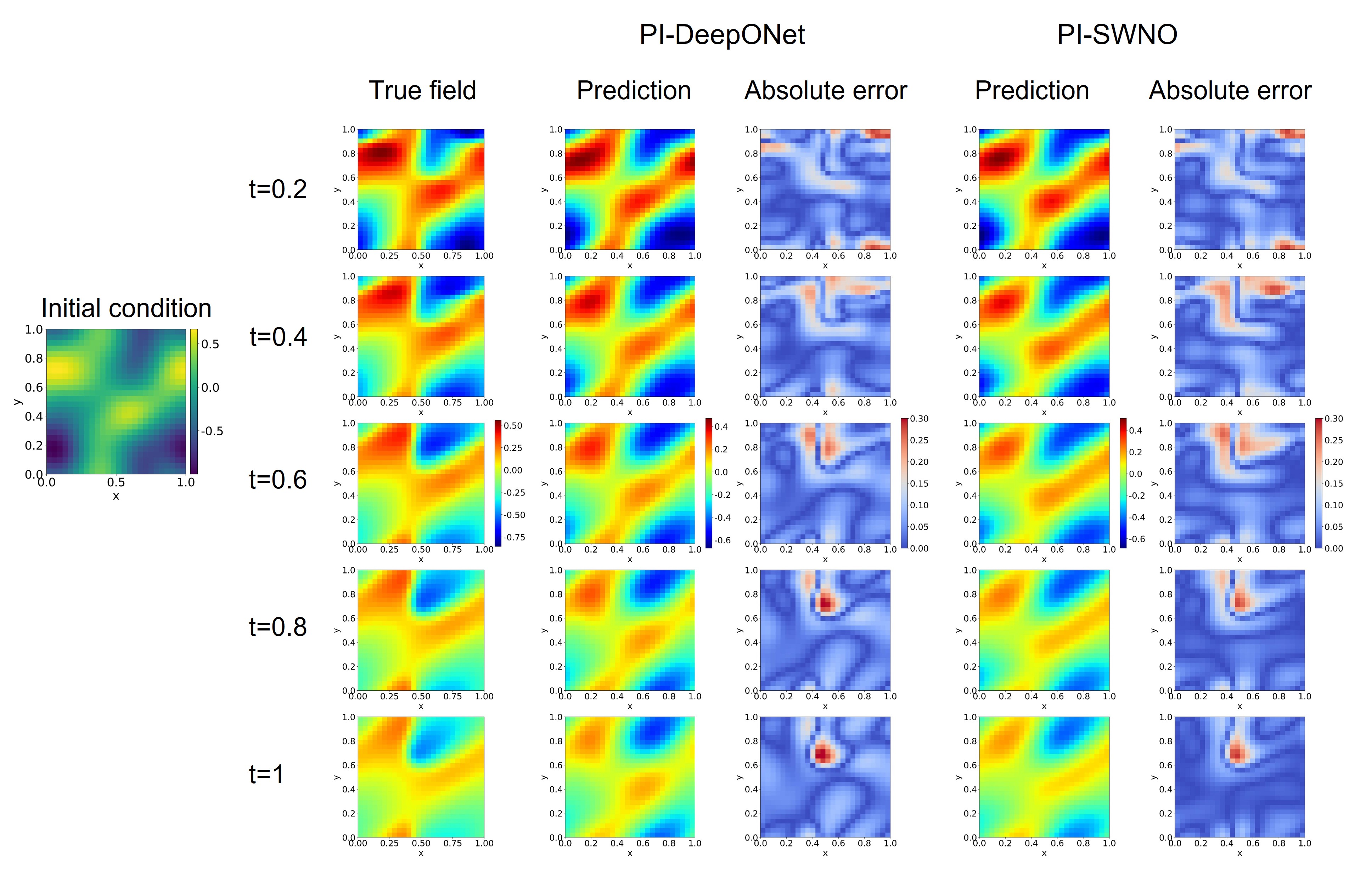}
	\caption{2D Burgers equation: The first plot shows the randomly sampled initial condition; the second column shows 2D slices of the high-fidelity numerical solution at $t = 0.2, 0.4, 0.6, 0.8, 1$; the third and fourth columns show the predicted solution and its absolute error from PI-DeepONet; the fifth and sixth columns show the predicted solution and its absolute error from PI-SWNO.}
	\label{fig:burgers2d-fig1}
\end{figure}

The 2D shock structure is more complex with higher spatiotemporal coupling strength. As shown in Fig.~\ref{fig:burgers2d-fig1}, both models can roughly capture the distribution of high-value and low-value regions and the overall evolution trend of the velocity field, but have significant defects in detailed reconstruction. Neither model can perfectly reproduce the fine wave front structure in the ground truth, especially in regions with drastic gradient changes. A large error band appears at the wave front position, indicating that the model has difficulty accurately locking the position of shocks or steep gradients when handling strong nonlinear advection terms, leading to phase deviation or amplitude attenuation. In comparison, the prediction accuracy of PI-SWNO is slightly improved, with a smaller range and lower intensity of high-error regions in the absolute error map.

\begin{figure}[htbp]
	\centering
	\includegraphics[width=0.85\linewidth]{burgers2d-fig3.jpg}
	\caption{Comparison of long-time ANRL2E growth trends between the baseline PI-DeepONet and the proposed spatiotemporally decoupled PI-SWNO across expanding temporal domain spans on the 2D Burgers equation.}
	\label{fig:burgers2d-fig3}
\end{figure}

As shown in Fig.~\ref{fig:burgers2d-fig3}, the mean error of PI-SWNO is lower than that of PI-DeepONet across all time domain lengths, with the overall boxplot of PI-SWNO below that of PI-DeepONet and a narrower extreme value range. This indicates that PI-SWNO has smaller error dispersion, does not exhibit the theoretically expected performance degradation, and instead maintains more stable solution accuracy.

\begin{figure}[htbp]
  \centering
  \includegraphics[width=0.98\linewidth]{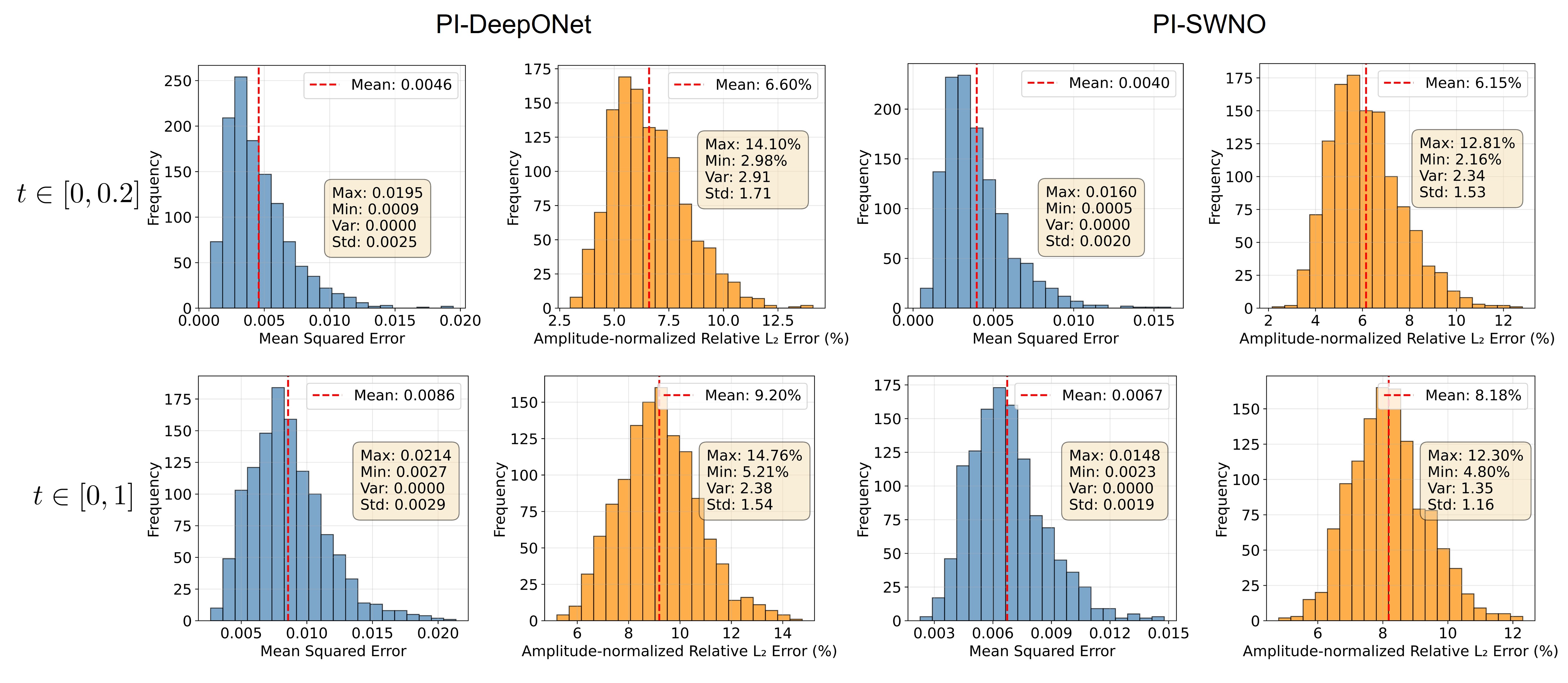}
  \caption{Statistical characteristics of MSE and ANRL2E for PI-DeepONet and PI-SWNO on the 2D Burgers equation dataset in the short time horizon $[0, 0.2]$ and long time horizon $[0, 1]$, respectively.}
  \label{fig:burgers2d-fig2}
\end{figure}

Fig.~\ref{fig:burgers2d-fig2} shows the error statistics of the two models in the short time horizon \(t \in [0,0.2]\) and long time horizon \(t \in [0,1]\), respectively. PI-SWNO slightly outperforms PI-DeepONet in all error metrics, with overall comparable accuracy between the two models.

Table~\ref{tab:Allcases} summarizes the core performance metrics of PI-DeepONet and PI-SWNO on all benchmark cases in this paper, covering four core dimensions: model scale, memory overhead, training efficiency, and prediction accuracy.

\begin{table}[htbp]
	\centering
	\small
	\setlength{\tabcolsep}{3pt}
	\caption{Comprehensive performance comparison between PI-DeepONet and PI-SWNO on all partial differential equation benchmark problems. The bold values represent better model performance.}
	\label{tab:Allcases}
	\begin{tabular}{cccccccc}
		\toprule
		\multirow{2}{*}{Case} & \multirow{2}{*}{Model} & \multirow{2}{*}{Parameters} &\multirow{2}{*}{Memory}&\multirow{2}{*}{Training time} & \multicolumn{3}{c}{ANRL2E(\%)} \\
		\cmidrule(lr){6-8}
		& & & (MB)&(s/epoch)&min&max&mean\\
		\midrule
		\multirow{2}{*}{1D HC} & PI-DeepONet &13,056 &$\bold{2,421}$ &$\bold{0.91}$ &3.47 &13.35 &7.91  \\
		& PI-SWNO &$\bold{10,530}$ &2,731 &1.52 &$\bold{3.37}$ &$\bold{5.15}$&$\bold{4.65}$   \\
		\midrule
		\multirow{2}{*}{2D HC} & PI-DeepONet &339,125 &$\bold{12,179}$ &$\bold{3.12}$ &$\bold{0.72}$ &13.49&4.99  \\
		& PI-SWNO & $\bold{258,486}$&14,505 & 3.69& 0.99 &$\bold{4.02}$&$\bold{2.49}$   \\
		\midrule
		\multirow{2}{*}{1D Wave} & PI-DeepONet & 35,840& $\bold{4,295}$&$\bold{1.11}$ &4.96 &25.90&16.66  \\
		& PI-SWNO &$\bold{33,890}$ &6,697 & 2.20&$\bold{3.74}$ &$\bold{8.46}$&$\bold{5.84}$   \\
        \midrule
		\multirow{2}{*}{2D Wave} & PI-DeepONet & 339,125& $\bold{14,595}$ &$\bold{3.69}$ & $\bold{1.45}$ &15.11 &8.20  \\
		& PI-SWNO & $\bold{258,486}$ &15,879 &4.60 &1.57 &$\bold{7.75}$ &$\bold{4.00}$   \\
		\midrule
		\multirow{2}{*}{1D KdV} & PI-DeepONet &338,176 &$\bold{11,195}$ &$\bold{3.53}$ &5.12 &11.27&7.98  \\
		& PI-SWNO &$\bold{264,578}$ &13,197 &4.13 &$\bold{2.97}$ &$\bold{10.12}$&$\bold{6.54}$   \\
		\midrule
		\multirow{2}{*}{1D Burgers} & PI-DeepONet &35,840 &$\bold{9,107}$ &$\bold{0.78}$ &7.30 &8.92&8.41  \\
		& PI-SWNO &$\bold{33,890}$ &14,977 &1.41 & $\bold{5.65}$&$\bold{7.83}$&$\bold{7.09}$  \\
		\midrule
		\multirow{2}{*}{2D Burgers} & PI-DeepONet &339,125 &$\bold{10,507}$ &$\bold{2.66}$ & 6.60 & 10.19 & 8.99  \\
		& PI-SWNO &$\bold{258,486}$ &11,369 &3.63 & $\bold{6.15}$ & $\bold{9.85}$ & $\bold{8.38}$   \\
		\bottomrule
	\end{tabular}
\end{table}

\subsection{Validation of the time-marching batch-wise sampling strategy}

To validate the effectiveness of the time-marching batch-wise sampling strategy, we perform an ablation study on this strategy using the PI-SWNO model in this section. Based on four benchmark cases---the 1D/2D heat conduction (HC) equations and 1D/2D wave equations---we compare the error convergence curves during model training with and without the proposed strategy, and systematically investigate how different numbers of intervals affect prediction accuracy, convergence rate, and memory footprint.

As shown in Fig.~\ref{fig:sample-fig1}, the proposed strategy delivers varying degrees of performance gains across all four benchmark cases. Relative to the baseline control group without the strategy, all experimental setups with interval partitioning (intervals $\geq$ 2) achieve consistent improvements in both convergence rate and final prediction accuracy. A clear positive correlation is observed between model convergence behavior and the number of time intervals. Across all four cases, the configuration with intervals=10 yields the best convergence performance: by the end of 10,000 training iterations, it achieves the largest reduction in Amplitude-Normalized Relative $L_2$ Error (ANRL2E) compared with the baseline. The relative error reductions reach approximately 49.2\% for the 1D heat conduction case, 66.3\% for the 2D heat conduction case, 51.1\% for the 1D wave equation case, and 35.2\% for the 2D wave equation case. Taking the 1D heat conduction case as an example, the model equipped with the proposed strategy and intervals=10 reduces the ANRL2E to around 8\% within only 1,000 epochs, whereas the baseline model requires 5,000 epochs to reach the same error level. For the 2D heat conduction case, the strategy accelerates convergence even more prominently: a larger number of intervals leads to a markedly faster error decay and higher final prediction accuracy. Moreover, the 2D heat conduction case exhibits stronger sensitivity to the number of intervals, such that increasing intervals brings continuous gains in both convergence speed and accuracy.

\begin{figure}[htbp]
	\centering
	\includegraphics[width=0.98\linewidth]{sample-fig1.jpg}
	\caption{Ablation study of the time‑stepping batch-wise sampling strategy: We validate the convergence curves of ANRL2E with/without the proposed strategy and under different interval settings across four benchmark cases, including the 1D/2D HC equations and 1D/2D wave equations.}
	\label{fig:sample-fig1}
\end{figure}

Table~\ref{tab:interval_strategy_error} quantifies the final ANRL2E values across the four benchmark cases after 10,000 training iterations, under different numbers of time intervals. From the perspective of parameter sensitivity, model prediction accuracy generally rises with the number of intervals, accompanied by diminishing marginal gains. The most pronounced error reduction occurs when increasing intervals from 1 to 4; for the 2D heat conduction case, this stage alone delivers a relative error drop of 56.8\%, capturing the core performance improvement. Beyond 4 intervals, the accuracy gain per additional interval gradually tapers off. Furthermore, the strategy shows case-dependent effectiveness for different types of time-dependent PDEs: it delivers the strongest optimization for the 2D heat conduction case, while still providing meaningful improvements even for strongly coupled systems such as the 2D wave equation.

\begin{table}[htbp]
	\centering
	\caption{Amplitude-normalized relative $L_2$ error after 10000 training epochs of 4 numerical cases under different numbers of intervals. Intervals=1 indicates baseline, where the time-marching batch-wise sampling strategy is not applied. The bold values represent better model performance.}
	\label{tab:interval_strategy_error}
	\begin{tabular}{ccccc}
		\toprule
		\multirow{2}{*}{Intervals} & \multicolumn{4}{c}{ANRL2E(\%)} \\
		\cmidrule(lr){2-5}
		  & 1D HC & 2D HC & 1D wave & 2D wave \\
        \midrule
		1 &7.11  &   10.19 &  5.68    & 4.69  \\
		2 &6.00  &   7.90 & 5.46  &  4.66 \\
		4 &4.59  &   4.40  & 4.51 &  3.87 \\
		6 &4.82 &    4.21  & 3.50 &  3.59  \\
		8 &4.18 &    3.74  & 3.37 &   3.11 \\
		10 & $\bold{3.61}$ &  $\bold{3.43}$  & $\bold{2.78}$ &  $\bold{3.04}$    \\
		\bottomrule
	\end{tabular}
\end{table}

As presented in Table~\ref{tab:interval_strategy_memory}, training memory overhead decreases monotonically with the number of intervals. The memory saving is most pronounced when increasing intervals from 1 to 4, after which the rate of reduction slows down. At intervals=10, the memory compression ratios relative to the baseline reach 54.2\% (1D heat conduction), 54.2\% (2D heat conduction), 77.8\% (1D wave equation), and 53.3\% (2D wave equation). The 1D wave case achieves the highest memory saving, confirming the diminishing marginal returns of the proposed batch sampling strategy for video memory optimization.

\begin{table}[htbp]
	\centering
	\caption{Memory footprint of 4 numerical cases under different numbers of intervals. intervals=1 indicates baseline, where the time-marching batch-wise sampling strategy is not applied. The bold values represent better model performance.}
	\label{tab:interval_strategy_memory}
	\begin{tabular}{ccccc}
		\toprule
		\multirow{2}{*}{Intervals} & \multicolumn{4}{c}{Memory(MB)  } \\
		\cmidrule(lr){2-5}
		  & 1D HC & 2D HC & 1D wave & 2D wave \\
		\midrule
		1  & 2,722  & 21,495 & 6,688 &  15,879  \\
		2  & 1,906  &  16,233  & 2,469  &12,134  \\
		4  & 1,490 &  11,803  &2,042 &    8,914  \\
		6  & 1,360  &  10,957   &1,711&   8,252  \\
		8  & 1,290  &   9,935 &1,541&    7,512   \\
		10 & $\bold{1,248}$  &  $\bold{9,835}$   &$\bold{1,487}$&     $\bold{7,417}$   \\
		\bottomrule
	\end{tabular}
\end{table}

\section{Summary and Discussion}
\label{sec5}

This paper addresses three core challenges faced by fixed-parameter neural operators in solving long-time time-dependent PDEs: the non-decreasing growth of fitting error, error accumulation in spatiotemporally coupled architectures, and the memory bottleneck in long-time domain training. Based on the inherent physical property of unsteady systems—time-invariant spatial basis functions plus time-varying evolution coefficients—and the Stone-Weierstrass approximation theorem, we propose the spatiotemporally decoupled physics-informed Stone-Weierstrass neural operator (PI-SWNO), and design a time-marching batch-wise sampling strategy. Through systematic experiments on 7 representative time-dependent PDE benchmark cases, we validate the performance of the proposed method in terms of solution accuracy, long-time domain stability, computational efficiency, and framework universality. The core contributions and research conclusions of this paper are as follows:
In terms of theoretical foundation, we first establish the non-decreasing fitting error theorem for fixed-parameter neural operators, rigorously proving that the global minimum fitting error of a fixed-parameter neural operator is non-decreasing with the expansion of the time interval. This clarifies that the core scientific problem of long-time neural operator modeling is to control the growth rate of error with the time interval, providing core theoretical support for subsequent method design.

In terms of model architecture, we propose the spatiotemporally decoupled PI-SWNO architecture, which decomposes the coupled spatiotemporal trunk network of conventional DeepONet into three independent subnetworks: an input branch, a spatial branch, and a temporal branch. The experimental results show that, through independent encoding of spatial and temporal features, this architecture effectively avoids spatiotemporal feature aliasing, phase drift, and mode aliasing over long time horizons: in the 100 s long time horizon test of 1D/2D heat conduction equations, PI-DeepONet exhibits significant nonlinear error growth with a maximum mean relative error exceeding 13\%, while the mean error of PI-SWNO remains stable in the range of 3\% to 5\% with no error explosion phenomenon. In the long-time periodic prediction of the 1D wave equation, PI-DeepONet undergoes abrupt catastrophic divergence between $[0, 8]$ and $[0, 10]$, with the median error jumping from 5\% to over 25\%, while PI-SWNO maintains smooth error growth with no abrupt changes throughout the time horizon, demonstrating extremely strong robustness in long-time prediction. For strongly nonlinear PDEs, PI-SWNO also achieves improved accuracy compared with the conventional spatiotemporally coupled DeepONet architecture.

In terms of training strategy, to address the core conflict between memory footprint control and global solution consistency in long-time PDE training, we propose a time-marching batch-wise sampling strategy, constructing a two-stage iterative paradigm of staged local optimization plus global consistency calibration. The ablation study results show that the strategy has strong framework universality and can be seamlessly integrated into two mainstream physics-informed neural operator frameworks, PI-SWNO and PI-DeepONet. When the number of time blocks increases to 10, the training memory footprint of PI-DeepONet and PI-SWNO is reduced by 49.5\% and 54.2\%, respectively, while the model convergence speed is accelerated with no loss in final solution accuracy. This strategy provides an efficient training mode for solving time-dependent PDEs with large time spans and high spatial dimensions.

In terms of experimental validation, we systematically evaluate the performance of PI-SWNO through 7 representative time-dependent PDE benchmark cases, clarifying its advantage boundaries and applicable scenarios. For long-time prediction tasks of linear autonomous PDEs such as parabolic heat conduction equations and hyperbolic wave equations, the architecture of PI-SWNO is naturally compatible with the solution structure of PDEs, with significantly better prediction accuracy and long-time domain stability than PI-DeepONet. For the 1D KdV equation with strong nonlinearity and high-order derivatives, and the 1D/2D Burgers equations with shock structures, PI-SWNO does not exhibit the theoretically expected performance bottleneck, and its ability to capture steep gradients and nonlinear structures is still superior to PI-DeepONet. However, due to the more complex solution structure and higher spatiotemporal coupling degree of nonlinear PDEs, PI-SWNO does not show an advantage in the long-time error growth rate.

Our work addresses a critical gap in the field of neural operators for long-time prediction of time-dependent PDEs, and offers a powerful tool for studying complex dynamical systems with long-range spatiotemporal dependencies. The PI-SWNO framework offers the potential to develop physics-informed models for both complex natural phenomena and digital twins of engineering systems. For instance, by integrating a priori physical knowledge, the proposed model can effectively describe and forecast extreme environmental conditions across a given region, thereby facilitating preventive warnings for practical engineering scenarios.

\appendix

\section{A proof of the non-decreasing fitting error theorem for fixed-parameter neural operators} \label{wucha_proof}

This appendix proves the non-decreasing fitting error theorem for fixed-parameter neural operators presented in Section~\ref{wucha}. Notation and definitions follow those in Section~\ref{wucha}.
\begin{theorem}[Non-Decreasing Fitting Error for Fixed-Parameter Neural Operators]
Suppose the following two assumptions hold:
\begin{enumerate}
    \item The true solution of the spatiotemporal system satisfies \(u \in L^2_{\text{loc}}(\Omega \times  \mathcal{T} )\);
    \item For any fixed \(\theta \in \Theta\), the mapping \(\tau \mapsto \left\| \mathcal{G}_\theta(\cdot, \tau) - u(\cdot, \tau) \right\|_{L^2(\Omega)}^2\) is Lebesgue measurable, and the integral \(J_t(\theta)\) is finite for any finite \(t>0\).
\end{enumerate}
Then the global minimum fitting error \(L_t\) is a non-decreasing function of time \(t\). That is, for any \(0 \leq t_1 < t_2 < +\infty\),
\[
    L_{t_1} \leq L_{t_2}
\]
\end{theorem}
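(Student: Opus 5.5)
The proof rests on a pointwise comparison of the error functionals for each fixed parameter, which then passes to the infimum. Fix $0 \le t_1 < t_2 < +\infty$ and an arbitrary $\theta \in \Theta$. Set
\[
e_\theta(\tau) = \left\| \mathcal{G}_\theta(\cdot,\tau) - u(\cdot,\tau)\right\|_{L^2(\Omega)}^2 .
\]
By the second premise, $e_\theta$ is Lebesgue measurable on $\mathcal{T}$, and both $J_{t_1}(\theta)$ and $J_{t_2}(\theta)$ are finite. The first premise ensures that $u(\cdot,\tau)\in L^2(\Omega)$ for almost every $\tau$, so $e_\theta$ is well defined almost everywhere. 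Since $e_\theta \ge 0$, additivity of the Lebesgue integral over the disjoint decomposition $[0,t_2] = [0,t_1] \cup (t_1,t_2]$ gives
\[
J_{t_2}(\theta) = J_{t_1}(\theta) + \int_{t_1}^{t_2} e_\theta(\tau)\,\mathrm{d}\tau \;\ge\; J_{t_1}(\theta).
\]
The endpoint $t_1$ has measure zero, so whether it is counted on the left or the right does not matter. The degenerate case $t_1 = 0$ is included, with $J_0(\theta) = 0$.

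Next, pass to the infimum. For every $\theta \in \Theta$ we have $L_{t_1} = \inf_{\vartheta} J_{t_1}(\vartheta) \le J_{t_1}(\theta) \le J_{t_2}(\theta)$. Hence $L_{t_1}$ is a lower bound for the set $\{J_{t_2}(\theta) : \theta\in\Theta\}$. By the definition of the infimum as the greatest lower bound, $L_{t_1} \le L_{t_2}$.

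No step is a genuine obstacle. The only point that needs care is measure-theoretic bookkeeping. We must justify that $e_\theta$ is nonnegative and measurable, so that the tail integral exists in $[0,\infty]$ and is nonnegative; this is guaranteed by the premises. We must also remember that the comparison is made for the same $\theta$ on both intervals, which is legitimate because the parameter space $\Theta$ is independent of $t$. It is this fixed-parameter assumption that makes the argument work: a near-minimizer for $[0,t_2]$ is automatically admissible for $[0,t_1]$.

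Finally, the result needs no attainment of the infimum and no structural property of $\mathcal{G}_\theta$. If desired, one can also note that $L_t$ is finite for each finite $t$, since $L_t \le J_t(\theta_0) < \infty$ for any fixed $\theta_0$.
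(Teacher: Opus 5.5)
Your proof is correct and follows essentially the same route as the paper: for each fixed $\theta$ you use additivity of the Lebesgue integral and nonnegativity of the tail integral to get $J_{t_2}(\theta)\ge J_{t_1}(\theta)$, then note that $L_{t_1}$ is a lower bound of $\{J_{t_2}(\theta):\theta\in\Theta\}$ and conclude $L_{t_1}\le L_{t_2}$ by the greatest-lower-bound property. Your extra bookkeeping (almost-everywhere well-definedness of $e_\theta$, the case $t_1=0$, finiteness of $L_t$) is harmless and simply makes explicit what the paper leaves implicit.
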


\begin{proof}

For any \(0 \leq t_1 < t_2 < +\infty\) and any fixed \(\theta \in \Theta\), the error functional can be decomposed by the interval additivity of the Lebesgue integral: 
\[
   J_{t_2}(\theta) = \int_{0}^{t_2} \left\| \mathcal{G}_\theta - u \right\|_{L^2(\Omega)}^2 \mathrm{d}\tau = \int_{0}^{t_1} \left\| \mathcal{G}_\theta - u \right\|_{L^2(\Omega)}^2 \mathrm{d}\tau + \int_{t_1}^{t_2} \left\| \mathcal{G}_\theta - u \right\|_{L^2(\Omega)}^2 \mathrm{d}\tau
\]
By non-negativity of the squared-integrand Lebesgue integral, the second term satisfies
\[
    \int_{t_1}^{t_2} \left\| \mathcal{G}_\theta - u \right\|_{L^2(\Omega)}^2 \mathrm{d}\tau \geq 0
\]
Thus for all \(\theta \in \Theta\), 
\[
    J_{t_2}(\theta) \geq J_{t_1}(\theta)
\]

By definition of the minimum fitting error, \(L_{t_1} = \inf_{\theta \in \Theta} J_{t_1}(\theta)\) is the greatest lower bound of \(J_{t_1}(\theta)\) over \(\Theta\). Therefore, for all \(\theta \in \Theta\),
\[
J_{t_1}(\theta) \geq L_{t_1}    
\]
Therefore, for all \(\theta \in \Theta\),
\[
J_{t_2}(\theta) \geq L_{t_1}    
\]
meaning \(L_{t_1}\) is a lower bound of the set \(\{ J_{t_2}(\theta) \mid \theta \in \Theta \}\). By definition of the infimum,
\[
L_{t_2} = \inf_{\theta \in \Theta} J_{t_2}(\theta) \geq L_{t_1}    
\]

In summary, \(L_{t_1} \leq L_{t_2}\) holds for all \(0 \leq t_1 < t_2 < +\infty\), so \(L_t\) is non-decreasing in \(t\).
\end{proof}

\section{Theoretical proof of the universal approximation property of the PI-SWNO architecture}
\label{appendix:piswno_approximation}

This appendix establishes the universal approximation capability of the proposed PI-SWNO.

\subsection{Preliminaries and basic definitions}
\begin{definition}[Input and output function spaces]
    Let the domain of input functions for unsteady PDEs be a compact set \(K \subset \mathbb{R}^d\), corresponding to initial conditions, boundary conditions, source terms, and other inputs. The input function space is defined as the space of continuous functions on the compact set:
    \[
    \mathcal{X} = C(K, \mathbb{R})
    \]
    Let the spatial computational domain of the PDE be a non-empty bounded closed set \(\Omega \subset \mathbb{R}^n\), and the total time interval be a non-empty bounded closed interval \([0,T] \subset \mathbb{R}\). The output solution function space is defined as the space of continuous functions on the spatiotemporal product compact set:
    \[
    \mathcal{Y} = C(\Omega \times [0,T], \mathbb{R})
    \]
    By the Heine-Borel theorem in Euclidean spaces, \(\Omega\), \([0,T]\), and their product \(\Omega \times [0,T]\) are all compact Hausdorff spaces. 
\end{definition}

\begin{definition}[PDE solution operator]
    For a well-posed unsteady PDE, the mapping from input functions to spatiotemporal solutions defines the solution operator:
    \[
    \mathcal{G}: \mathcal{X} \to \mathcal{Y}
    \]
    We assume \(\mathcal{G}\) is continuous and bounded, which holds for nearly all well-posed unsteady PDEs in engineering applications.
\end{definition}

\begin{definition}[Operator form of PI-SWNO]
    The proposed spatiotemporally decoupled PI-SWNO has the unified mathematical form:
    \begin{equation}
    \mathcal{G}_\theta(f)(x, t) = \sum_{i=1}^n b_i(f;\theta_b) \cdot \phi_i(x;\theta_s) \cdot \varphi_i(t;\theta_t),
    \end{equation}
    where
    \begin{itemize}
        \item \(\theta = \{\theta_b, \theta_s, \theta_t\}\) denotes the learnable parameters of the input, spatial, and temporal branches;
        \item \(n\) is a positive integer representing the order of modal expansion;
        \item \(b_i: \mathcal{X} \to \mathbb{R}\) is the modal coefficient generated by the input branch, a continuous function of the input \(f\);
        \item \(\phi_i: \Omega \to \mathbb{R}\) is the time-invariant spatial basis function from the spatial branch, a continuous function of the spatial coordinate \(x\);
        \item \(\varphi_i: [0,T] \to \mathbb{R}\) is the temporal basis function from the temporal branch, a continuous function of the time coordinate \(t\).
    \end{itemize}
\end{definition}

\subsection{Three fundamental theorems}
We first state three core theorems used in the proof: the universal approximation theorem for operators, the Stone-Weierstrass theorem, and the universal approximation theorem for neural networks.

The universal approximation theorem for operators (UATO)~\cite{23Lu2021} underpins the DeepONet architecture by rigorously proving that branch-trunk neural operators possess universal approximation capability for continuous bounded operators, which is formally stated as follows. 

\begin{theorem}[Universal Approximation Theorem for Operators]
    Let \(\mathcal{X} = C(K, \mathbb{R})\) and \(\mathcal{Y} = C(\Omega \times [0,T], \mathbb{R})\) be real-valued continuous function spaces over compact sets. Let \(\mathcal{G}: \mathcal{X} \to \mathcal{Y}\) be a continuous bounded operator. Then for any compact subset \(\mathcal{K} \subset \mathcal{X}\) and any \(\varepsilon > 0\), there exist a positive integer \(n\), continuous functions \(b_i: \mathcal{X} \to \mathbb{R}\), and continuous functions \(\psi_i \in C(\Omega \times [0,T], \mathbb{R})(i=1,2,\dots,n)\) such that the operator
    \[
    \tilde{\mathcal{G}}(f)(\cdot) = \sum_{i=1}^n b_i(f) \cdot \psi_i(\cdot)
    \]
    satisfies the uniform approximation condition:
    \begin{equation}
    \sup_{f \in \mathcal{K}} \left\| \mathcal{G}(f) - \tilde{\mathcal{G}}(f) \right\|_\infty < \varepsilon,
    \end{equation}
    where \(\|\cdot\|_\infty\) is the uniform norm defined by \(\|u\|_\infty = \sup_{z \in X} |u(z)|\).
    \label{theorem:uato}
\end{theorem}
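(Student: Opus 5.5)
The plan is to prove the statement directly from compactness, using a finite $\varepsilon$-net and a partition of unity. The approximating operator will be a finite-rank map whose coefficient functionals come from the partition of unity and whose basis functions are points of the net.

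\textbf{Step 1: the image is compact.} Since $\mathcal{G}$ is continuous and $\mathcal{K}\subset\mathcal{X}$ is compact, the image $\mathcal{G}(\mathcal{K})$ is a compact subset of the Banach space $(\mathcal{Y},\|\cdot\|_\infty)$. Fix $\varepsilon>0$. By total boundedness, I choose finitely many $u_1,\dots,u_n\in\mathcal{G}(\mathcal{K})$ whose open $\varepsilon$-balls cover $\mathcal{G}(\mathcal{K})$. These $u_j$ will serve as the functions $\psi_j\in C(\Omega\times[0,T],\mathbb{R})$.

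\textbf{Step 2: a continuous finite-rank approximation of the identity on the image.} For $v\in\mathcal{Y}$ define
\[
\lambda_j(v)=\max\bigl(\varepsilon-\|v-u_j\|_\infty,\,0\bigr),\qquad w_j(v)=\frac{\lambda_j(v)}{\sum_{k=1}^n\lambda_k(v)} .
\]
Each $\lambda_j$ is continuous. On the open neighbourhood $U=\bigcup_j B(u_j,\varepsilon)$ of $\mathcal{G}(\mathcal{K})$ the denominator is strictly positive, so each $w_j$ is continuous on $U$. The weights satisfy $w_j\ge 0$ and $\sum_j w_j=1$, and $w_j(v)>0$ only if $\|v-u_j\|_\infty<\varepsilon$. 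Setting $P(v)=\sum_j w_j(v)u_j$, convexity of the norm gives
\[
\|P(v)-v\|_\infty\le\sum_j w_j(v)\,\|u_j-v\|_\infty<\varepsilon
\]
for all $v\in\mathcal{G}(\mathcal{K})$.

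\textbf{Step 3: define the coefficients.} Put $b_j(f)=w_j(\mathcal{G}(f))$ for $f\in\mathcal{K}$. This is continuous on $\mathcal{K}$, as a composition of continuous maps, and it is real-valued. With $\psi_j=u_j$ and $\tilde{\mathcal{G}}(f)=\sum_j b_j(f)\psi_j$, we have $\tilde{\mathcal{G}}(f)=P(\mathcal{G}(f))$. Step 2 then yields $\|\mathcal{G}(f)-\tilde{\mathcal{G}}(f)\|_\infty<\varepsilon$ for every $f\in\mathcal{K}$. Since $\mathcal{K}$ is compact and $f\mapsto\|\mathcal{G}(f)-\tilde{\mathcal{G}}(f)\|_\infty$ is continuous, this maximum is attained. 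Hence the supremum over $\mathcal{K}$ is strictly less than $\varepsilon$, as the statement requires.

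\textbf{Step 4: extend the coefficients to all of $\mathcal{X}$.} The statement asks for $b_j$ defined and continuous on all of $\mathcal{X}$, whereas Step 3 defines them only on $\mathcal{K}$. Because $\mathcal{K}$ is closed in the metric space $\mathcal{X}$, the Tietze extension theorem, which holds in metric spaces and hence in normal spaces, extends each $b_j$ to a continuous real function on $\mathcal{X}$. The extension does not affect the estimate on $\mathcal{K}$.

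I regard this extension step, together with keeping the denominator of $w_j$ nonvanishing, as the only delicate point; everything else is routine. Boundedness of $\mathcal{G}$ is not actually needed for this argument. If one further wants $b_j$ and $\psi_j$ realised by neural networks, as in the DeepONet version, one would additionally apply the universal approximation theorem for neural networks~\cite{Cybenko1989Approximation}. For $\psi_j$ it applies directly on the compact set $\Omega\times[0,T]$. For $b_j$ one first composes with a sampling map $f\mapsto(f(x_m))_m$, which is justified by equicontinuity of $\mathcal{K}$ via Arzel\`a--Ascoli. This refinement is beyond what the present statement requires.
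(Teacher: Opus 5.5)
Your proof is correct and proves the statement exactly as written, but it does not follow the paper. The paper gives no proof here. It quotes, in weakened form, the operator approximation theorem of Chen--Chen as restated by Lu et al.\ for DeepONet.

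That cited theorem is proved with Tauber--Wiener (non-polynomial) activations. First, each element of the compact image is approximated by a fixed family of one-hidden-layer networks in the output variable, with coefficients given by continuous functionals. These functionals are then replaced by networks acting on finitely many samples $f(x_1),\dots,f(x_m)$, using equicontinuity of the input set (Arzel\`a--Ascoli) and an interpolation argument. The result is that $b_i$ and $\psi_i$ are themselves neural networks, and $b_i$ depends on $f$ only through point values.

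You instead work purely in the output space: an $\varepsilon$-net of $\mathcal{G}(\mathcal{K})$ plus a partition of unity subordinate to it, which is a Schauder-type finite-rank projection. This buys several things:
\begin{itemize}
\item it is far more elementary;
\item it shows that boundedness of $\mathcal{G}$ is superfluous;
\item the $\psi_j=\mathcal{G}(f_j)$ are solution snapshots;
\item on $\mathcal{K}$ the $b_j$ are nonnegative and sum to one, so the constant $M$ in step (1) of the paper's PI-SWNO proof can be taken as $1$.
\end{itemize}
What it does not give is the sampled, network-realized form that step (3) of that proof relies on. There, reducing $b_j$ to a function of finitely many samples is the genuinely nontrivial part, because the sampling map is not injective on $\mathcal{K}$. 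That reduction is exactly what the cited theorem supplies.

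Two minor points:
\begin{itemize}
\item Your Tietze sentence is reversed: Tietze holds in normal spaces, hence in metric ones.
\item You can bypass Tietze altogether. Set $c=\min_{f\in\mathcal{K}}\sum_k\lambda_k(\mathcal{G}(f))>0$ and $b_j(f)=\lambda_j(\mathcal{G}(f))/\max\bigl(\sum_k\lambda_k(\mathcal{G}(f)),c\bigr)$. This is continuous on all of $\mathcal{X}$ and equals $w_j(\mathcal{G}(f))$ on $\mathcal{K}$.
\end{itemize}
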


The Stone-Weierstrass theorem~\cite{Rudin1987Real} serves as a topological generalization of the classical Weierstrass approximation theorem, and constitutes a cornerstone of function approximation theory. Its formal statement is as follows. 
\begin{theorem}[Stone-Weierstrass theorem]
    Let \(X\) be a non-empty compact Hausdorff space, and let \(\mathcal{A}\) be a subalgebra of \(C(\Omega \times [0,T], \mathbb{R})\) (i.e., a subset closed under linear combinations and pointwise products). If \(\mathcal{A}\) satisfies:
    \begin{enumerate}
        \item Point-separating property: for any distinct \(x \neq y \in X\), there exists \(f \in \mathcal{A}\) such that \(f(x) \neq f(y)\);
        \item Constant-function property: the constant function \(1(x) \equiv 1\) belongs to \(\mathcal{A}\).
    \end{enumerate}
    then \(\mathcal{A}\) is uniformly dense in \(C(X, \mathbb{R})\). That is, for any continuous function \(u \in C(X, \mathbb{R})\) and any \(\varepsilon > 0\), there exists \(g \in \mathcal{A}\) such that
    \begin{equation}
    \|u - g\|_\infty = \sup_{x \in X} |u(x) - g(x)| < \varepsilon
    \end{equation}
    \label{theorem:stone_weierstrass}
\end{theorem}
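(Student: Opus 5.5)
We read the statement with $\mathcal{A}$ a subalgebra of $C(X,\mathbb{R})$; the instance $X=\Omega\times[0,T]$ is the one needed later. We follow the classical lattice-based proof.

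Let $\bar{\mathcal{A}}$ denote the uniform closure of $\mathcal{A}$ in $C(X,\mathbb{R})$. It is routine that $\bar{\mathcal{A}}$ is again a subalgebra containing $1$ and separating points, because sums and products are continuous in the sup norm on the compact space $X$. It therefore suffices to show $\bar{\mathcal{A}}=C(X,\mathbb{R})$.

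\textbf{Step 1 (the key step: closure under absolute value).} We show that $f\in\bar{\mathcal{A}}$ implies $|f|\in\bar{\mathcal{A}}$. We expect this to be the main obstacle, since it is the only place where a genuinely analytic input is needed.

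Put $M=\|f\|_\infty$; we may assume $M>0$. The plan is to approximate $s\mapsto\sqrt{s}$ uniformly on $[0,1]$ by polynomials $p_k$ with $p_k(0)=0$. The construction avoids invoking the full Weierstrass theorem: use the monotone recursion
\[
p_0=0,\qquad p_{k+1}(s)=p_k(s)+\tfrac12\bigl(s-p_k(s)^2\bigr).
\]
By induction, $0\le \sqrt{s}-p_k(s)\le 2\sqrt{s}/(2+k\sqrt{s})\le 2/k$, so the convergence is uniform. Then
\[
p_k\bigl(f^2/M^2\bigr)\in\bar{\mathcal{A}}
\]
converges uniformly to $|f|/M$, and closedness of $\bar{\mathcal{A}}$ gives $|f|\in\bar{\mathcal{A}}$. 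Consequently
\[
\max(f,g)=\tfrac12\bigl(f+g+|f-g|\bigr)\quad\text{and}\quad \min(f,g)=\tfrac12\bigl(f+g-|f-g|\bigr)
\]
lie in $\bar{\mathcal{A}}$, so $\bar{\mathcal{A}}$ is a lattice.

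\textbf{Step 2 (two-point interpolation).} Fix $x\neq y$ in $X$ and reals $a,b$. Choose $h\in\mathcal{A}$ with $h(x)\neq h(y)$, which is possible by the point-separating property. Set
\[
g=a+(b-a)\,\frac{h-h(x)}{h(y)-h(x)}.
\]
Then $g\in\mathcal{A}$ by the constant-function property, and $g(x)=a$, $g(y)=b$.

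\textbf{Step 3 (compactness, done twice).} Fix $u\in C(X,\mathbb{R})$ and $\varepsilon>0$.

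First, fix a point $x$. For each $y$, Step 2 gives $h_y\in\bar{\mathcal{A}}$ with $h_y(x)=u(x)$ and $h_y(y)=u(y)$. (When $y=x$, take the constant $u(x)$.) The open sets $U_y=\{z: h_y(z)<u(z)+\varepsilon\}$ contain $y$ and hence cover $X$. Extract a finite subcover $U_{y_1},\dots,U_{y_m}$ and set $g_x=\min_j h_{y_j}\in\bar{\mathcal{A}}$. Then $g_x<u+\varepsilon$ on all of $X$, and $g_x(x)=u(x)$.

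Second, the open sets $V_x=\{z: g_x(z)>u(z)-\varepsilon\}$ contain $x$ and cover $X$. Extract a finite subcover $V_{x_1},\dots,V_{x_r}$ and set $g=\max_i g_{x_i}\in\bar{\mathcal{A}}$. Then $u-\varepsilon<g<u+\varepsilon$ on all of $X$.

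Since $\varepsilon$ is arbitrary and $\bar{\mathcal{A}}$ is closed, $u\in\bar{\mathcal{A}}$. Finally, pick an element of $\mathcal{A}$ within $\varepsilon$ of $g$; it lies within $2\varepsilon$ of $u$, which gives the stated uniform density of $\mathcal{A}$ itself.
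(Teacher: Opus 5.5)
\textbf{Assessment: correct, but a genuinely different route, since the paper gives no proof.} The paper quotes this statement as a standard result with a citation to Rudin and uses it as a black box in Corollary~\ref{corollary:sw_spacetime}. You instead supply the classical lattice-theoretic (Stone) proof, and it is correct as written.

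You were right to read $\mathcal{A}$ as a subalgebra of $C(X,\mathbb{R})$. The printed statement mixes $X$ with $\Omega\times[0,T]$, and only your reading makes sense for a general compact $X$.

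The one analytic step checks out. The identity $\sqrt{s}-p_{k+1}(s)=\bigl(\sqrt{s}-p_k(s)\bigr)\bigl(1-\tfrac12(\sqrt{s}+p_k(s))\bigr)$ gives $0\le p_k\le\sqrt{s}$ on $[0,1]$ by induction. The rate $\sqrt{s}-p_k(s)\le 2\sqrt{s}/(2+k\sqrt{s})$ also propagates: with $a=\sqrt{s}$ and $P=2+ka$, the inductive step reduces to $(P-a-ka^2)(P+a)\le P^2$.

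The rest is also fine. The two compactness sweeps work: a minimum over one finite cover gives $g_x<u+\varepsilon$ with $g_x(x)=u(x)$, and a maximum over a second finite cover finishes. The final passage from $\bar{\mathcal{A}}$ back to $\mathcal{A}$ is correct.

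What your route buys over the citation is self-containedness. Because you approximate the square root by an explicit recursion, you need neither the one-variable Weierstrass theorem nor the binomial series for $\sqrt{1-t}$. Those are the usual sources of the polynomial approximation of $|t|$ in textbook versions of this argument.

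Two minor remarks. First, the Hausdorff hypothesis is never used; it is implied anyway once continuous functions separate points. Second, the normalisation $p_k(0)=0$ is superfluous here since $1\in\mathcal{A}$; it only matters for the variant of the theorem without the constant-function hypothesis.
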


For the spatiotemporal scenario considered in this paper, the following key corollary follows directly from Theorem~\ref{theorem:stone_weierstrass}:

\begin{corollary}[Stone-Weierstrass approximation for spatiotemporal product spaces]
     Let the spatiotemporal product space be \(X = \Omega \times [0,T]\). Define the subalgebra
    \[
    \mathcal{A} = \operatorname{span}\left\{ \phi(x) \cdot \varphi(t) \mid \phi \in C(\Omega, \mathbb{R}),\ \varphi \in C([0,T], \mathbb{R}) \right\}
    \]
    Then \(\mathcal{A}\) is uniformly dense in \(C(\Omega \times [0,T], \mathbb{R})\). For any \(\psi(x,t) \in C(\Omega \times [0,T], \mathbb{R})\) and any \(\delta > 0\), there exist a positive integer \(m\), spatial continuous functions \(\phi_j \in C(\Omega, \mathbb{R})\), and temporal continuous functions \(\varphi_j \in C([0,T], \mathbb{R})(j=1,2,\dots,m)\) such that
    \begin{equation}
    \sup_{(x,t) \in \Omega \times [0,T]} \left| \psi(x,t) - \sum_{j=1}^m \phi_j(x) \varphi_j(t) \right| < \delta
    \end{equation}
    \label{corollary:sw_spacetime}
\end{corollary}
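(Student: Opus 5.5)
The plan is to apply Theorem~\ref{theorem:stone_weierstrass} with $X=\Omega\times[0,T]$. The preliminaries already note that $\Omega\times[0,T]$ is a non-empty compact Hausdorff space, so it remains to check three things: that $\mathcal{A}$ is a subalgebra of $C(X,\mathbb{R})$, that it contains the constant function $1$, and that it separates points. Once density is known, the $\delta$-statement follows by unwinding what membership in $\operatorname{span}$ means.

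\textbf{Subalgebra.} Each generator $\phi(x)\varphi(t)$ is continuous on $X$, being a product of continuous functions composed with the coordinate projections. Hence $\mathcal{A}\subset C(X,\mathbb{R})$. Closure under linear combinations holds because $\mathcal{A}$ is defined as a span. For closure under pointwise products, bilinearity reduces the check to generators:
\[
\bigl(\phi_1(x)\varphi_1(t)\bigr)\bigl(\phi_2(x)\varphi_2(t)\bigr)=(\phi_1\phi_2)(x)\,(\varphi_1\varphi_2)(t).
\]
Since $\phi_1\phi_2\in C(\Omega,\mathbb{R})$ and $\varphi_1\varphi_2\in C([0,T],\mathbb{R})$, this product is again a generator. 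A product of two finite sums of generators expands to a finite sum of such products, so it lies in $\mathcal{A}$.

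\textbf{Constants.} Taking $\phi\equiv1$ and $\varphi\equiv1$ gives the constant function $1\in\mathcal{A}$.

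\textbf{Point separation.} Let $(x,t)\neq(y,s)$ in $X$. There are two cases.
\begin{itemize}
\item If $x\neq y$, then $x_k\neq y_k$ for some coordinate $k$. Take $\phi(z)=z_k$, which is continuous on $\Omega$, and $\varphi\equiv1$.
\item Otherwise $t\neq s$. Take $\phi\equiv1$ and $\varphi(\tau)=\tau$.
\end{itemize}
In either case the resulting element of $\mathcal{A}$ takes different values at the two points. This step is the only one requiring any care, and it is routine.

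\textbf{Conclusion.} Theorem~\ref{theorem:stone_weierstrass} now gives that $\mathcal{A}$ is uniformly dense in $C(X,\mathbb{R})$. So for any $\psi\in C(X,\mathbb{R})$ and $\delta>0$ there is $g\in\mathcal{A}$ with $\|\psi-g\|_\infty<\delta$. By definition of the span, $g=\sum_{j=1}^m c_j\phi_j(x)\varphi_j(t)$. Absorbing each scalar $c_j$ into $\phi_j$ yields exactly the stated form.

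One subtlety concerns the hypotheses of Theorem~\ref{theorem:stone_weierstrass}. As printed, it names the ambient space $C(\Omega\times[0,T],\mathbb{R})$, while its conclusion speaks of $C(X,\mathbb{R})$. With $X=\Omega\times[0,T]$ these coincide, so I apply the theorem with that identification and the mismatch causes no difficulty.
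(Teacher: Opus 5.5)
Your proof is correct and follows the paper's own argument: show that $\mathcal{A}$ is a unital, point-separating subalgebra of $C(\Omega\times[0,T],\mathbb{R})$, invoke the Stone--Weierstrass theorem on the compact Hausdorff space $\Omega\times[0,T]$, and unwind the span. The differences are cosmetic. You separate points with the coordinate functions $z_k$ and $\tau$ rather than the paper's distance functions $\|x-x_1\|$ and $|t-t_1|$, and you explicitly check closure under products, which the paper only calls straightforward.
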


\begin{proof}
    By the Heine-Borel theorem, \(\Omega\) and \([0,T]\) are compact Hausdorff, so their product \(X = \Omega \times [0,T]\) is a non-empty compact Hausdorff space. It is straightforward to verify that \(\mathcal{A}\) is a subalgebra of \(C(X, \mathbb{R})\) and satisfies the two key conditions of Theorem~\ref{theorem:stone_weierstrass}:
    \begin{enumerate}
        \item Constant-function property: set \(\phi(x) \equiv 1\) and \(\varphi(t) \equiv 1\); then the constant function \(1(x,t) \equiv 1 = \phi(x)\varphi(t) \in \mathcal{A}\);
        \item Point-separating property: for any distinct \((x_1,t_1) \neq (x_2,t_2) \in X\), if \(x_1 \neq x_2\), take \(\phi(x) = \|x-x_1\|\) and \(\varphi(t) \equiv 1\); if \(t_1 \neq t_2\), take \(\phi(x) \equiv 1\) and \(\varphi(t) = |t-t_1|\). In both cases, the two points are separated.
    \end{enumerate}
    By Theorem~\ref{theorem:stone_weierstrass}, \(\mathcal{A}\) is uniformly dense in \(C(X, \mathbb{R})\). This completes the proof. 
\end{proof}

The universal approximation theorem (UAT)~\cite{Cybenko1989Approximation} for neural networks serves as the fundamental theoretical cornerstone of deep learning. It rigorously proves the universal approximation capability of feedforward neural networks for continuous functions defined on compact sets. Its standard formulation, specifically for feedforward networks with a single hidden layer, is stated as follows.
\begin{theorem}[Universal approximation theorem for neural networks]
    Let \(\sigma: \mathbb{R} \to \mathbb{R}\) be a continuous, non-constant sigmoidal activation (e.g., Sigmoid, Tanh) or a non-polynomial continuous activation satisfying universal approximation conditions (e.g., ReLU, GELU). For any positive integer \(d\), any compact set \(K \subset \mathbb{R}^d\), any continuous function \(f: K \to \mathbb{R}\), and any \(\varepsilon > 0\), there exist a positive integer \(m\), weights \(w_j \in \mathbb{R}^d, b_j \in \mathbb{R}\), and coefficients \(\beta_j \in \mathbb{R}(j=1,2,\dots,m)\) such that the single-hidden-layer feedforward neural network:
    \[
    \hat{f}(\boldsymbol{z}) = \sum_{j=1}^m \beta_j \cdot \sigma\left( w_j^\mathrm{T} \boldsymbol{z} + b_j \right)
    \]
    satisfies:
    \[
    \sup_{\boldsymbol{z} \in K} \left| f(\boldsymbol{z}) - \hat{f}(\boldsymbol{z}) \right| < \varepsilon
    \]
    That is: any continuous function defined on a compact set can be uniformly approximated to arbitrary precision by a single-hidden-layer feedforward neural network equipped with a suitable activation function.
    \label{theorem:nn_uat}
\end{theorem}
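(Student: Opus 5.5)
The plan is a duality argument in the style of Cybenko for sigmoidal activations, plus a reduction to one dimension for general non-polynomial activations. Let
\[
\mathcal{N}_\sigma=\operatorname{span}\{\, \boldsymbol z\mapsto \sigma(w^{\mathrm T}\boldsymbol z+b) : w\in\mathbb R^d,\ b\in\mathbb R\,\}\subset C(K,\mathbb R).
\]
The theorem is exactly the statement that $\mathcal N_\sigma$ is uniformly dense in $C(K,\mathbb R)$. I will argue by contradiction.

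\textbf{Sigmoidal case (Sigmoid, Tanh).} Suppose $\overline{\mathcal N_\sigma}\neq C(K,\mathbb R)$.
\begin{enumerate}
\item By the Hahn--Banach theorem there is a nonzero bounded linear functional on $C(K,\mathbb R)$ that vanishes on $\overline{\mathcal N_\sigma}$.
\item By the Riesz representation theorem this functional is integration against a nonzero finite signed regular Borel measure $\mu$ on $K$. Hence $\int_K \sigma(w^{\mathrm T}\boldsymbol z+b)\,\mathrm d\mu=0$ for all $w$ and $b$.
\item Next I show $\sigma$ is \emph{discriminatory}, i.e.\ this forces $\mu=0$. Replace $(w,b)$ by $(\lambda w,\lambda b+\phi)$ and let $\lambda\to\infty$. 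The functions $\sigma(\lambda(w^{\mathrm T}\boldsymbol z+b)+\phi)$ are uniformly bounded and converge pointwise to the indicator of the open half-space $\{w^{\mathrm T}\boldsymbol z+b>0\}$, plus $\sigma(\phi)$ times the indicator of the hyperplane $\{w^{\mathrm T}\boldsymbol z+b=0\}$.
\item By dominated convergence, and by varying $\phi$, $\mu$ annihilates every half-space and every hyperplane.
\item Fix $w$. The linear functional $h\mapsto\int_K h(w^{\mathrm T}\boldsymbol z)\,\mathrm d\mu$ on bounded measurable $h$ vanishes on indicators of intervals. By linearity and uniform limits it vanishes on all bounded measurable $h$, in particular on $\cos$ and $\sin$.
\item Hence the Fourier transform $\hat\mu(w)=\int_K e^{i w^{\mathrm T}\boldsymbol z}\,\mathrm d\mu=0$ for every $w$, so $\mu=0$. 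This is a contradiction.
\end{enumerate}

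\textbf{Non-polynomial case (ReLU, GELU).}
\begin{enumerate}
\item \emph{Reduction to $d=1$.} The span of ridge functions $g(w^{\mathrm T}\boldsymbol z)$ with $g\in C(\mathbb R)$ contains every monomial $(w^{\mathrm T}\boldsymbol z)^k$. These span all homogeneous polynomials of degree $k$, so the span contains all polynomials, which are dense by Theorem~\ref{theorem:stone_weierstrass} (or classical Weierstrass). It therefore suffices to show that $\operatorname{span}\{\sigma(at+b)\}$ is dense in $C([\alpha,\beta])$ for every compact interval.
\item \emph{Mollification.} Convolve $\sigma$ with a compactly supported smooth mollifier $\rho$ to get $\sigma_\rho=\sigma*\rho$. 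The convolution integral is a uniform limit on compacts of Riemann sums of translates of $\sigma$, so $\sigma_\rho(at+b)\in\overline{\mathcal N_\sigma}$.
\item \emph{Extracting monomials.} Since $\sigma_\rho$ is $C^\infty$, difference quotients in $a$ converge uniformly on $[\alpha,\beta]$. This gives
\[
\frac{\partial^k}{\partial a^k}\sigma_\rho(at+b)\Big|_{a=0}=t^k\sigma_\rho^{(k)}(b)\in\overline{\mathcal N_\sigma}.
\]
\item \emph{Nonvanishing derivatives.} If for some $\rho$ we have $\sigma_\rho^{(k)}(b)\neq 0$ for some $b$, for every $k$, then all monomials lie in the closure and Weierstrass finishes the proof.
\end{enumerate}

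\textbf{Main obstacle.} The delicate step is showing that some mollification $\sigma_\rho$ is not a polynomial, and that a smooth non-polynomial function has, for each $k$, some point where its $k$-th derivative is nonzero.
\begin{itemize}
\item The second part is immediate: if $\sigma_\rho^{(k)}\equiv0$ for some $k$, then $\sigma_\rho$ is a polynomial.
\item For the first part, the plan is a Baire category argument, as in Leshno--Pinkus--Schocken. Suppose every $\sigma_\rho$ were a polynomial. Then the sets $\{\rho\in C_c^\infty([-1,1]) : \deg\sigma_\rho\le N\}$ are closed and cover the Fréchet space $C_c^\infty([-1,1])$. Baire gives one of them an interior point, and a linearity argument then bounds all degrees uniformly. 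Letting $\rho\to\delta$ would force $\sigma$ itself to be a polynomial, contradicting the hypothesis.
\end{itemize}
I expect this step, together with checking uniform convergence of the difference quotients and Riemann sums on compacts, to require the most care. The sigmoidal branch is comparatively routine.
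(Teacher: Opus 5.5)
The paper does not prove this statement. It quotes it as a known result with a citation to Cybenko (1989) and uses it as a black box in the PI-SWNO approximation proof. You instead reconstruct the two standard proofs, and both are correct in outline:
\begin{itemize}
\item Cybenko's duality argument (Hahn--Banach, Riesz, discriminatory $\sigma$, Fourier uniqueness) for sigmoidal activations.
\item The Leshno--Lin--Pinkus--Schocken argument for continuous non-polynomial activations: reduction to ridge functions and $d=1$, mollification, differentiation in the scale parameter $a$, and Baire category to produce a non-polynomial $\sigma*\rho$.
\end{itemize}
The Baire step works as you describe. The sets $V_N$ are closed linear subspaces of the Fr\'echet space $C_c^\infty([-1,1])$, so one with nonempty interior is the whole space, and an approximate identity then places $\sigma$ in the closed space of polynomials of degree at most $N$.

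What this buys relative to the paper is real. Cybenko's theorem, the only source cited, is stated for sigmoidal activations, so the ReLU/GELU clause is not covered by the citation as written. Your second branch covers every continuous non-polynomial $\sigma$. It therefore subsumes the first: a continuous function with distinct finite limits at $\pm\infty$, such as $\tanh$, is bounded and non-constant, hence not a polynomial. If you only care about the named examples, a much lighter route stays within Cybenko. The functions $\mathrm{ReLU}(t)-\mathrm{ReLU}(t-1)$ and $\mathrm{GELU}(t+1)-\mathrm{GELU}(t)$ are continuous sigmoidal functions $s$ with $\mathcal{N}_s\subset\mathcal{N}_\sigma$.

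If you keep the sigmoidal branch, it needs two small repairs.
\begin{itemize}
\item \emph{Step 3 and $\tanh$.} The pointwise limit you write assumes limits $0$ at $-\infty$ and $1$ at $+\infty$. For $\tanh$ the limit on the negative half-space is $-1$. Taking $w=0$ shows that $\mathcal{N}_\sigma$ contains the constants, since $\sigma\not\equiv 0$. Hence $\mu(K)=0$, and $\mu$ also annihilates the ridge functions of the genuinely sigmoidal function $(\sigma+1)/2$, to which your step 3 applies verbatim.
\item \emph{Step 5.} Uniform limits of linear combinations of interval indicators give only regulated functions, not all bounded measurable ones; Cybenko's paper makes the same slip. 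It is harmless, because you only need $\cos$ and $\sin$ on a compact interval containing $\{w^{\mathrm T}\boldsymbol z:\boldsymbol z\in K\}$, where they are uniform limits of step functions. Alternatively, the push-forward of $\mu$ under $\boldsymbol z\mapsto w^{\mathrm T}\boldsymbol z$ vanishes on all intervals, so it is the zero measure by a $\pi$--$\lambda$ argument applied to its Jordan parts.
\end{itemize}
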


This theorem extends directly to multi-output feedforward neural networks: for any continuous vector-valued function \(\boldsymbol{f}: K \to \mathbb{R}^n\) defined on the compact set \(K \subset \mathbb{R}^d\), each of its components is continuous. Consequently, \(\boldsymbol{f}\) can be uniformly approximated to arbitrary precision by multi-output feedforward neural networks.

\subsection{universal approximation theorem for PI-SWNO and proof}
Based on the three foundational theorems above, this paper proposes the following universal approximation theorem for the PI-SWNO framework, along with its rigorous mathematical proof.

\begin{theorem}[Universal approximation theorem for PI-SWNO]
     Let the solution operator \(\mathcal{G}: \mathcal{X} \to \mathcal{Y}\) of the unsteady PDE be continuous and bounded, where \(\mathcal{X} = C(K, \mathbb{R})\) and \(\mathcal{Y} = C(\Omega \times [0,T], \mathbb{R})\) are continuous function spaces over compact sets. Then for any compact subset \(\mathcal{X} \subset \mathcal{K}\) and any \(\varepsilon > 0\), there exist a positive integer \(N\) and learnable parameters \(\theta\) such that the PI-SWNO operator \(\mathcal{G}_\theta\) uniformly approximates the target solution operator \(\mathcal{G}\) on \(\mathcal{K}\):
    \begin{equation}
    \sup_{f \in \mathcal{K}} \sup_{(x,t) \in \Omega \times [0,T]} \left| \mathcal{G}(f)(x,t) - \mathcal{G}_\theta(f)(x,t) \right| < \varepsilon
    \end{equation}
    \label{theorem:piswno_approximation}
\end{theorem}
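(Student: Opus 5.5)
The plan is a three-stage approximation with an $\varepsilon/3$ split. Each of the three foundational results covers one source of error.

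\emph{Stage 1 (operator approximation).} Fix the compact set $\mathcal{K}\subset\mathcal{X}$ and $\varepsilon>0$. By Theorem~\ref{theorem:uato}, choose $n$, continuous coefficient functionals $b_i$, and continuous trunk functions $\psi_i\in C(\Omega\times[0,T],\mathbb{R})$ such that
\[
\sup_{f\in\mathcal{K}}\Bigl\|\mathcal{G}(f)-\sum_{i=1}^n b_i(f)\psi_i\Bigr\|_\infty<\varepsilon/3 .
\]
Since each $b_i$ is continuous and $\mathcal{K}$ is compact, we get $M:=\max_i\sup_{f\in\mathcal{K}}|b_i(f)|<\infty$. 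We may assume $M>0$, since otherwise there is nothing to approximate in Stage 2.

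\emph{Stage 2 (separation of variables).} Apply Corollary~\ref{corollary:sw_spacetime} to each $\psi_i$ with tolerance $\delta=\varepsilon/(3nM)$. This gives $\psi_i\approx\sum_{j=1}^{m_i}\phi_{ij}(x)\varphi_{ij}(t)$ uniformly. We then re-index the double sum over pairs $(i,j)$ as a single index $k=1,\dots,N$ with $N=\sum_i m_i$. We set $\tilde b_k:=b_i$, $\tilde\phi_k:=\phi_{ij}$ and $\tilde\varphi_k:=\varphi_{ij}$; duplicating each coefficient $b_i$ across its $m_i$ products is what puts the expression in exactly the Hadamard form of Eq.~\eqref{sw_net_correct}. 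The triangle inequality then gives a total error below $2\varepsilon/3$ for the idealized separable operator $\sum_{k=1}^N\tilde b_k(f)\tilde\phi_k(x)\tilde\varphi_k(t)$.

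\emph{Stage 3 (neural realization).} This is the main obstacle. Replace $\tilde\phi_k$ and $\tilde\varphi_k$ by multi-output networks using Theorem~\ref{theorem:nn_uat} on the compact sets $\Omega$ and $[0,T]$. The coefficient functionals $\tilde b_k$ are defined on a function space, so Theorem~\ref{theorem:nn_uat} does not apply to them directly. I would handle them as in Chen--Chen/DeepONet. By compactness of $\mathcal{K}$, the map $f\mapsto(f(x_1),\dots,f(x_M'))$ for a sufficiently fine sensor set, followed by a continuous extension, lets each $\tilde b_k$ be approximated by a continuous function of finitely many sensor values. Theorem~\ref{theorem:nn_uat} is then applied on the compact image of $\mathcal{K}$ in $\mathbb{R}^{M'}$. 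Alternatively, one can simply invoke the fact that the branch functionals in Theorem~\ref{theorem:uato} are already realizable as networks of sensor values, as in~\cite{23Lu2021}.

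The error from replacing all three factors is controlled by the product-perturbation identity
\[
abc-a'b'c'=(a-a')bc+a'(b-b')c+a'b'(c-c').
\]
We use the sup-bounds $M$, $\max_k\|\tilde\phi_k\|_\infty$ and $\max_k\|\tilde\varphi_k\|_\infty$, each of which is finite by continuity on compact sets. The approximations are taken in order — branch first, then spatial, then temporal — so that the bounds on the primed factors stay within, say, twice the originals. Choosing each network tolerance of order $\varepsilon/(9N\cdot\text{bounds})$ makes this last contribution less than $\varepsilon/3$, and summing the three stages yields the claimed uniform bound. I also note that the statement's ``$\mathcal{X}\subset\mathcal{K}$'' should read $\mathcal{K}\subset\mathcal{X}$, and I would prove it in that form.
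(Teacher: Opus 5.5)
Your proposal is correct and follows the paper's proof essentially step for step: the operator UAT with tolerance $\varepsilon/3$, Corollary~\ref{corollary:sw_spacetime} applied to each $\psi_i$ with $\delta=\varepsilon/(3nM)$, then the neural UAT on the three branches, combined through the same three-term telescoping of $abc-a'b'c'$. Your additions tighten points the paper handles only loosely, without changing the route: re-indexing the double sum into a single index so it literally matches the Hadamard form (which is also where the $N$ in the statement comes from), the Chen--Chen sensor argument for the branch, and explicit control of the perturbed factors $\hat b_i,\hat\phi_{i,j}$.
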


\begin{proof}

    (1)Decompose the target solution operator into the standard branch-trunk form of DeepONet via the Universal Approximation Theorem for Operators.
    
    For the target solution operator \(\mathcal{G}\), any compact \(\mathcal{K} \subset \mathcal{X}\), and any \(\varepsilon > 0\), by Theorem~\ref{theorem:uato}, there exist a positive integer \(n\), continuous coefficient functions \(b_i: \mathcal{X} \to \mathbb{R}\), and spatiotemporally continuous basis functions \(\psi_i(x,t) \in C(\Omega \times [0,T], \mathbb{R})(i=1,2,\dots,n)\) such that
    \[
    \tilde{\mathcal{G}}(f)(x,t) = \sum_{i=1}^n b_i(f) \cdot \psi_i(x,t)
    \]
    satisfies:
    \begin{equation}
    \sup_{f \in \mathcal{K}} \sup_{(x,t) \in \Omega \times [0,T]} \left| \mathcal{G}(f)(x,t) - \tilde{\mathcal{G}}(f)(x,t) \right| < \frac{\varepsilon}{3}
    \end{equation}

    Since \(\mathcal{K}\) is compact on \(\mathcal{X}\), each continuous function \(b_i(f)\) is bounded on \(\mathcal{K}\). That is, there exists a constant \(M > 0\) such that for all \(f \in \mathcal{K}\) and all \(i=1,2,\dots,n\), have
    \begin{equation}
    |b_i(f)| \leq M
    \end{equation}

    (2)Decompose each spatiotemporally coupled basis function \(\psi_i(x,t)\) into a linear combination of decoupled spatial and temporal basis functions via the Stone-Weierstrass Theorem.
    
    For each spatiotemporally coupled basis function \(\psi_i(x,t)\), let the approximation accuracy be \(\delta = \frac{\varepsilon}{3nM}\). By Corollary~\ref{corollary:sw_spacetime}, there exist a positive integer \(m\), spatial continuous functions \(\phi_{i,j} \in C(\Omega, \mathbb{R})\), and temporal continuous functions \(\varphi_{i,j} \in C([0,T], \mathbb{R})(j=1,2,\dots,m)\) such that
    \begin{equation}
    \sup_{(x,t) \in \Omega \times [0,T]} \left| \psi_i(x,t) - \sum_{j=1}^m \phi_{i,j}(x) \varphi_{i,j}(t) \right| < \delta = \frac{\varepsilon}{3nM}
    \end{equation}

    (3)Show that the three decoupled branches of PI-SWNO can approximate the required continuous coefficient functions, spatial basis functions, and temporal basis functions to arbitrary accuracy via the Universal Approximation Theorem for Neural Networks.
    
    \begin{itemize}
        \item Approximation of the input branch: The input branch takes discrete sampled values of \(f\) (a finite-dimensional vector \(\boldsymbol{z}_f \in \mathbb{R}^{M \times N}\)), and outputs an n-dimensional modal coefficient vector \([b_1(f), \dots, b_n(f)]^\mathrm{T}\). Since \(b_i(f)\) is continuous, its discrete counterpart is a continuous vector-valued function on a compact set. By Theorem~\ref{theorem:nn_uat}, there exist parameters \(\theta_b\) such that the network outputs \(\hat{b}_i(f)\) satisfy:
        \[
        \sup_{f \in \mathcal{K}} |b_i(f) - \hat{b}_i(f)| < \frac{\varepsilon}{3n \cdot \|\psi_i\|_\infty}, \quad \forall i=1,2,\dots,n \quad ,
        \]
        where \(\|\psi_i\|_\infty = \sup_{(x,t)} |\psi_i(x,t)|\) denotes the uniform norm of the basis function.

        \item Approximation of the spatial branch: The spatial branch takes \(x \in \Omega\) (finite-dimensional vector on a compact set) and outputs an \(n \times m\) dimensional vector of spatial basis functions \([\phi_{1,1}(x), \dots, \phi_{n,m}(x)]^\mathrm{T}\). By Theorem~\ref{theorem:nn_uat}, there exist parameters \(\theta_s\) such that the network outputs \(\hat{\phi}_{i,j}(x)\) satisfy:
        \[
        \sup_{x \in \Omega} |\phi_{i,j}(x) - \hat{\phi}_{i,j}(x)| < \frac{\varepsilon}{3n m M \cdot \|\varphi_{i,j}\|_\infty}, \quad \forall i,j
        \]

        \item Approximation of the temporal branch: The temporal branch takes \(t \in [0,T]\) (1-dimensional vector on a compact set) and outputs an \(n \times m\)dimensional vector of temporal basis functions \([\varphi_{1,1}(t), \dots, \varphi_{n,m}(t)]^\mathrm{T}\). By Theorem~\ref{theorem:nn_uat}, there exist parameters \(\theta_t\) such that the network outputs \(\hat{\varphi}_{i,j}(t)\) satisfy:
        \[
        \sup_{t \in [0,T]} |\varphi_{i,j}(t) - \hat{\varphi}_{i,j}(t)| < \frac{\varepsilon}{3n m M \cdot \|\phi_{i,j}\|_\infty}, \quad \forall i,j
        \]
    \end{itemize}

    In summary, the three branch networks of PI-SWNO can jointly construct an operator that fully conforms to the architectural form:
    \begin{equation}
    \mathcal{G}_\theta(f)(x,t) = \sum_{i=1}^n \sum_{j=1}^m \hat{b}_i(f) \cdot \hat{\phi}_{i,j}(x) \cdot \hat{\varphi}_{i,j}(t),
    \end{equation}
    where $\theta = \{\theta_b, \theta_s, \theta_t\}$ denotes the learnable parameters of the network, which is fully consistent with the spatiotemporally decoupled architecture of the proposed PI-SWNO.

    (4)Combine the approximation errors via the triangle inequality.
     
    For any \(f \in \mathcal{K}\) and any \((x,t) \in \Omega \times [0,T]\), combine the approximation errors of the three stages by the triangle inequality:

    \[
    \begin{aligned}
    &\left| \mathcal{G}(f)(x,t) - \mathcal{G}_\theta(f)(x,t) \right| \\
    \leq& \left| \mathcal{G}(f)(x,t) - \tilde{\mathcal{G}}(f)(x,t) \right| 
    + \left| \tilde{\mathcal{G}}(f)(x,t) - \sum_{i,j} b_i(f) \phi_{i,j}(x) \varphi_{i,j}(t) \right| \\
    &+ \left| \sum_{i,j} b_i(f) \phi_{i,j}(x) \varphi_{i,j}(t) - \sum_{i,j} \hat{b}_i(f) \hat{\phi}_{i,j}(x) \hat{\varphi}_{i,j}(t) \right| \\
    \leq& \left| \mathcal{G}(f)(x,t) - \tilde{\mathcal{G}}(f)(x,t) \right| 
    + \left| \tilde{\mathcal{G}}(f)(x,t) - \sum_{i,j} b_i(f) \phi_{i,j}(x) \varphi_{i,j}(t) \right| \\
    &+ \left| \sum_{i,j} (b_i(f) - \hat{b}_i(f)) \phi_{i,j}(x) \varphi_{i,j}(t) \right| + \left| \sum_{i,j} \hat{b}_i(f) (\phi_{i,j}(x) - \hat{\phi}_{i,j}(x)) \varphi_{i,j}(t) \right|  \\
    &+ \left| \sum_{i,j} \hat{b}_i(f) \hat{\phi}_{i,j}(x) (\varphi_{i,j}(t) - \hat{\varphi}_{i,j}(t)) \right|\\
    <& \frac{\varepsilon}{3} + \frac{\varepsilon}{3} + \frac{\varepsilon}{3} \\
    =& \varepsilon
    \end{aligned}
    \]

    Taking the supremum on both sides of the inequality:
    \[
    \sup_{f \in \mathcal{K}} \sup_{(x,t) \in \Omega \times [0,T]} \left| \mathcal{G}(f)(x,t) - \mathcal{G}_\theta(f)(x,t) \right| < \varepsilon
    \]
    This completes the proof of Theorem~\ref{theorem:piswno_approximation}.
\end{proof}

This theorem shows that the spatiotemporally decoupled PI-SWNO can uniformly approximate the continuous bounded solution operator of unsteady PDEs to arbitrary accuracy on compact subsets of function spaces, possessing the same universal approximation capability as the classical DeepONet. Meanwhile, the architecture matches the inherent physical property of unsteady PDEs --- time-invariant spatial bases plus time-varying evolution coefficients --- and offers clearer physical interpretability and superior long-time error control compared with spatiotemporally coupled DeepONet.

A supplementary justification for the discrete sampling in the input branch is provided below based on the Arzelà-Ascoli theorem.

\begin{theorem}[Uniform Characterization of Compact Function Families by Finite Samples]
    Let the input function domain \(K \subset \mathbb{R}^d\) be a non-empty compact set, and let the input function space \(\mathcal{X}=C(K, \mathbb{R})\) be equipped with the uniform norm \(\|\cdot\|_{C(K)}\). Let \(\mathcal{K} \subset \mathcal{X}\) be any compact subset. Then for any \(\varepsilon>0\), there exist finitely many sampling points \(\{x_i\}_{i=1}^m \subset K\) such that for any input function \(f,g \in \mathcal{K}\), if \(f(x_i)=g(x_i), \ \forall i=1,2,\dots,m\), then
    \[
    \|f - g\|_{C(K)} = \sup_{x \in K} |f(x) - g(x)| < \varepsilon
    \]
\end{theorem}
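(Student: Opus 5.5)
The plan is to use the Arzelà--Ascoli theorem to turn compactness of \(\mathcal{K}\) into uniform equicontinuity, and then to choose the sampling points as a finite net of \(K\) fine enough that two functions agreeing on the net cannot differ much anywhere.

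\textbf{Step 1 (equicontinuity).} Since \(K\) is compact and \(\mathcal{K}\subset C(K,\mathbb{R})\) is compact in the uniform norm, the Arzelà--Ascoli theorem says \(\mathcal{K}\) is equicontinuous. Because \(K\) is compact, this equicontinuity is in fact uniform: given \(\varepsilon>0\), there is \(\eta>0\) such that for every \(h\in\mathcal{K}\) and all \(x,y\in K\) with \(|x-y|<\eta\),
\[
|h(x)-h(y)|<\varepsilon/3 .
\]
If one prefers to avoid quoting Arzelà--Ascoli, this can be proved directly. Take a finite \(\varepsilon/9\)-net \(h_1,\dots,h_r\) of \(\mathcal{K}\). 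Each \(h_k\) is uniformly continuous on \(K\), so take \(\eta\) to be the minimum of the corresponding moduli. Then for any \(h\in\mathcal{K}\), pick \(h_k\) with \(\|h-h_k\|<\varepsilon/9\) and apply the triangle inequality.

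\textbf{Step 2 (choice of samples).} By compactness, \(K\) is covered by finitely many open balls \(B(x_i,\eta)\), \(i=1,\dots,m\). Take \(\{x_i\}_{i=1}^m\) as the sampling points. These depend only on \(\mathcal{K}\) and \(\varepsilon\), not on \(f\) or \(g\).

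\textbf{Step 3 (estimate).} Let \(f,g\in\mathcal{K}\) with \(f(x_i)=g(x_i)\) for all \(i\). For any \(x\in K\), choose \(i\) with \(|x-x_i|<\eta\). Then
\[
|f(x)-g(x)|\le |f(x)-f(x_i)|+|f(x_i)-g(x_i)|+|g(x_i)-g(x)|<\tfrac{\varepsilon}{3}+0+\tfrac{\varepsilon}{3}.
\]
Taking the supremum over the compact set \(K\) gives \(\|f-g\|_{C(K)}\le 2\varepsilon/3<\varepsilon\). The slack between \(\varepsilon/3\) and \(\varepsilon\) is exactly what makes the inequality strict after the supremum.

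\textbf{Main obstacle.} The only delicate point is Step 1: the modulus \(\eta\) must be uniform over the whole family \(\mathcal{K}\), not just for each function separately. This is where compactness of \(\mathcal{K}\), not merely boundedness, is essential, and I would handle it via the finite-net argument above. Everything else is routine.
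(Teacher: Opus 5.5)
Your proof is correct and takes the route the paper indicates. The paper states this result without writing out a proof and only attributes it to the Arzel\`a--Ascoli theorem; your argument (compactness of \(\mathcal{K}\) \(\Rightarrow\) uniform equicontinuity, then a finite \(\eta\)-net of \(K\) as the sampling points, then the three-term triangle inequality giving \(\|f-g\|_{C(K)}\le 2\varepsilon/3<\varepsilon\)) is exactly the standard proof behind that citation, and your self-contained finite-net derivation of the uniform modulus \(\eta\) is also sound.
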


This theorem guarantees that all observable input cases in engineering lie within compact subsets of the input function space, and finite discrete samples suffice to characterize the input function to arbitrary accuracy. Thus, using discrete sampled values as input to the PI-SWNO input branch does not compromise its theoretical universal approximation capability.






\bibliographystyle{elsarticle-num}
\bibliography{refs} 

@article{RAISSI2019,
title = {Physics-informed neural networks: A deep learning framework for solving forward and inverse problems involving nonlinear partial differential equations},
journal = {Journal of Computational Physics},
volume = {378},
pages = {686-707},
year = {2019},
issn = {0021-9991},
doi = {https://doi.org/10.1016/j.jcp.2018.10.045},
url = {https://www.sciencedirect.com/science/article/pii/S0021999118307125},
author = {M. Raissi and P. Perdikaris and G.E. Karniadakis}
}

@article{li2023transformer,
title={Transformer for Partial Differential Equations{\textquoteright} Operator Learning},
author={Zijie Li and Kazem Meidani and Amir Barati Farimani},
journal={Transactions on Machine Learning Research},
issn={2835-8856},
year={2023},
url={https://openreview.net/forum?id=EPPqt3uERT}
}

@article{MUCKE2021,
title = {Reduced order modeling for parameterized time-dependent PDEs using spatially and memory aware deep learning},
journal = {Journal of Computational Science},
volume = {53},
pages = {101408},
year = {2021},
issn = {1877-7503},
doi = {https://doi.org/10.1016/j.jocs.2021.101408},
url = {https://www.sciencedirect.com/science/article/pii/S1877750321000934},
author = {Nikolaj T. Mücke and Sander M. Bohté and Cornelis W. Oosterlee}
}

@article{VARGAS2022,
title = {Finite difference method for solving fractional differential equations at irregular meshes},
journal = {Mathematics and Computers in Simulation},
volume = {193},
pages = {204-216},
year = {2022},
issn = {0378-4754},
doi = {https://doi.org/10.1016/j.matcom.2021.10.010},
url = {https://www.sciencedirect.com/science/article/pii/S037847542100361X},
author = {Antonio M. Vargas}
}

@article{KERGRENE2016,
title = {Stable Generalized Finite Element Method and associated iterative schemes; application to interface problems},
journal = {Computer Methods in Applied Mechanics and Engineering},
volume = {305},
pages = {1-36},
year = {2016},
issn = {0045-7825},
doi = {https://doi.org/10.1016/j.cma.2016.02.030},
url = {https://www.sciencedirect.com/science/article/pii/S0045782516300603},
author = {Kenan Kergrene and Ivo Babuška and Uday Banerjee}
}

@article{BUCHMULLER2016,
title = {Finite volume WENO methods for hyperbolic conservation laws on Cartesian grids with adaptive mesh refinement},
journal = {Applied Mathematics and Computation},
volume = {272},
pages = {460-478},
year = {2016},
issn = {0096-3003},
doi = {https://doi.org/10.1016/j.amc.2015.03.078},
url = {https://www.sciencedirect.com/science/article/pii/S0096300315003926},
author = {Pawel Buchmüller and Jürgen Dreher and Christiane Helzel}
}

@article{Arnold2001,
author = {Arnold, Douglas N. and Brezzi, Franco and Cockburn, Bernardo and Marini, L. Donatella},
title = {Unified Analysis of Discontinuous Galerkin Methods for Elliptic Problems},
year = {2001},
issue_date = {2001},
publisher = {Society for Industrial and Applied Mathematics},
address = {USA},
volume = {39},
number = {5},
issn = {0036-1429},
url = {https://doi.org/10.1137/S0036142901384162},
doi = {10.1137/S0036142901384162},
journal = {SIAM J. Numer. Anal.},
month = may,
pages = {1749–1779},
numpages = {31}
}

@article{VILLADSEN1995,
title = {Solution of boundary-value problems by orthogonal collocation},
journal = {Chemical Engineering Science},
volume = {50},
number = {24},
pages = {3981-3996},
year = {1995},
issn = {0009-2509},
doi = {https://doi.org/10.1016/0009-2509(96)81831-8},
url = {https://www.sciencedirect.com/science/article/pii/0009250996818318},
author = {J.V. Villadsen and W.E. Stewart}
}

@article{4Diab2025,
   author = {Diab, Waleed and Al Kobaisi, Mohammed},
   title = {Temporal neural operator for modeling time-dependent physical phenomena},
   journal = {Scientific Reports},
   volume = {15},
   number = {1},
   pages = {32791},
   ISSN = {2045-2322},
   DOI = {10.1038/s41598-025-16922-5},
   url = {https://doi.org/10.1038/s41598-025-16922-5},
   year = {2025},
   type = {Journal Article}
}

@article{5Hu2025,
   author = {Hu, Zheyuan and Daryakenari, Nazanin Ahmadi and Shen, Qianli and Kawaguchi, Kenji and Karniadakis, George Em},
   title = {State-space models are accurate and efficient neural operators for dynamical systems},
   journal = {Neural Networks},
   pages = {108496},
   ISSN = {0893-6080},
   DOI = {https://doi.org/10.1016/j.neunet.2025.108496},
   url = {https://www.sciencedirect.com/science/article/pii/S0893608025013772},
   year = {2025},
   type = {Journal Article}
}

@misc{34cho2025,
	title={Physics-Informed Deep Inverse Operator Networks for Solving PDE Inverse Problems}, 
	author={Sung Woong Cho and Hwijae Son},
	year={2025},
	eprint={2412.03161},
	archivePrefix={arXiv},
	primaryClass={math.NA},
	url={https://arxiv.org/abs/2412.03161}, 
}

@article{41KORIC2023,
	title = {Data-driven and physics-informed deep learning operators for solution of heat conduction equation with parametric heat source},
	journal = {International Journal of Heat and Mass Transfer},
	volume = {203},
	pages = {123809},
	year = {2023},
	issn = {0017-9310},
	doi = {https://doi.org/10.1016/j.ijheatmasstransfer.2022.123809},
	url = {https://www.sciencedirect.com/science/article/pii/S0017931022012777},
	author = {Seid Koric and Diab W. Abueidda}
}

@misc{6Lei2025,
      title={Long-time Integration of Nonlinear Wave Equations with Neural Operators}, 
      author={Guanhang Lei and Zhen Lei and Lei Shi},
      year={2025},
      eprint={2410.15617},
      archivePrefix={arXiv},
      primaryClass={math.NA},
      url={https://arxiv.org/abs/2410.15617}, 
}

@misc{7Mandl2025,
   author = {Mandl, Luis and Nayak, Dibyajyoti and Ricken, Tim and Goswami, Somdatta},
   title = {Physics-Informed Time-Integrated DeepONet: Temporal Tangent Space Operator Learning for High-Accuracy Inference},
   pages = {arXiv:2508.05190},
   month = {August 01, 2025},
   DOI = {10.48550/arXiv.2508.05190},
   url = {https://ui.adsabs.harvard.edu/abs/2025arXiv250805190M},
   year = {2025},
   type = {Electronic Article}
}

@misc{8Michałowska2024,
      title={Neural Operator Learning for Long-Time Integration in Dynamical Systems with Recurrent Neural Networks}, 
      author={Katarzyna Michałowska and Somdatta Goswami and George Em Karniadakis and Signe Riemer-Sørensen},
      year={2024},
      eprint={2303.02243},
      archivePrefix={arXiv},
      primaryClass={cs.LG},
      url={https://arxiv.org/abs/2303.02243}, 
}

@misc{9Nayak2025,
      title={TI-DeepONet: Learnable Time Integration for Stable Long-Term Extrapolation}, 
      author={Dibyajyoti Nayak and Somdatta Goswami},
      year={2025},
      eprint={2505.17341},
      archivePrefix={arXiv},
      primaryClass={cs.LG},
      url={https://arxiv.org/abs/2505.17341}, 
}

@inproceedings{10Buitrago2025,
title={On the Benefits of Memory for Modeling Time-Dependent {PDE}s},
author={Ricardo Buitrago and Tanya Marwah and Albert Gu and Andrej Risteski},
booktitle={The Thirteenth International Conference on Learning Representations},
year={2025},
url={https://openreview.net/forum?id=o9kqa5K3tB}
}

@article{11He2024,
  author = {He, Junyan and Kushwaha, Shashank and Park, Jaewan and Koric, Seid and Abueidda, Diab and Jasiuk, Iwona},
  title = {Sequential Deep Operator Networks (S-DeepONet) for predicting full-field solutions under time-dependent loads},
  journal = {Engineering Applications of Artificial Intelligence},
  volume = {127},
  pages = {107258},
  ISSN = {0952-1976},
  DOI = {https://doi.org/10.1016/j.engappai.2023.107258},
  year = {2024},
  type = {Journal Article}
}

@article{12Hu2025,
  author = {Hu, Zheyuan and Cao, Qianying and Kawaguchi, Kenji and Karniadakis, George Em},
  title = {DeepOMamba: State-space model for spatio-temporal PDE neural operator learning},
  journal = {Journal of Computational Physics},
  volume = {540},
  pages = {114272},
  ISSN = {0021-9991},
  DOI = {https://doi.org/10.1016/j.jcp.2025.114272},
  year = {2025},
  type = {Journal Article}
}

@article{13Jin2022,
author = {Jin, Pengzhan and Meng, Shuai and Lu, Lu},
year = {2022},
month = {11},
pages = {A3490-A3514},
title = {MIONet: Learning Multiple-Input Operators via Tensor Product},
volume = {44},
journal = {SIAM Journal on Scientific Computing},
doi = {10.1137/22M1477751}
}

@misc{14Gu2024,
title={Mamba: Linear-Time Sequence Modeling with Selective State Spaces},
author={Albert Gu and Tri Dao},
year={2024},
url={https://openreview.net/forum?id=AL1fq05o7H}
}

@misc{15Dao2024,
     title={Transformers are SSMs: Generalized Models and Efficient Algorithms Through Structured State Space Duality},
     author={Tri Dao and Albert Gu},
     year={2024},
     eprint={2405.21060},
     archivePrefix={arXiv},
     primaryClass={cs.LG},
     url={https://arxiv.org/abs/2405.21060},
}

@article{16KARUMURI2026,
title = {Physics-informed latent neural operator for real-time predictions of time-dependent parametric PDEs},
journal = {Computer Methods in Applied Mechanics and Engineering},
volume = {450},
pages = {118599},
year = {2026},
issn = {0045-7825},
doi = {https://doi.org/10.1016/j.cma.2025.118599},
url = {https://www.sciencedirect.com/science/article/pii/S0045782525008710},
author = {Sharmila Karumuri and Lori Graham-Brady and Somdatta Goswami}
}

@article{17CHEN2025,
title = {Enforcing the principle of locality for physical simulations with neural operators},
journal = {Journal of Computational Physics},
volume = {538},
pages = {114131},
year = {2025},
issn = {0021-9991},
doi = {https://doi.org/10.1016/j.jcp.2025.114131},
url = {https://www.sciencedirect.com/science/article/pii/S0021999125004140},
author = {Jiangce Chen and Wenzhuo Xu and Zeda Xu and Noelia {Grande Gutiérrez} and Sneha Prabha Narra and Christopher McComb}
}

@article{18WANG2025,
title = {Time-marching neural operator–FE coupling: AI-accelerated physics modeling},
journal = {Computer Methods in Applied Mechanics and Engineering},
volume = {446},
pages = {118319},
year = {2025},
issn = {0045-7825},
doi = {https://doi.org/10.1016/j.cma.2025.118319},
url = {https://www.sciencedirect.com/science/article/pii/S0045782525005912},
author = {Wei Wang and Maryam Hakimzadeh and Haihui Ruan and Somdatta Goswami}
}

@article{19CHO2026,
title = {Learning time-dependent PDE via graph neural networks and deep operator network for robust accuracy on irregular grids},
journal = {Journal of Computational Physics},
volume = {544},
pages = {114430},
year = {2026},
issn = {0021-9991},
doi = {https://doi.org/10.1016/j.jcp.2025.114430},
url = {https://www.sciencedirect.com/science/article/pii/S0021999125007120},
author = {Sung Woong Cho and Jae Yong Lee and Hyung Ju Hwang}
}

@article{20CHEN2025,
title = {Tensor decomposition-based neural operator with dynamic mode decomposition for parameterized time-dependent problems},
journal = {Journal of Computational Physics},
volume = {533},
pages = {113996},
year = {2025},
issn = {0021-9991},
doi = {https://doi.org/10.1016/j.jcp.2025.113996},
url = {https://www.sciencedirect.com/science/article/pii/S0021999125002797},
author = {Yuanhong Chen and Yifan Lin and Xiang Sun and Chunxin Yuan and Zhen Gao}
}

@InProceedings{21wang2025,
author={Wang, Tian
and Wang, Chuang},
editor={Mahmud, Mufti
and Doborjeh, Maryam
and Wong, Kevin
and Leung, Andrew Chi Sing
and Doborjeh, Zohreh
and Tanveer, M.},
title={Latent Neural Operator Pretraining for Solving Time-Dependent PDEs},
booktitle={Neural Information Processing},
year={2025},
publisher={Springer Nature Singapore},
address={Singapore},
pages={163--178}
}

@article{ding2026,
title = {Physics-informed hierarchical neural operator for solving inverse problem of unsteady heat conduction},
journal = {International Journal of Heat and Mass Transfer},
volume = {258},
pages = {128335},
year = {2026},
issn = {0017-9310},
doi = {https://doi.org/10.1016/j.ijheatmasstransfer.2026.128335},
url = {https://www.sciencedirect.com/science/article/pii/S0017931026000116},
author = {Shan Ding and Yongfu Tian and Lang Qin and Hongxiang Ma and Rui Yang}
}

@article{23Lu2021,
	author = {Lu, Lu and Jin, Pengzhan and Pang, Guofei and Zhang, Zhongqiang and Karniadakis, George Em},
	title = {Learning nonlinear operators via DeepONet based on the universal approximation theorem of operators},
	journal = {Nature Machine Intelligence},
	volume = {3},
	number = {3},
	pages = {218-229},
	ISSN = {2522-5839},
	DOI = {10.1038/s42256-021-00302-5},
	url = {https://doi.org/10.1038/s42256-021-00302-5},
	year = {2021},
	type = {Journal Article}
}

@article{ABUEIDDA2025,
title = {DeepOKAN: Deep operator network based on Kolmogorov Arnold networks for mechanics problems},
journal = {Computer Methods in Applied Mechanics and Engineering},
volume = {436},
pages = {117699},
year = {2025},
issn = {0045-7825},
doi = {https://doi.org/10.1016/j.cma.2024.117699},
url = {https://www.sciencedirect.com/science/article/pii/S0045782524009538},
author = {Diab W. Abueidda and Panos Pantidis and Mostafa E. Mobasher}
}

@misc{li2021fourier,
      title={Fourier Neural Operator for Parametric Partial Differential Equations}, 
      author={Zongyi Li and Nikola Kovachki and Kamyar Azizzadenesheli and Burigede Liu and Kaushik Bhattacharya and Andrew Stuart and Anima Anandkumar},
      year={2021},
      eprint={2010.08895},
      archivePrefix={arXiv},
      primaryClass={cs.LG},
      url={https://arxiv.org/abs/2010.08895}, 
}

@misc{li2023physics,
      title={Physics-Informed Neural Operator for Learning Partial Differential Equations}, 
      author={Zongyi Li and Hongkai Zheng and Nikola Kovachki and David Jin and Haoxuan Chen and Burigede Liu and Kamyar Azizzadenesheli and Anima Anandkumar},
      year={2023},
      eprint={2111.03794},
      archivePrefix={arXiv},
      primaryClass={cs.LG},
      url={https://arxiv.org/abs/2111.03794}, 
}

@misc{Bogdan2023convolutional,
      title={Convolutional Neural Operators for robust and accurate learning of PDEs}, 
      author={Bogdan Raonić and Roberto Molinaro and Tim De Ryck and Tobias Rohner and Francesca Bartolucci and Rima Alaifari and Siddhartha Mishra and Emmanuel de Bézenac},
      year={2023},
      eprint={2302.01178},
      archivePrefix={arXiv},
      primaryClass={cs.LG},
      url={https://arxiv.org/abs/2302.01178}, 
}

@misc{li2020graphkernel,
      title={Neural Operator: Graph Kernel Network for Partial Differential Equations}, 
      author={Zongyi Li and Nikola Kovachki and Kamyar Azizzadenesheli and Burigede Liu and Kaushik Bhattacharya and Andrew Stuart and Anima Anandkumar},
      year={2020},
      eprint={2003.03485},
      archivePrefix={arXiv},
      primaryClass={cs.LG},
      url={https://arxiv.org/abs/2003.03485}, 
}

@misc{zhou2021transformer,
      title={Informer: Beyond Efficient Transformer for Long Sequence Time-Series Forecasting}, 
      author={Haoyi Zhou and Shanghang Zhang and Jieqi Peng and Shuai Zhang and Jianxin Li and Hui Xiong and Wancai Zhang},
      year={2021},
      eprint={2012.07436},
      archivePrefix={arXiv},
      primaryClass={cs.LG},
      url={https://arxiv.org/abs/2012.07436}, 
}

@misc{chung2014,
      title={Empirical Evaluation of Gated Recurrent Neural Networks on Sequence Modeling}, 
      author={Junyoung Chung and Caglar Gulcehre and KyungHyun Cho and Yoshua Bengio},
      year={2014},
      eprint={1412.3555},
      archivePrefix={arXiv},
      primaryClass={cs.NE},
      url={https://arxiv.org/abs/1412.3555}, 
}

@article{1995Universal,
  title={Universal approximation to nonlinear operators by neural networks with arbitrary activation functions and its application to dynamical systems},
  author={ Chen, Tianping  and  Chen, Hong },
  journal={IEEE Trans Neural Netw},
  volume={6},
  number={4},
  pages={911-917},
  year={1995},
}

@article{2021wangpideeponet,
author = {Sifan Wang  and Hanwen Wang  and Paris Perdikaris },
title = {Learning the solution operator of parametric partial differential equations with physics-informed DeepONets},
journal = {Science Advances},
volume = {7},
number = {40},
pages = {eabi8605},
year = {2021},
doi = {10.1126/sciadv.abi8605},
URL = {https://www.science.org/doi/abs/10.1126/sciadv.abi8605},
eprint = {https://www.science.org/doi/pdf/10.1126/sciadv.abi8605}}

@article{Cybenko1989Approximation,
	author = {Cybenko, G.},
	title = {Approximation by superpositions of a sigmoidal function},
	journal = {Mathematics of Control, Signals and Systems},
	volume = {2},
	number = {4},
	pages = {303-314},
	ISSN = {1435-568X},
	DOI = {10.1007/BF02551274},
	url = {https://doi.org/10.1007/BF02551274},
	year = {1989},
	type = {Journal Article}
}

@book{Rudin1987Real,
	title={Real and Complex Analysis},
	author={Rudin, Walter},
	edition={3rd},
	year={1987},
	publisher={McGraw-Hill},
	address={New York}
}
\end{document}